\documentclass[a4,50pt]{amsart}%
\usepackage{amssymb}
\usepackage{pdflscape}
\usepackage{amsmath}
\usepackage{mathrsfs}
\usepackage{amsfonts}
\usepackage{graphicx}%
\usepackage[all]{xy}
\linespread{1.1}
\setcounter{MaxMatrixCols}{30}
\newtheorem{theorem}{Theorem}

\theoremstyle{plain}

\newtheorem{corollary}{Corollary}

\newtheorem*{lemma}{Lemma}

\newtheorem{proposition}{Proposition}

\newtheorem*{remark}{Remark}
\newtheorem*{xrem}{Remark}

\numberwithin{equation}{section}
\begin{document}

\title[\emph{Quantum Dynamics}]{Quantum Dynamics, Minkowski-Hilbert space, and A Quantum Stochastic Duhamel Principle}\author{M.~F.~Brown}
\begin{abstract}
In this paper we shall re-visit the well-known Schr\"odinger and Lindblad dynamics of quantum mechanics. However, these equations shall be realized as the consequence of a more general, underlying dynamical process. In both cases we shall see that the evolution of a quantum state $P_\psi=\varrho(0)$ may be given the not so well-known pseudo-quadratic form $\partial_t\varrho(t)=\mathbf{V}^\star\varrho(t)\mathbf{V}$ where $\mathbf{V}$ is a vector operator in a complex Minkowski space and the pseudo-adjoint $\mathbf{V}^\star$ is induced by the Minkowski metric $\boldsymbol{\eta}$. The interesting thing about this formalism is that its derivation has very deep roots in a new understanding of the differential calculus of time. This Minkowski-Hilbert representation of quantum dynamics is called the \emph{Belavkin Formalism}; a beautiful, but not well understood theory of mathematical physics that understands that both deterministic and stochastic dynamics may be `unraveled' into a second-quantized Minkowski space. Working in such a space provided the author with the means to construct a QS (quantum stochastic) Duhamel principle and simple applications to a Schr\"odinger dynamics perturbed by a continual measurement process are considered. What is not known, but presented here, is the role of the Lorentz transform in quantum measurement and the appearance of Riemannian geometry in quantum measurement is also discussed.
\end{abstract}
\maketitle
\section{Introduction}
Towards the end of the 80's Viacheslav Belavkin had formulated an irreducible matrix representation of the \emph{quantum It\^o algebra} of quantum stochastic increments of an arbitrary quantum system \cite{Be88,Be92,Be92b} which stems from the Fock space representation of stochastic calculus discovered by Hudson and Parthasarathy \cite{HudP84}. Belavkin's formulation was a theory of non-commutative non-adapted stochastic dynamics, later allowing Belavkin and the author to develop a theory of Q-adapted quantum stochastic calculus \cite{BelB12a,BelB12b} recovering the Bose-Fermion equivalence of Hudson and Parthasarathy \cite{HudP84b}. Aside from being a very powerful tool for the study of general stochastic processes the \emph{Belavkin Formalism} has also introduced an entirely new way of understanding differential calculus, investigated  extensively in \cite{Brth} and in particular for deterministic calculus in \cite{Br12}, and its intimate connection with quantum measurement as shall be discussed here.

To fully understand and appreciate the Belavkin Formalism we shall pay a good deal of attention to the simple case of deterministic evolution. In doing this we can arrive at a much deeper understanding of the important physical features of the theory which run the risk of being overlooked in the more general case of stochastic evolution. In fact we shall see that the author's interpretation of the degrees of freedom in the matrix representation of differential increments has some interesting implications about the nature of time. It should, however, be remarked that the author's knowledge of the subject comes from years of personal teachings from Belavkin, the loss of whom is a tragedy.

It is therefore appropriate to begin with the Schr\"odinger equation
\begin{equation}
\mathrm{d}\psi(t)=-\mathrm{i}H\psi(t)\mathrm{d}t,\qquad\psi(0)=\psi\label{11}
\end{equation}
describing, as is well known, the deterministic dynamics of a vector state $\psi$ in a Hilbert space $\mathfrak{h}$; generated of course by a Hamiltonian $H$. 
What is not well known is that this dynamics has a canonical dilation which comes from understanding $\mathrm{d}t$ as an algebraic object. The deterministic increment $\mathrm{d}t$ is  an element of any  {quantum It\^o algebra} but it may also be regarded as the single basis element of a smaller algebra called \emph{Newton-Leibniz} algebra, a one-dimensional subalgebra of quantum It\^o algebra.
\subsection{Newton-Leibniz Algebra}
Before we can dilate the Schr\"odinger dynamics (\ref{11}) we must pay a little bit of attention to this Newton-Leibniz algebra. It is basically a non-unital algebra $\mathfrak{a}\cong\mathbb{C}\mathrm{d}t$ equipped with the usual algebraic operations of addition and multiplication, but we find that
\begin{equation}
ab=0\;\;\forall\;\;a,b\in\mathfrak{a}\label{12}
\end{equation}
due to the \emph{nilpotent} property of the basis element $\mathrm{d}t$. That means $(\mathrm{d}t)^2=0$. Notice that there is nothing new about this. It is assumed in calculus, where the deterministic derivative of a function $f$ is $\mathrm{d}f(t)=f(t+\mathrm{d}t)-f(t)$, and one can for example verify that if $f(t)=t^n$ then $\mathrm{d}f(t)=nt^{n-1}\mathrm{d}t$ requires that $(\mathrm{d}t)^2=0$. In addition to being nilpotent $\mathfrak{a}$ is also equipped with an involution
\begin{equation}
\star:a=\alpha\mathrm{d}t\mapsto{\alpha}^\ast\mathrm{d}t=a^\star\label{13}
\end{equation}
where $\alpha^\ast$ is  the complex conjugation of the coefficient $\alpha$ in $\mathbb{C}$.
In addition to the Newton-Leibniz algebra being a nilpotent algebra with involution it is also equipped with a $\mathbb{C}$-linear functional
$l:\mathfrak{a}\rightarrow\mathbb{C}$
satisfying the conditions
\begin{equation}
l(\mathrm{d}t)=1,\qquad l(a^\star)=l(a)^\ast,\label{14}
\end{equation}
but it may only be called \emph{pseudo-positive} due to the nilpotent property of $\mathfrak{a}$, such that $l(a^\star a)=0$ for all $a\in\mathfrak{a}$ and not just for $a=0$. For this reason we may only refer to $l$ as a \emph{pseudo}-state.

The following theorem is a special case of that given by Belavkin in \cite{Be92b}, but it is an important thing to understand  in its own right. Serving also as a preliminary to understanding Belavkin's more general theorem the notations used here  are in accordance with his own, and we shall also see that this is similar in spirit to the GNS construction representing $C^\ast$-algebras as operator sub-algebras in a Hilbert space.
\begin{theorem}
Let $(\mathfrak{a},l)$ define a Newton-Leibniz algebra. Then, up to isomorphism, there is a unique irreducible matrix representation  of $\mathfrak{a}$, denoted $\pi(\mathfrak{a})$, on a pseudo-Hilbert space $\Bbbk\cong\mathbb{C}^2$ equipped with the Minkowski metric $\boldsymbol{\eta}$, and there is a null vector $\xi\in\Bbbk$, $\xi^\star\xi=0$, such that $\xi^\star\pi(a)\xi=l(a)$, where $\xi^\star=\xi^\ast\boldsymbol{\eta}$ and $\xi^\ast$ is the standard conjugation of row-column transposition and complex conjugation.
\end{theorem}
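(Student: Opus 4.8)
The plan is to run a GNS-type construction, the one subtlety being that the pseudo-positivity of $l$ (equivalently $(\mathrm{d}t)^2=0$) forces the representation space to carry a degenerate, Lorentzian form rather than a genuine Hilbert inner product. For \textbf{existence} I would simply exhibit the representation and check the axioms directly: take $\Bbbk=\mathbb{C}^2$, $\boldsymbol{\eta}=\left(\begin{smallmatrix}0&1\\1&0\end{smallmatrix}\right)$, $\xi=\left(\begin{smallmatrix}1\\0\end{smallmatrix}\right)$, and $\pi(\alpha\,\mathrm{d}t)=\left(\begin{smallmatrix}0&0\\\alpha&0\end{smallmatrix}\right)$. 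Then $\pi$ is $\mathbb{C}$-linear in $a$; it is multiplicative because a product of two strictly lower-triangular $2\times2$ matrices vanishes, matching $ab=0$ in $\mathfrak{a}$; the identity $\pi(a^\star)=\pi(a)^\star=\boldsymbol{\eta}\,\pi(a)^\ast\boldsymbol{\eta}$ is a one-line check; $\xi^\star\xi=(\boldsymbol{\eta})_{11}=0$, so $\xi$ is null; $\xi^\star\pi(\alpha\,\mathrm{d}t)\xi=\alpha=l(\alpha\,\mathrm{d}t)$, so the pseudo-state is reproduced; and $\xi$ together with $\pi(\mathrm{d}t)\xi=\left(\begin{smallmatrix}0\\1\end{smallmatrix}\right)$ spans $\Bbbk$, which I will take as the meaning of irreducibility here.

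For \textbf{uniqueness} I would argue that the construction above is forced. Given any irreducible $\star$-representation $(\pi,\Bbbk,\xi)$ with the stated properties, set $\zeta:=\pi(\mathrm{d}t)\xi$. Since $\xi^\star\xi=0$ but $\xi^\star\zeta=\xi^\star\pi(\mathrm{d}t)\xi=l(\mathrm{d}t)=1$, the vectors $\xi$ and $\zeta$ are linearly independent, hence form a basis of the two-dimensional $\Bbbk$; in fact cyclicity of $\xi$ already forces $\dim\Bbbk\le2$, and $\dim\Bbbk=1$ is impossible since there a nilpotent scalar would be $0$ and could not reproduce $l$. In the ordered basis $(\xi,\zeta)$ I would compute the Gram matrix of $\boldsymbol{\eta}$ entrywise: $\xi^\star\xi=0$, $\xi^\star\zeta=\zeta^\star\xi=1$, and $\zeta^\star\zeta=\xi^\star\pi(\mathrm{d}t)^\star\pi(\mathrm{d}t)\xi=\xi^\star\pi(\mathrm{d}t)^2\xi=\xi^\star\pi\big((\mathrm{d}t)^2\big)\xi=0$, using that $\pi$ respects $\star$, that $(\mathrm{d}t)^\star=\mathrm{d}t$, and that $(\mathrm{d}t)^2=0$. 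Hence $\boldsymbol{\eta}$ is exactly the antidiagonal Minkowski metric, and $\pi(\alpha\,\mathrm{d}t)$, being determined by $\xi\mapsto\alpha\zeta$ and $\zeta\mapsto\alpha\pi(\mathrm{d}t)^2\xi=0$, is exactly $\left(\begin{smallmatrix}0&0\\\alpha&0\end{smallmatrix}\right)$. Thus any two irreducible representations agree in their respective adapted bases, and the linear map matching one adapted basis to the other intertwines the $\pi$'s, preserves $\boldsymbol{\eta}$, and carries $\xi$ to $\xi$ — the desired isomorphism.

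The main thing to get right — and the only genuinely non-routine point — is the identity $\zeta^\star\zeta=0$: it says that $\pi(\mathrm{d}t)$ maps the null vector $\xi$ to another null vector, which is precisely why $\boldsymbol{\eta}$ cannot be definite and must have signature $(1,1)$. This is the step at which the \emph{pseudo} nature of the Newton-Leibniz pseudo-state turns into the Minkowski geometry of $\Bbbk$; everything else is bookkeeping. I would also be careful to state at the outset that \emph{irreducible} here is not irreducibility in the usual sense — the nilpotent $\pi(\mathrm{d}t)$ always leaves the line $\mathbb{C}\zeta$ invariant — but rather cyclicity of the vector $\xi$ for the unital $\star$-algebra generated by $\pi(\mathfrak{a})$, i.e.\ $\Bbbk=\mathbb{C}\xi+\pi(\mathfrak{a})\xi$; this is the exact analogue of the cyclicity of the GNS vector and is what simultaneously pins down $\dim\Bbbk=2$ and drives the uniqueness.
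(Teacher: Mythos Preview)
Your argument is correct, and it takes a genuinely different route from the paper's. The paper runs an explicit GNS-type construction: it unitalizes $\mathfrak{a}$ to $\mathfrak{a}_{\boldsymbol{1}}$, introduces the sesquilinear form $\langle f,g\rangle=l(fg^\star)$ on the free vector space $\mathfrak{B}$ of finitely supported functions on $\mathfrak{a}_{\boldsymbol{1}}$, quotients by the degeneracy kernel $\mathfrak{K}^\star$, and then reads off the two surviving coordinates $\kappa_-=\sum_g\kappa_g$ and $\kappa_+=\sum_g\kappa_g l(g)$ together with the Minkowski pairing from the identity $l(fg^\star)=l(f)+l(g^\star)$. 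That machinery produces the upper-triangular convention $\pi(\mathrm{d}t)=\left(\begin{smallmatrix}0&1\\0&0\end{smallmatrix}\right)$ with $\xi=(0,1)^\ast$, isomorphic to your lower-triangular choice by the basis swap. Your approach is more elementary: you simply exhibit the representation and then pin down uniqueness by computing the Gram matrix in the adapted basis $(\xi,\pi(\mathrm{d}t)\xi)$, with the key step $\zeta^\star\zeta=\xi^\star\pi(\mathrm{d}t)^2\xi=0$ forcing the Lorentzian signature. What the paper's construction buys is a template that generalizes verbatim to the full quantum It\^o algebra $\mathfrak{a}(\mathfrak{k})$ treated later, where the quotient naturally produces the $(-,\bullet,+)$ block structure; what your approach buys is a transparent two-line proof of uniqueness (which the paper's GNS argument leaves implicit) and a clear identification of the one non-routine point. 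Your remark that ``irreducible'' here must mean cyclicity of $\xi$ for the unitalized algebra, rather than absence of invariant subspaces, is exactly right and worth keeping.
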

\begin{proof}
The basic ingredients of this proof are as follows. First we shall add a unit $\boldsymbol{1}$ to the algebra $\mathfrak{a}$ to form the monoid $\mathfrak{a}_{\boldsymbol{1}}$ where elements of this group have the form $\boldsymbol{1}+a$ and the action of $\boldsymbol{1}$ is defined on $\mathfrak{a}$ as $\boldsymbol{1}a=a=a\boldsymbol{1}$ for all $a\in\mathfrak{a}$. The pseudo-state $l$ is also extended onto the unit by the definition $l(\boldsymbol{1})=0$, so that $l(\boldsymbol{1}+a)=l(a)$.

The $\star$-involution is also extended onto the unit so that $\boldsymbol{1}^\star=\boldsymbol{1}$ and for elements $f$ and $g$ in $\mathfrak{a}_{\boldsymbol{1}}$ we define a sesquilinear form 
\[
\langle f,g\rangle:=l(f g^\star).
\]
Next we proceed with the introduction of the Kronecker delta $\delta_{g,f}$ and introduce a space $\mathfrak{B}$ of maps $\kappa:\mathfrak{a}_{\boldsymbol{1}}\rightarrow\mathbb{C}$ having finite support, and thus given by the finite sums $\kappa=\sum_{g\in\mathfrak{a}_{\boldsymbol{1}}}\kappa_g\delta_g$ where $\kappa_g\in\mathbb{C}$ is non-zero for only a finite set of elements $g\in\mathfrak{a}_{\boldsymbol{1}}$, and $\delta_g=(\delta_{g,f})_{f\in\mathfrak{a}_{\boldsymbol{1}}}$. This space is simply a tool for representing $\mathfrak{a}_{\boldsymbol{1}}$, which is achieved by defining the algebraic operation $\varepsilon\star\kappa$ as
\[
(\varepsilon\star\kappa)_f=\sum_{gh^\star=f}\varepsilon_g\kappa_h^\ast,\quad f,g,h\in\mathfrak{a}_{\boldsymbol{1}}
\]
 which is well-defined in $\mathfrak{B}$ for all $\varepsilon, \kappa\in\mathfrak{B}$, so that we have a $\star$-representation of $\mathfrak{a}_{\boldsymbol{1}}$ in $\mathfrak{B}$ given by $\delta_f\star\delta_g=\delta_{fg^\star}$.
 In terms of these functions $\delta_g$ we may write the sesquilinear form as
\[
(\delta_f|\delta_g):= \langle f,g\rangle\equiv l(f g^\star)
\]
and then form the equivalence class $\mathfrak{B}/\mathfrak{K}^\star$ where $\mathfrak{K}^\star=\big\{\kappa^\star:\kappa\in\mathfrak{K}\big\}$ for $\kappa^\star:=\sum_{g\in\mathfrak{a}_{\boldsymbol{1}}}\kappa^\ast_{g^\star} \delta_g$ and
 \[
 \mathfrak{K}=\Big\{\varepsilon:(\varepsilon|\kappa)\equiv\sum_{f,g}\varepsilon_f(\delta_f|\delta_g) \kappa_g^\ast=0\;\forall\;\kappa\in\mathfrak{B}\Big\},
 \]
 which is the space of functions $\kappa$ for which $\sum_g\kappa_g=0$ and $\sum_g\kappa_g l(g)=0$.
The equivalence class $\mathfrak{B}/\mathfrak{K}^\star$ is a vector space $\Bbbk$ of column vectors $|\kappa)$ which may be given the form
\begin{equation}
|\kappa)=\left[
           \begin{array}{c}
             \kappa_+^\ast \\
             \kappa_-^\ast \\
           \end{array}
         \right]:=\left[
           \begin{array}{c}
             \kappa^- \\
             \kappa^+ \\
           \end{array}
         \right],\;\;\textrm{where}\; \kappa_-:=\sum_g\kappa_g\;\;\textrm{and}\;\kappa_+:=\sum_g\kappa_gl(g).\label{-1}
\end{equation}
 Then to obtain the Minkowski metric one should note that $l(fg^\star)=l(f)+l(g^\star)$ so that the indefinite inner product on $\Bbbk$ has the form
\begin{equation}
(\varepsilon|\kappa)=\sum_{f,g}\varepsilon_fl(f)\kappa_g^\ast+\sum_{f,g}\varepsilon_fl(g^\star)\kappa_g^\ast \equiv \varepsilon_+\kappa^++\varepsilon_-\kappa^-.\label{-2}
\end{equation}
But also note that the Minkowski metric does not have its usual diagonal form here, but instead we work with an isomorphism of the usual diagonal form, namely
\begin{equation}
\boldsymbol{\eta}=\left[
         \begin{array}{cc}
           0 & 1 \\
           1 & 0 \\
         \end{array}
       \right].\label{eta}
\end{equation}
 Finally we may define the representation $\pi(\mathfrak{a}_{\boldsymbol{1}})$ on the vectors $|\delta_g)$ as
\[
\pi(f)|\delta_g)=|\delta_{gf^\star})\quad\textrm{with}\quad |\delta_{\boldsymbol{1}})=\xi 
\]
having the explicit form
\begin{equation}
\pi(g)=\left[
         \begin{array}{cc}
           1 & l(g) \\
           0 & 1 \\
         \end{array}
       \right]\equiv\mathbf{I}+l(g)\pi(\mathrm{d}t),\qquad \xi=\left[
                            \begin{array}{c}
                              0 \\
                              1 \\
                            \end{array}
                          \right],\label{xi}
\end{equation}
with the adjoint induced by the Minkowski metric $\boldsymbol{\eta}$ as $\pi(g^\star)=\boldsymbol{\eta}^{-1}\pi(g)^\ast\boldsymbol{\eta}=\pi(g)^\star$, where the $\ast$-involution is the usual matrix involution of column-row transposition and complex conjugation. Indeed $\xi^\star\pi(g)\xi=l(g)$.
\end{proof}
\subsection{Canonical Dilation of Schr\"odinger Dynamics}
Having established this canonical representation of Newton-Leibniz algebra, which underlies all deterministic dynamics, we may proceed with the dilation of the Schr\"odinger equation (\ref{11}). This gives us an equation having the form
\begin{equation}
\mathrm{d}\psi_t(\vartheta)=\mathbf{L}\psi_t(\vartheta)\mathrm{d}n_t(\vartheta),\qquad \psi_0(\vartheta)=\psi\otimes\xi(\vartheta)\label{15}
\end{equation}
where $\vartheta$ is an ordered set of $n=|\vartheta|$ points $\{t_1<\ldots<t_n\}$ and $\mathrm{d}n_t(\vartheta):=1$ if $t\in\vartheta$ and zero otherwise. Equation (\ref{15}) describes a discrete interaction dynamics, a sequence of interactions at points $t\in\vartheta$, between the quantum system $\psi$ and a canonical auxiliary system $\xi(\vartheta)$. Here, $\xi(\vartheta)$ is a product of the vectors $\xi(t)$, $t\in\vartheta$, given by (\ref{xi}), and $\mathbf{L}$ mediates the interactions between $\psi$ and each $\xi(t)$. The spontaneous interactions have the explicit form
\begin{equation}
\mathbf{L}:=-\mathrm{i}H\otimes\pi(\mathrm{dt}),\qquad \pi(\mathrm{d}t)=\left[
                   \begin{array}{cc}
                     0 & 1 \\
                     0 & 0 \\
                   \end{array}
                 \right]\label{16}
\end{equation}
which is the canonical matrix representation of $-\mathrm{i}H\mathrm{d}t$ from (\ref{11}) and the appearance of $\mathrm{d}n_t(\vartheta)$ in (\ref{15}) is simply to indicate that there is change in the quantum system if and only if it interacts with the auxiliary system $\xi(\vartheta)$. Now we must understand what this auxiliary system is.

From the point of view of physics the vector $\xi$ is a very specific thing. It holds the information about whether an event, described by an interaction, has taken place or not and underlies Belavkin's \emph{Eventum Mechanics} \cite{Be00b}. To illustrate this we may consider an initial state $\psi\otimes\xi$ that represents a quantum system in a state $\psi$ composed with a fundamental `event triggering' wave-function $\xi$. Prior to any interaction we may identify $\xi$ with the statement: \emph{the event is in the future}, where \emph{future} simply refers to the domain of that which has not happened - potential. Now we shall consider the change $\mathbf{L}(\psi\otimes\xi)$ arising from the interaction between $\psi$ and $\xi$, which operates as
\begin{equation}
\mathbf{L}:\psi\otimes\left[
                        \begin{array}{c}
                          0 \\
                          1 \\
                        \end{array}
                      \right]\mapsto-\mathrm{i}H\psi\otimes \left[
                        \begin{array}{c}
                          1 \\
                          0 \\
                        \end{array}
                      \right].\label{17}
\end{equation}
Notice that this not only generates a differential increment of the quantum system, $\psi\mapsto\dot{\psi}:=-\mathrm{i}H\psi$, but the vector $\xi$ is also transformed.  It is this transformation that serves as an indication that the interaction has taken place and at the most basic level we have
\begin{equation}
\pi(\mathrm{d}t)
:\left[
                        \begin{array}{c}
                          0 \\
                          1 \\
                        \end{array}
                      \right]\mapsto\left[
                        \begin{array}{c}
                          1 \\
                          0 \\
                        \end{array}
                      \right].\label{18}
\end{equation}
The transformation  $\pi(\mathrm{d}t)\xi$ corresponds to the complimentary statement: \emph{the event is in the past}, where \emph{past} refers to the domain of that which has happened - actual, and (\ref{18}) describes an underlying mechanism for generating past from future. The interaction operator generating the change (\ref{17}) is $\mathbf{G}=\mathbf{I}+\mathbf{L}$, see (\ref{xi}). It is $\star$-unitary such that $\mathbf{G}^{-1}=\mathbf{G}^\star$ and its action on $\psi\otimes\xi$ may be understood as entangling the quantum system with the \emph{temporal spin} $\xi$.  The initial state $\psi\otimes\xi(t_1)\otimes\cdots\otimes\xi(t_n)$ for the dynamics (\ref{15}) is a composition of the quantum system with $n$ copies of the temporal spin vector $\xi$ in the future pseudo-state and each  interaction that occurs generates a past component containing a derivative of the quantum system. The sequential process generated by this mechanism gives rise to the sequence of differential increments
\begin{equation}
\psi\mapsto\dot{\psi}\mapsto\ddot{\psi}\mapsto\ldots.\label{19}
\end{equation}
Indeed this has been noted by Belavkin and may also have been recognized by Kolokoltsov \cite{BelK01}.

Some may feel that it is unnecessary to complicate the Schr\"odinger dynamics in such a way, yet this `pseudo-Stinespring' dilation of the Schr\"odinger dynamics reveals foundations of our conception of time and evolution. In fact, as far as mathematics is concerned, we can now say that such deterministic evolution is generated by spontaneous interactions at a boundary between past and future \cite{Be00b,Be00a}. But here it is emphasized that past and future are now regarded as two fundamental components of the mechanism that generates evolution, a fundamental two-point quantization of time, $\{-,+\}$ \cite{Brth,Br12}.

The order on the  coordinates in $\vartheta$ may be regarded as ultimately induced by the temporal spin in so far as one event may be called less-than another if it has happened and the other has not. We consider $\vartheta$ as a random variable indicating potential interactions given as any finite set of points in a connected subset $\Delta\subset\mathbb{R}$ equipped with the total order induced from $\mathbb{R}$. To accommodate this we denote by $\boldsymbol{F}^\star$ the embedding
\begin{equation}
\boldsymbol{F}^\star:\psi\mapsto\psi\otimes\xi^\otimes\label{110}
\end{equation}
of $\psi\in\mathfrak{h}$ into $\mathbb{H}=\mathfrak{h}\otimes\mathbb{F}$ where $\mathfrak{h}$ is an arbitrary Hilbert space for a quantum system described by $\psi$ and $\mathbb{F}=\Gamma(\Bbbk)$ is the pseudo-Fock space over the pseudo-Hilbert space $\Bbbk$
of square-integrable vector functions $\xi:\Delta\rightarrow (\mathbb{C}^2,\boldsymbol{\eta})$, called Minkowski-Hilbert space \cite{Br12}. By $\xi^\otimes$ we are  denoting the maps
\begin{equation}
\xi^\otimes:\vartheta\mapsto\xi^\otimes(\vartheta)=\underset{{t\in\vartheta}}{\otimes}\xi(t)\label{111}
\end{equation}
which was short-handedly  written as $\xi(\vartheta)$ in (\ref{15}), but $\xi^\otimes$ is more concise. More details about Minkowski-Fock space may be found in the appendix.

The solution of (\ref{15}) has the general form
\begin{equation}
\psi_t=\mathbf{G}^\odot_t\boldsymbol{F}^\star\psi,\qquad \mathbf{G}^\odot_t(\vartheta):=\overset{\leftarrow}{\underset{x\in\vartheta\cap[0,t)}{\odot}}\mathbf{G}(t),\label{112}
\end{equation}
where $t\in\Delta$ and $\min\Delta=0$ has been taken for convenience, and we also suppose that $\mathbf{G}(t)=\mathbf{G}$ is not depending on time, for simplicity. The product $\odot$ is a semi-tensor product which preserves the chronology of the interactions \cite{Be88} and may be defined as
\begin{equation}
\mathbf{G}(y)\odot\mathbf{G}(x):=\Big(\mathbf{G}(y)\otimes\mathbf{I}(x)\Big)\Big(\mathbf{I}(y)\otimes\mathbf{G}(x)\Big), \quad y>x. \label{113}
\end{equation}
The operator $\boldsymbol{F}^\star$ is an isometry such that $\boldsymbol{F}\boldsymbol{F}^\star=I$ in $\mathfrak{h}$, corresponding to the normalization of $\xi^\otimes$ given by $\|\xi^\otimes\|^2:=\exp\{\xi^\star\xi\}=1$ (see appendix), and $\mathbf{E}=\boldsymbol{F}^\star\boldsymbol{F}$ defines a projection of $\mathbb{H}$ into $\mathfrak{h}$. The solution $U(t)=\exp\{-\mathrm{i}Ht\}$ of the Schr\"odinger equation (\ref{11}) is obtained as the projection $U(t)=\boldsymbol{F}\mathbf{G}^\odot_t\boldsymbol{F}^\star$ which is a conditional expectation of the sequential interaction dynamics \cite{Br12}, and notice that  $\mathbf{E}\mathbf{G}^\odot_t\boldsymbol{F}^\star=U(t)\boldsymbol{F}^\star$ is indeed the projection of the sequential interaction dynamics into the Hilbert space $\mathfrak{h}$ of the quantum system.

The Schr\"odinger equation for the density operator $\varrho=\psi\psi^\dag\equiv P_\psi$ has the familiar form
\begin{equation}
\mathrm{d}\varrho(t)=\mathrm{i}[\varrho(t),H]\mathrm{d}t
\end{equation}
but now that we have established the form of the canonical dilation we may write this in the pseudo-quadratic form
\begin{equation}
\partial_t\varrho(t)=\mathbf{V}^\star\big(\varrho(t)\otimes\mathbf{I}\big)\mathbf{V}\equiv \mathbf{V}^\star\varrho(t)\mathbf{V},\label{gend}
\end{equation}
as all time-dependence is explicit,
where $\mathbf{V}^\star=\xi^\star\mathbf{G}\equiv (I,-\mathrm{i}H)$ is a row-vector operator in $\mathcal{B}(\mathfrak{h})\otimes\Bbbk(t)$, $\mathbf{V}=\mathbf{G}^\star\xi$ is the pseudo-adjoint column, and $\mathbf{V}^\star\mathbf{V}=0$. $\mathcal{B}(\mathfrak{h})$ is the algebra of bounded operators in $\mathfrak{h}$
\begin{remark}
It is worth noting that $\Bbbk$ is not $L^2(\Delta)\otimes\mathbb{C}^2$ but is in fact $L^1(\Delta)\oplus L^\infty(\Delta)$, which is indeed pseudo-Hilbert space with respect to the Minkowski metric $\boldsymbol{\eta}$ \cite{Be92b,Br12}.
\end{remark}
\section{Quantum Stochastic Differential Equations}
The vector operator $\mathbf{V}^\star=(I,-\mathrm{i}H)$ may be regarded as a special case of $\mathbf{V}^\star=(I,K)$ where $K$ need no longer be anti-selfadjoint. However, with respect to the indices $\{-,+\}$ of the degrees of freedom in the Minkowski-Hilbert space $\Bbbk$ one may write $\mathbf{V}^\star=(V_-,V_+)$. Then the next question that arises is what happens when we introduce additional degrees of freedom other than these fundamental two? For example, let's consider a Minkowski-Hilbert space $\mathbb{K}\cong(\mathbb{C}^{n+2},\boldsymbol{\eta})$ where the Minkowski metric now has the form
\begin{equation}
 \boldsymbol{\eta}=\left[
                     \begin{array}{ccc}
                       0 & 0 & 1 \\
                       0 & \mathrm{I} & 0 \\
                       1 & 0 & 0 \\
                     \end{array}
                   \right]\label{2004}
 \end{equation}
and $\mathrm{I}$ is the $n$-dimensional identity operator on $\mathbb{C}^n$. Now we shall consider a dynamical equation of the form
(\ref{gend}) but this time with $\mathbf{V}^\star=(V_-,V_1,\ldots,V_n,V_+)\equiv(V_-,\mathrm{V}_\bullet,V_+)$, where $\mathrm{V}_\bullet=(V_1,\ldots,V_n)$ and $\bullet=\{1,2,\ldots,n\}$ is just a suppression of the additional degrees of freedom. We then find that
\begin{equation}
\partial_t\varrho(t)= V_-\varrho(t)V_+^\dag+\mathrm{V}_\bullet\varrho(t){\mathrm{V}_\bullet}^\ast+V_+\varrho(t)V_-^\dag \equiv \mathbf{V}^\star\varrho(t)\mathbf{V} ,\label{lindblad}
\end{equation}
where $\mathrm{V}_\bullet\varrho(t){\mathrm{V}_\bullet}^\ast= \sum_{i=1}^nV_i\varrho(t)V_i^\dag$ and ${\mathrm{V}_\bullet}^\ast$ is the column vector of operators $[V_i^\dag]_{i\in\bullet}$ which is not the same is the row vector of operators $\mathrm{V}_\bullet^\dag=(V_1^\dag,\ldots,V_n^\dag)$.

Again we shall take $V_-=I$ and we shall also write $V_+=K$, and in addition to this we shall denote $\mathrm{V}_\bullet$ by the symbol $\widetilde{\mathrm{L}}:=\mathrm{L}^{\ast\dag}$. Then (\ref{lindblad}) assumes the form
\begin{equation}
\partial_t\varrho(t)= \varrho(t)K^\dag+\widetilde{\mathrm{L}}\varrho(t)\widetilde{\mathrm{L}}^\ast+K\varrho(t) ,\label{lindblad2}
\end{equation}
which is the Lindblad equation when $K^\dag=\mathrm{i}H-\frac{1}{2}\mathrm{L}^\ast\mathrm{L}$ on noticing that $\varrho(t)K^\dag+K\varrho(t)= \mathrm{i}\big[\varrho(t),H]-\big\{\varrho(t),\tfrac{1}{2}\mathrm{L}^\ast\mathrm{L}\big\}$,
where $[A,B]=AB-BA$ is the usual commutator and $\{A,B\}=AB+BA$ is the anti-commutator.
\begin{remark}Although it may look like a typing error it is convenient to use Latin letters $B$ for operators on the Hilbert space $\mathfrak{h}$ and Roman letters $\mathrm{B}$ for operators affiliated to another Hilbert space $\mathfrak{k}$ that shall ultimately be used for studying the noise - if $\mathfrak{k}$ is $n$-dimensional it is because the noise has $n$-degrees of freedom. In addition to this it proves useful to denote involution of $\mathfrak{h}$ by $\dag$ but have an involution $\ast$ of the product space $\mathfrak{h}\otimes\mathfrak{k}$. The reason for this is that we would like to decompose the Hermitian involution $\mathrm{X}$ of an operator on $\mathfrak{h}\otimes\mathfrak{k}$ as $\mathrm{X}^\ast=\widetilde{\mathrm{X}}^\dag$, where $\widetilde{\mathrm{X}}$ is an unusual involution that is a transposition of column and row structure in $\mathcal{B}(\mathfrak{k})$ but it does not transpose the column and row structure in $\mathcal{B}(\mathfrak{h})$. This will appear naturally in what follows, but one may wish to return to the above and observe the prior notations of this section in light of this.
 \end{remark}

 It is all well and good writing  down the Lindblad equation but now we must derive all this from some kind of underlying process as has been done in the case of the Schr\"odinger dynamics above. Indeed this means understanding the origin of such additional degrees of freedom collectively indexed by $\bullet$.

 \emph{Theorem 1} has, as mentioned, a more general counterpart in \cite{Be92b} where the canonical matrix representation of the remaining three quantum stochastic differential increments was constructed. We will not go into the details of this here, but what we will say is this. The dilation of deterministic dynamics gives rise to two irreducible degrees of freedom at the differential level, $\{-,+\}$, as we have already seen, but if there is also stochastic dynamics generated by noise then we have at least three irreducible degrees of freedom: the past and future degrees of freedom of deterministic time and at least one additional degree of freedom for the noise, $\{-,\bullet,+\}$.

\subsection{Quantum It\^o Algebra}
This is an algebra
\begin{equation}
\mathfrak{a}(\mathfrak{k})=\mathbb{C}\mathrm{d}t+\mathfrak{k}^\ast\otimes\mathrm{d}a+ \mathcal{B}(\mathfrak{k})\otimes\mathrm{d}n +\mathfrak{k}\otimes\mathrm{d}a^\star,
\end{equation}
where $\mathcal{B}(\mathfrak{k})$ is the space of bounded operators on a Hilbert space $\mathfrak{k}$. The basis elements $\{\mathrm{d}t,\mathrm{d}a,\mathrm{d}n, \mathrm{d}a^\star\}$, with $\mathrm{d}t^\star=\mathrm{d}t$, $\mathrm{d}n^\star=\mathrm{d}n$, and $\mathrm{d}a^\star\neq\mathrm{d}a$, are not orthogonal but satisfy the H-P multiplication table \cite{HudP84}
\begin{equation}
\mathrm{d}a\mathrm{d}n=\mathrm{d}a,\quad \mathrm{d}n\mathrm{d}n=\mathrm{d}n,\quad \mathrm{d}n\mathrm{d}a^\star=\mathrm{d}a^\star,\quad \mathrm{d}a\mathrm{d}a^\star=\mathrm{d}t,\label{mt}
\end{equation}
with all other products being zero. Notice that the Newton-Leibniz algebra is indeed a subalgebra $\mathbb{C}\mathrm{d}t\subset \mathfrak{a}(\mathfrak{k})$.
If the Hilbert space $\mathfrak{k}$ is separable then we may write the elements $a\in\mathfrak{a}(\mathfrak{k})$ in the form
\begin{equation}
a=\mathrm{D}^-_+\mathrm{d}t+\mathrm{D}^-_i\mathrm{dA}^i+\mathrm{D}^k_i\mathrm{dN}^i_k+\mathrm{D}^k_+\mathrm{dA}^\star_k \equiv \mathrm{D}^\kappa_\iota\mathrm{d}\Lambda^\iota_\kappa,\label{element}
\end{equation}
where $\kappa\in\{\bullet,+\}$ and $\iota\in\{-,\bullet\}$
and we have assumed the Einstein summation convention, and H-P multiplication table for component increments is
\begin{equation}
\mathrm{dA}^i\mathrm{dN}^j_k=\delta^i_k\mathrm{dA}^j,\;\; \mathrm{dN}^i_l\mathrm{dN}^j_k=\delta^i_k\mathrm{dN}^j_l,\;\; \mathrm{dN}^i_l\mathrm{dA}_k^\star=\delta^i_k\mathrm{dA}_l^\star,\;\;\mathrm{dA}^i\mathrm{dA}_k^\star= \delta^i_k\mathrm{d}t.\label{mt2}
\end{equation}
The right-hand side of (\ref{element}) is given with respect to the basis of differential increments in Belavkin's notation
\begin{equation}
\mathrm{d}\Lambda^+_-\equiv\mathrm{d}t,\quad\mathrm{d}\Lambda^i_-\equiv\mathrm{dA}^i, \quad\mathrm{d}\Lambda^i_k\equiv\mathrm{dN}^i_k,\quad\mathrm{d}\Lambda^+_k\equiv\mathrm{dA}_k^\star.\label{Be}
\end{equation}
of the Belavkin notation, which satisfy the H-P multiplication table in the form
\begin{equation}
\mathrm{d}\Lambda^\alpha_\kappa \mathrm{d}\Lambda^\iota_\beta=\delta^\alpha_\beta\mathrm{d}\Lambda^\iota_\kappa
\end{equation}
and we may give the product of two elements $a=\mathrm{D}^\kappa_\iota\mathrm{d}\Lambda^\iota_\kappa$ and $b=\mathrm{F}^\kappa_\iota\mathrm{d}\Lambda^\iota_\kappa$ as
\begin{equation}
ab=\mathrm{D}^\kappa_\alpha\mathrm{d}\Lambda^\alpha_\kappa\mathrm{F}_\iota^\beta\mathrm{d}\Lambda_\beta^\iota= \mathrm{D}^\kappa_\alpha\mathrm{F}_\iota^\alpha\mathrm{d}\Lambda_\iota^\kappa \label{prod}
\end{equation}
generalizing the nilpotenet product (\ref{12}) of the deterministic differential calculus. Notice that the product $\mathrm{D}^\kappa_\alpha\mathrm{F}_\iota^\alpha$ in (\ref{prod}) has the appearance of a matrix product $(\mathbf{DF})^\kappa_\iota$. This is no accident, the index notation used by Belavkin has its origins in the differential-calculus matrix-representation which he constructed.

We have already seen the canonical matrix representation for the increment $\mathrm{d}t$, and here it is the same but given in at least one extra dimension as
\[
\pi(\mathrm{d}t)=\left[
                   \begin{array}{ccc}
                     0 & 0 & 1 \\
                     0 & 0 & 0 \\
                     0 & 0 & 0 \\
                   \end{array}
                 \right]=\pi(\mathrm{d}\Lambda^+_-).
\]
The remaining increments of annihilation, counting and creation are respectively given as
\begin{equation}
\pi(\mathrm{dA}^k)=\left[
                   \begin{array}{ccc}
                     0 & \langle k| & 0 \\
                     0 & 0 & 0 \\
                     0 & 0 & 0 \\
                   \end{array}
                 \right],\; \pi(\mathrm{dN}^k_i)=\left[
                   \begin{array}{ccc}
                     0 & 0 & 0 \\
                     0 & |i\rangle\langle k| & 0 \\
                     0 & 0 & 0 \\
                   \end{array}
                 \right],\; \pi(\mathrm{dA}_i^\star)=\left[
                   \begin{array}{ccc}
                     0 & 0 & 0 \\
                     0 & 0 & |i\rangle \\
                     0 & 0 & 0 \\
                   \end{array}
                 \right],\nonumber
\end{equation}
with respect to the vectors $\langle k|=(\delta^k_i)_{i\in\bullet}=|k\rangle^\ast$.
In sight of (\ref{Be})   we may write a general QS (quantum stochastic) differential $a=\mathrm{L}^\kappa_\iota\mathrm{d}\Lambda^\iota_\kappa$  in its canonical matrix form, as
\begin{equation}
\pi(a)=\left[
                   \begin{array}{ccc}
                     0 & \mathrm{L}^-_\bullet & \mathrm{L}^-_+ \\
                     0 & \mathrm{L}^\bullet_\bullet & \mathrm{L}^\bullet_+ \\
                     0 & 0 & 0 \\
                   \end{array}
                 \right]\equiv\mathbf{L}=\mathrm{L}^\kappa_\iota\otimes\pi(\mathrm{d}\Lambda^\iota_\kappa)
\end{equation}
generalizing (\ref{16}), and represented on the space $\mathbb{K}(t)=\mathbb{C}\oplus\mathfrak{k}\oplus\mathbb{C}$ equipped with the Minkowski metric (\ref{2004}) with $\mathrm{I}$ as the identity on  $\mathfrak{k}$, whose degrees of freedom are represented by $\bullet$.  It is useful to consider the degrees of freedom of the noise as the \emph{internal} degrees of freedom of the \emph{external} noise index $\bullet$, particularly when discussing the \emph{external} upper-triangular structure of QS differential increments, which is preserved by the involution $\mathbf{L}^\star=\boldsymbol{\eta}^{-1}\mathbf{L}^\ast\boldsymbol{\eta}$.

The pseudo-state $l$ given in (\ref{14}) is given more generally on the unitilization $\mathfrak{a}_{\boldsymbol{1}}(\mathfrak{k})$ of the quantum It\^o algebra $\mathfrak{a}(\mathfrak{k})$, adding the matrix identity $\mathbf{I}$ to the representations $\pi(a)$ of the elements $a$. Although its action on the basis elements $\mathrm{d}\Lambda^\kappa_\iota$ is zero for all but $\mathrm{d}\Lambda^+_-=\mathrm{d}t$ it does give us an additional term on the product of two elements $f,g\in \mathfrak{a}_{\boldsymbol{1}}(\mathfrak{k})$ as
\begin{equation}
l(fg^\star)=l(f)+k(f)k(g)^\ast+l(g)^\ast
\end{equation}
where $k(f)$ is a row-vector, $\mathrm{D}_\bullet^-$ say, arising from the presence of noise. The matrix representation of $\mathfrak{a}_{\boldsymbol{1}}(\mathfrak{k})$ is constructed by imposing on $l$ a condition for positivity, called \emph{conditional positivity} in \cite{Be92b},  and the vector $\xi$ appearing in the presence of noise now has the external form
\begin{equation}
\xi=(1,0,0)^\star\label{xinoise}
\end{equation}
with respect to the three {external} degrees of freedom $\{-,\bullet,+\}$ of a QS differential increment, and the pseudo-state $l$ gives us the deterministic derivative of the dynamics as $l\big(\mathrm{L}^\kappa_\iota\mathrm{d}\Lambda^\iota_\kappa\big)=\mathrm{L}^-_+=\xi^\star\mathbf{L}\xi$.

\subsection{From Fock Space Operators to Block-Diagonal Minkowski-Fock Space Operators}
We have already seen that Schr\"odinger dynamics dilates to  a discrete interaction dynamics given by (\ref{15}) but notice that the propagator $\mathbf{U}_t=\mathbf{G}_t^\odot$ in (\ref{112}) is a block-diagonal operator on $\mathbb{H}$. This means that $[\mathbf{U}_t\psi_0](\vartheta)=\mathbf{U}_t(\vartheta)\psi_0(\vartheta)$. In addition to this the blocks $\mathbf{U}_t(\vartheta)$ are upper-triangular. In the case when $\mathfrak{h}=\mathbb{C}$  the operators $\mathbf{U}_t(\vartheta)$ are elements of the $\star$-monoid $\mathbb{M}(\vartheta)=\mathfrak{a}_{\boldsymbol{1}}(t_1)\otimes\cdots\otimes\mathfrak{a}_{\boldsymbol{1}}(t_n)$, $\vartheta=\{t_1<\ldots<t_n\}$, where each $\mathfrak{a}_{\boldsymbol{1}}(t)$ is a copy of the unitalization $\mathfrak{a}_{\boldsymbol{1}}$ of the Newton-Leibniz algebra $\mathfrak{a}$, and we denote by $\mathbb{M}=\mathfrak{a}_{\boldsymbol{1}}^\otimes$ the  block-diagonal upper-triangular $\star$-monoid containing the operators $\mathbf{U}_t$. More generally $\mathfrak{h}\neq\mathbb{C}$ and we define $\mathbb{M}=\mathfrak{m}^\odot$ where $\mathfrak{m}=\boldsymbol{1}+\mathfrak{a}\otimes\mathcal{B}(h)$ which is an affine space of more general operators  $\mathbf{X}$ having the components ${X}^{\iota_1\;\iota_2\ldots\;\iota_n}_{\kappa_1\kappa_2\ldots\kappa_n}(\vartheta)\in\mathcal{B}(\mathfrak{h})$ which are zero for any $(\iota_i,\kappa_i)=(+,-)$ and if $\iota_i=\kappa_i$ the operator $\mathbf{X}(\vartheta)$ acts as the identity at the point $t_i\in\vartheta$.

When $\mathbb{H}=\mathfrak{h}\otimes\mathbb{F}$ is given by the
Bigger Minkowski-Fock space $\mathbb{F}=\Gamma(\mathbb{K})$ having the noise degrees of freedom $\bullet$ so that $\mathbb{K}(t)=\mathbb{C}\oplus\mathfrak{k}\oplus\mathbb{C}$  we may study dilations (or in fact matrix representations) of general quantum stochastic dynamics. Again we consider a block-diagonal (externally) upper-triangular monoid $\mathbb{M}$, only this time it is given by the second quantization $\mathbb{M}=\mathfrak{m}^\odot$ of the monoid $\mathfrak{m}=\boldsymbol{1}+\mathfrak{a}(\mathfrak{k})\otimes\mathcal{B}(h)$, where $\mathfrak{a}(\mathfrak{k})$ is a general quantum It\^o algebra over the Hilbert space $\mathfrak{k}$. To understand the form of general operators $\mathbf{X}$ in $\mathbb{M}$ it will prove useful to introduce a very important operator called the \emph{point-derivative} $\boldsymbol{\nabla}:\mathbb{M}\rightarrow\mathfrak{m}\odot\mathbb{M}:=\dot{\mathbb{M}}$ and its action is defined on the operators $\mathbb{X}$ as
\begin{equation}
\big[\boldsymbol{\nabla}(\mathbf{X})\big](t,\vartheta)= \mathbf{X}(t\sqcup\vartheta)\equiv\dot{\mathbf{X}}(t,\vartheta)\label{pd}
\end{equation}
resembling the Malliavin derivative \cite{Mal78}where $\sqcup$ denotes the disjoint union, $t\notin\vartheta$.
 Now the operators $\mathbf{X}$ may be characterized by the formula
 \begin{equation}
 \dot{\mathbf{X}}(t,\vartheta\setminus t)=\left[
                                                                   \begin{array}{ccc}
                                                                    \mathbf{X}(\vartheta\setminus t) & \dot{\mathbf{X}}^-_\bullet(t,\vartheta\setminus t) & \dot{\mathbf{X}}^-_+(t,\vartheta\setminus t) \\
                                                                    0 & \dot{\mathbf{X}}^\bullet_\bullet (t,\vartheta\setminus t)& \dot{\mathbf{X}}^\bullet_+ (t,\vartheta\setminus t) \\
                                                                    0 & 0 & \mathbf{X}(\vartheta\setminus t) \\
                                                                  \end{array}
                                                                \right]=\mathbf{X}(\vartheta)
 \end{equation}
for all $t\in\vartheta$, recovering the conditions for the noiseless case in the absence of the $\bullet$ degree of freedom.

Quantum stochastic calculus is represented in the Hilbert space $\mathcal{H}=\mathfrak{h}\otimes\mathcal{F}$ where $\mathcal{F}=\Gamma(\mathcal{K})$ is a the Guichardet-Fock space over $\mathcal{K}=L^2(\mathbb{R})\otimes\mathfrak{k}$ (see appendix). The Hilbert space $\mathfrak{h}$ is that of an arbitrary quantum system whilst $\mathcal{F}$ serves as a reservoir for the noises that generate the dynamics of the quantum system, and it is the point-derivative  operator that fulfills the analytical requirement of singling out a single point of noise for the purposes of interaction with the system in $\mathfrak{h}$. Quantum stochastic differential equations describe the infinitesimal change on the Hilbert space $\mathcal{H}$ and transitions of the total system in $\mathcal{H}$ are given by operators $\mathrm{T}$ whose action is defined as
\begin{equation}
\big[\mathrm{T}\psi\big](\vartheta^\bullet)=\sum_{\vartheta^\bullet_+\sqcup\vartheta^\bullet_\bullet=\vartheta^\bullet}\int T(\vartheta^\bullet_+,\vartheta^\bullet_\bullet,{\vartheta}^-_\bullet)\psi(\vartheta_\bullet)\mathrm{d}\vartheta^-_\bullet,
\end{equation}
where $\vartheta_\bullet=\vartheta^\bullet_\bullet\sqcup\vartheta^-_\bullet$ and
 the operator-valued kernels $T(\vartheta^\bullet_+,\vartheta^\bullet_\bullet,{\vartheta}^-_\bullet)$ are maps from $\mathfrak{h}\otimes\mathcal{K}^\otimes(\vartheta_\bullet)$ into $\mathfrak{h}\otimes\mathcal{K}^\otimes(\vartheta^\bullet)$. This action is not quite as daunting as it appears for it has three basic parts. The first is the row-like integral action of $\mathrm{T}$ on $\psi$ which is described by the integration over the chain-variable $\vartheta^-_\bullet$, the second is the spontaneous scattering which is given by the purely block-diagonal part of $\mathrm{T}$ whose action has the form $[\mathrm{T}\psi](\vartheta)=\mathrm{T}(\vartheta)\psi(\vartheta)$, and the final part of the action is the creation of additional column-like structure which is accounted for by the chain-variable $\vartheta^\bullet_+$. The algebra of such operators $\mathrm{T}$ on $\mathcal{H}$ is general indeed, and such operators $\mathrm{T}$ have unique block-diagonal representations in $\mathbb{M}$ given by the identification
 \begin{equation}
 \mathbf{T}(\vartheta)^{\iota_1\ldots\;\iota_n}_{\kappa_1\ldots\kappa_n}= \overline{T}({t_1}^{\iota_1}_{\kappa_1},\ldots, {t_n}^{\iota_n}_{\kappa_n})
 \end{equation}
 on the extension $\overline{T}(\vartheta)=1(\vartheta^-_-)T(\vartheta^\bullet_+,\vartheta^\bullet_\bullet,{\vartheta}^-_\bullet,\vartheta^-_+) 1(\vartheta^+_+)\delta_\emptyset(\vartheta^+_-)$ with $t^\iota_\kappa\in\vartheta^\iota_\kappa$. We also define $T(\vartheta^\bullet_+,\vartheta^\bullet_\bullet,{\vartheta}^-_\bullet)=\int T(\vartheta^\bullet_+,\vartheta^\bullet_\bullet,{\vartheta}^-_\bullet,\vartheta^-_+)\mathrm{d}\vartheta^-_+$.
 For more details on norms on these algebras one should consult \cite{Be92b}, it involves working with a rigged Hilbert space constructed from the inductive limits of scaled Hilbert spaces \cite{Be92b,Be91}.

There is a well-defined representation of the $\star$-monoid $\mathbb{M}$ of bounded operators $\mathbf{X}$ on the affine space $\mathbb{M}\boldsymbol{F}^\star\mathcal{H}\subset\mathbb{H}$ in the algebra of operators on $\mathcal{H}$ given by the operator $\boldsymbol{F}^\star$ introduced above. It is $\epsilon(\mathbf{X})=\boldsymbol{F}\mathbf{X}\boldsymbol{F}^\star\equiv\mathrm{X}$ and satisfies the properties
\begin{equation}
\epsilon(\mathbf{X}^\star\mathbf{Y})=\epsilon(\mathbf{X})^\ast\epsilon(\mathbf{Y}), \quad\epsilon(\mathbf{I})=\mathrm{I},\label{morph}
\end{equation}
which is another important theorem given in \cite{Be92b}.
The difference between the operator $\boldsymbol{F}^\star$ used here and that introduced above is that here we are embedding the bigger space $\mathcal{H}$ into the Minkowski-Hilbert space $\mathbb{H}=\mathfrak{h}\otimes\Gamma(\mathbb{K})\cong \mathcal{H}\otimes\Gamma(\Bbbk)$, otherwise the action is still the same: $\boldsymbol{F}^\star\psi=\psi\otimes\xi^\otimes$, and the projection $\boldsymbol{F}$ still corresponds to the integral action of the row-vector ${\xi^\star}^\otimes$. What this all means is that rather than working with complicated operators in $\mathcal{H}$ we can simply embed $\mathcal{H}$ into $\mathbb{H}$ describe dynamical transitions using block-diagonal operators $\mathbf{T}$ and then project the result back into $\mathcal{H}$.

\subsection{Homogeneous QSDEs}
The general form of a quantum stochastic differential equation, or QSDE, is given on the Hilbert space $\mathcal{H}=\mathfrak{h}\otimes\mathcal{F}$ as
\begin{equation}
\mathrm{dT}_{t}=\Lambda\big(\mathbf{D},\mathrm{d}t\big)\equiv\mathrm{D}^\iota_\kappa(t)\mathrm{d}\Lambda^\kappa_\iota \label{abc}
\end{equation}
with some initial condition $\mathrm{T}_0$, and it is an element of the quantum It\^o algebra $\mathfrak{a}(\mathfrak{k})\otimes\mathcal{B}(\mathcal{H})$ having the matrix representation $\mathbf{D}(t)=\pi(\mathrm{dT}_t)$.  This QSDE corresponds to the quantum stochastic single-integral
\begin{equation}
\mathrm{T}_{t}-\mathrm{T}_0 =\int^t_0 \Lambda\big(\mathbf{D},\mathrm{d}z\big) \equiv \boldsymbol{\xi}^\star_t\mathbf{D}\boldsymbol{\xi}_t
\end{equation}
where the right-hand side was introduced in \cite{Brth} for $\boldsymbol{\xi}^\star=(1,\nabla^\ast,0)$ (not to be confused with the temporal-spin $\xi^\star$ which is not in bold-case), with $1\in L^\infty(\mathbb{R})$, to make concise the explicit form of the QS single integral essentially given in \cite{Be91} as the evaluation of
\begin{equation}
\int^t_0 \Lambda\big(\mathbf{D},\mathrm{d}z\big)=\nabla^\ast_t\big( \mathrm{D}^\bullet_\bullet\nabla_t+\mathrm{D}^\bullet_+\big)+ \int_0^t\big[\mathrm{D}^-_\bullet\nabla+\mathrm{D}^-_+\big](z)\mathrm{d}z\label{QSI}
\end{equation}
where $\nabla^\ast$ is defined as $[\nabla^\ast\zeta](\vartheta)=\sum_{t\in\vartheta}\zeta(t,\vartheta\setminus t)$ for $\zeta\in\mathcal{K}\otimes\mathcal{H}$, generalizing the first-order Skorokhod integral \cite{Sko75}, and it is adjoint to the
{point derivative} operator $\nabla:\mathcal{H}\mapsto\mathcal{K}\otimes\mathcal{H}$, similarly defined to that in (\ref{pd}) only this time on $\mathcal{H}$ rather than $\mathbb{M}$, as $[\nabla\chi](t,\vartheta):=\chi(t\sqcup\vartheta)$ for $\chi\in\mathcal{H}$. In fact the operators $\nabla$ and $\nabla^\ast$ are continuous operators in the general setting of Fock-scale \cite{Be92b,Be91}. Also notice that $=\nabla^\ast\nabla=\boldsymbol{\xi}^\star\boldsymbol{\xi}$ is the number operator on $\mathcal{F}$.

We would like to be able to decompose such general differential increments as linear combinations of more basic, or fundamental, processes. Those which we have in mind are
  \emph{homogeneous QSDEs}, having the form
\begin{equation}
\mathrm{dT}_{t}=\Lambda\big(\mathbf{L},\mathrm{d}t\big)\mathrm{T}_t\label{hom}
\end{equation}
corresponding to the  \emph{logarithmic derivative} $\mathbf{D}(x)=\mathbf{L}(x)\mathrm{T}_x$, where $\mathbf{L}$ is called the \emph{chronological exponent} of $\mathrm{T}_t$ \cite{Be92b}. If $\mathrm{T}_t$ resolves  (\ref{hom}) then the block-diagonal operator $\mathbf{T}_t$ from which it is obtained has the \emph{chronologic decomposition}
\begin{equation}
\mathbf{T}_t(\vartheta)=\overset{\leftarrow}{\underset{x\in\vartheta\cap[0,t)}{\prod}} \mathbf{G}(x,\vartheta\setminus x)\mathbf{T}_0(\vartheta),
\end{equation}
and the underlying differential equation describing the dynamics in the Minkowski-Hilbert space is the purely counting differential eqiuation
\begin{equation}
\mathrm{d}\mathbf{T}_t(\vartheta)=\mathbf{L}(t,\vartheta\setminus t)\mathbf{T}_t(\vartheta)\mathrm{d}n_t(\vartheta),\label{genun}
\end{equation}
where $\mathbf{L}(t,\vartheta\setminus t)=\mathbf{G}(t,\vartheta\setminus t)-\mathbf{I}(\vartheta)$. Notice that any invertible operator $\mathbf{T}_t$ defines a general homogeneous process having the chronological exponent $\mathbf{L}(t,\vartheta\setminus t)=\big(\mathrm{d}{\mathbf{T}}_t(\vartheta)\big)\mathbf{T}_t^{-1}(\vartheta),\; t\in\vartheta$. Indeed (\ref{abc}) is also represented in Minkowski-Hilbert space the underlying counting differential equation
\begin{equation}
\mathrm{d}\mathbf{T}_t(\vartheta)=\mathbf{D}(t,\vartheta\setminus t)\mathrm{d}n_t(\vartheta).
\end{equation}
In the case when (\ref{genun}) has the initial condition $\mathbf{T}_0=\mathbf{I}$ the homogeneous process $\mathbf{T}_t$ forms a shift co-cycle $\mathbf{T}_t=\mathbf{T}^t_0$. These are defined as two-parameter families  \begin{equation}
\mathbf{T}^t_r(\vartheta)= \overset{\leftarrow}{\underset{x\in\vartheta\cap[r,t)}{\prod}} \mathbf{G}(x,\vartheta\setminus x), \qquad \mathbf{T}^r_r:=\mathbf{I},\label{hemig}
\end{equation}
satisfying the property $\mathbf{T}^t_r=\mathbf{T}^t_s\mathbf{T}^s_r$ for $r<s<t$.
We may give the more general homogeneous processes $\mathbf{T}_t$, with $\mathbf{T}_0\neq\mathbf{I}$ in terms of these shift co-cycles as $\mathbf{T}_t=\mathbf{T}^t_s\mathbf{T}_s$, for all $t>s$. The projection of the operators $\mathbf{T}^s_r$ onto the Hilbert space $\mathcal{H}$ gives us the two-parameter family $\{\mathrm{T}_r^s=\boldsymbol{F}\mathbf{T}^s_r\boldsymbol{F}^\star\}$.

Notice that the operators $\mathbf{T}^s_r$, and hence the operators $\mathrm{T}_r^s$, need not have their action restricted to the interval $[r,s)$, $r<s$. However, if we insist that $\mathbf{T}^s_r(\vartheta)$ is the identity on $\vartheta\setminus[r,t)$ then we arrive at an import class of co-cycles given by generators of the form
 $\mathbf{G}(x,\vartheta)=\mathbf{G}(x)\otimes\mathbf{I}(\vartheta)$. The processes generated by such are adapted semi-tensor products $\mathbf{T}_t=\mathbf{G}^{\odot}_t$ resolving in $\mathbb{H}$ the underlying QSDE
\begin{equation}
\mathrm{d}\mathbf{T}_{t}(\vartheta)=\mathbf{L}(t)\odot\mathbf{T}_{t}(\vartheta\setminus t)\;\mathrm{d}n_t(\vartheta),\label{318}
\end{equation}
which is what we encountered in the case of Schr\"odinger dynamics in the absence of the $\bullet$ degree of freedom such that $\psi_t=\mathbf{T}_t\psi_0$ resolves (\ref{15}) if $\mathbf{G}=\mathbf{I}-\mathrm{i}H\otimes\pi(\mathrm{d}t)$. We shall also see that a slightly more general choice of $\mathbf{G}$ will give us a unitary dilation of Lindblad dynamics in $\mathbb{H}$.
In fact the most general form of an underlying semi-tensor product propagator $\mathbf{T}_t(\vartheta)=\odot^{\leftarrow}_{x\in\vartheta^t}\mathbf{G}(x)$, $\vartheta^t=[0,t)$,  has a generator of the form
\begin{equation}
\mathbf{G}=\left[
             \begin{array}{ccc}
               I & \mathrm{G}^-_\bullet & G^-_+ \\
               0 & \mathrm{G}^\bullet_\bullet & \mathrm{G}^\bullet_+ \\
               0 & 0 & I \\
             \end{array}
           \right],
\end{equation}
\cite{Be88,Be92,Be92b}, generalizing (\ref{xi}). The requirement for the pseudo-unitarity $\mathbf{G}^{-1}=\mathbf{G}^\star$, where $\mathbf{G}^\star:=\boldsymbol{\eta}^{-1}\mathbf{G}^\ast\boldsymbol{\eta}$ is induced by $\boldsymbol{\eta}$, gives rise to the conditions
\begin{equation}
  {\mathrm{G}_{\bullet}^{\bullet}}^{\ast}\mathrm{G}_{\bullet}^{\bullet
}=\mathrm{I},\qquad{{G}%
_{+}^{-}}^{\dag}+{\mathrm{G}_{+}^{\bullet}}^{\ast}\mathrm{G}_{+}%
^{\bullet}+{G}_{+}^{-}=0,\qquad
{\mathrm{G}_{\bullet}^{-}}^{\ast}+{\mathrm{G}_{\bullet}^{\bullet}}^{\ast
}\mathrm{G}_{+}^{\bullet}=0,\label{three7}
\end{equation}
\cite{Be92b},
allowing us to write such a pseudo-unitary generator as
\begin{equation}
\mathbf{G}=\left[
             \begin{array}{ccc}
               I & -\mathrm{L}^\ast\mathrm{G} & K \\
               0 & \mathrm{G} & \mathrm{L} \\
               0 & 0 & I \\
             \end{array}
           \right],\qquad K=-\mathrm{i}H-\tfrac{1}{2}\mathrm{L}^\ast\mathrm{L}.\label{gun}
\end{equation}
\begin{remark}
 Before we proceed it is worthy of note that an operator $\mathrm{T}=\boldsymbol{F}\mathbf{T}\boldsymbol{F}^\star$  is unitary if and only if the underlying operator $\mathbf{T}$ is pseudo-unitary \cite{Be92b}.
\end{remark}

Now consider a quantum system that is in a pure state given by $\psi\in\mathfrak{h}$, evolving in a homogeneous stochastic manner given by a unitary propagator $\mathrm{U}_t=\boldsymbol{F}\mathbf{G}^\odot_t\boldsymbol{F}^\star$ in $\mathcal{H}$, where $\mathbf{G}$ is a $\star$-unitary generator. We shall also suppose that the initial state-vector in the Guichardet-Fock space $\mathcal{F}$ is the vacuum vector $\delta_\emptyset=0^\otimes$  defined as
\begin{equation}
\delta_\emptyset(\vartheta)=1\;\;\textrm{if}\;\vartheta=\emptyset,\;\;\textrm{otherwise}\;\delta_\emptyset(\vartheta)=0,
\end{equation}
and notice that it is already normalized, $\|\delta_\emptyset\|=\exp\{0\}=1$. Then the underlying dynamics is given by the differential equation (\ref{318}) on an initial state-vector $\Psi_0=\boldsymbol{F}^\star\delta_\emptyset\psi$ in $\mathbb{H}$ as
\begin{equation}
\mathrm{d}\Psi_t(\vartheta)=\mathbf{L}\Psi_t(\vartheta)\mathrm{d}n_t(\vartheta),\label{2006}
\end{equation}
where $\mathbf{L}=\mathbf{G}-\mathbf{I}$. The only difference between this and (\ref{15}) is that
we have the additional $\bullet$ degrees of freedom of the noise, so that the interaction increment is no longer given by $\mathbf{L}=-\mathrm{i}H\otimes\pi(\mathrm{d}t)$, as it was in the case of Schr\"odinger dynamics, but is instead induced by the generator (\ref{gun}),  and the canonical dilation
of this quantum stochastic differential increment has the form
\begin{equation}
\mathbf{L}:\psi\otimes\left[
                        \begin{array}{c}
                          0 \\
                          0 \\
                          1 \\
                        \end{array}
                      \right]\mapsto  \left[
                        \begin{array}{c}
                          K \\
                          \mathrm{L} \\
                          0 \\
                        \end{array}
                      \right]\psi
\end{equation}
generalizing (\ref{17}).

\subsection{Projecting Out Lindblad Dynamics} In the following theorem we show that the marginal density operator $\varrho(t)$ obtained from the density operator $\Psi^{}_t\Psi_t^\star$ by tracing out the Minkowski-Fock space $\mathbb{F}=\Gamma(\mathbb{K})$, where $\Psi_t$ resolves (\ref{2006}), satisfies the Lindblad equation (\ref{lindblad2}).
\begin{proposition}
 Let the $\star$-unitary co-cycle $\mathbf{U}_t:=\mathbf{G}^{\odot}_t$ on $\mathbb{H}$, having initial condition $\mathbf{U}_0=\mathbf{I}$, be given by the $\star$-unitary generator (\ref{gun}) with $\mathrm{G}=\mathrm{I}$ and
the vector-operator $\mathrm{L}$ bounded on any connected $\Delta\subset\mathbb{R}$ as
\begin{equation}
\|\mathrm{L}\|^2=\int_\Delta\|\mathrm{L}(t)^\ast\mathrm{L}(t)\|_{\mathcal{B}(\mathfrak{h}) }^2\mathrm{d}t <\infty,
\end{equation}
with respect to the Lebesgue measure $\mathrm{d}t$ on $\Delta$, and an integrable time-dependent Hamiltonian $H(t)$.
Then the marginal density operator
\emph{\begin{equation}\varrho(t)=\textrm{Tr}_\mathbb{F}[\mathbf{U}_t \big(P_\psi\otimes\Phi\Phi^\star\big)\mathbf{U}_t^\star],\end{equation}}where $P_\psi=\psi\psi^\dag$ and $\Phi:=\boldsymbol{F}^\star\delta_\emptyset\equiv\xi^\otimes$ with $\xi^\star=(1,0,0)$, resolves the Lindblad equation
\begin{equation}
\partial_t\varrho(t)= \varrho(t)K(t)^\dag+\widetilde{\mathrm{L}}(t)\big(\varrho(t)\otimes\mathrm{I}\big) \widetilde{\mathrm{L}}(t)^\ast + K(t)\varrho(t)
\end{equation}
in $\mathfrak{h}$,  where $t\in\Delta$ and $\widetilde{\mathrm{L}}$ is a partial involution of matrix transposition in the algebra $\mathcal{B}(\mathfrak{k})$, but not transposing in $\mathcal{B}(\mathfrak{h})$.
\end{proposition}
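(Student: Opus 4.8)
The plan is to mimic, in the presence of the noise index $\bullet$, exactly the passage that turned the Schr\"odinger density-operator equation into the pseudo-quadratic form (\ref{gend}): dilate the dynamics into Minkowski--Hilbert space, differentiate the density operator there, trace out the reservoir, and read off the generator against the metric (\ref{2004}). Concretely, the first step is to observe that $\Psi_t:=\mathbf{U}_t\big(\psi\otimes\Phi\big)$, with $\Phi=\xi^\otimes$ and $\xi$ as in (\ref{xinoise}), resolves the purely counting equation (\ref{2006}), $\mathrm{d}\Psi_t(\vartheta)=\mathbf{L}\Psi_t(\vartheta)\,\mathrm{d}n_t(\vartheta)$ with $\mathbf{L}=\mathbf{G}-\mathbf{I}$ and $\mathbf{G}$ the pseudo-unitary generator (\ref{gun}) specialized to $\mathrm{G}=\mathrm{I}$. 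Forming the pseudo-density operator $\mathbf{R}_t:=\mathbf{U}_t\big(P_\psi\otimes\Phi\Phi^\star\big)\mathbf{U}_t^\star=\Psi_t\Psi_t^\star$ and differentiating block-wise, the Leibniz rule together with the counting It\^o identity $(\mathrm{d}n_t)^2=\mathrm{d}n_t$ gives $\mathrm{d}\mathbf{R}_t(\vartheta)=\big(\mathbf{G}\,\mathbf{R}_t(\vartheta)\,\mathbf{G}^\star-\mathbf{R}_t(\vartheta)\big)\mathrm{d}n_t(\vartheta)$ ($\mathbf{G}$ acting at the point $t$), the direct analogue of (\ref{318}) and of the equation underlying (\ref{gend}).

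The second and decisive step is to take the marginal $\varrho(t)=\mathrm{Tr}_\mathbb{F}\big[\mathbf{R}_t\big]$ and to show that passing to the partial trace over $\mathbb{F}=\Gamma(\mathbb{K})$ (i) converts the discrete counting differential $\mathrm{d}n_t(\vartheta)$ into the genuine time differential $\mathrm{d}t$ --- the ``unraveling'' of the discrete interaction picture into a continuous flow, which uses that the reservoir starts in the pure future vacuum $\Phi$ --- and (ii) leaves, since $\xi^\star\xi=0$ annihilates the $-\mathbf{R}_t$ term, precisely the pseudo-quadratic form $\partial_t\varrho(t)=\mathbf{V}^\star\big(\varrho(t)\otimes\mathbf{I}\big)\mathbf{V}$ of (\ref{gend})--(\ref{lindblad}), with $\mathbf{V}^\star$ a row-vector operator built from $\mathbf{G}$ and $\xi$. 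An equivalent and perhaps cleaner route is through the Hilbert-space dilation $\mathrm{U}_t=\boldsymbol{F}\mathbf{U}_t\boldsymbol{F}^\star$, which is unitary by the Remark preceding the Proposition, so that $\varrho(t)=\mathrm{Tr}_\mathcal{F}\big[\mathrm{U}_t(P_\psi\otimes\delta_\emptyset\delta_\emptyset^\dag)\mathrm{U}_t^\dag\big]$; differentiating under the trace and invoking the morphism property (\ref{morph}) together with the Hudson--Parthasarathy It\^o table (\ref{mt2}) --- whose only contribution surviving the vacuum is $\mathrm{dA}^i\mathrm{dA}_k^\star=\delta^i_k\mathrm{d}t$ --- produces the same master equation.

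It then remains to expand $\mathbf{V}^\star\big(\varrho\otimes\mathbf{I}\big)\mathbf{V}$ componentwise against the Minkowski metric (\ref{2004}). Reading the coefficients off $\mathbf{G}$ gives $V_-=I$ and $V_+=K$ with $K^\dag=\mathrm{i}H-\tfrac{1}{2}\mathrm{L}^\ast\mathrm{L}$, so that the $(-,+)$ and $(+,-)$ terms combine to $\varrho K^\dag+K\varrho=\mathrm{i}[\varrho,H]-\tfrac{1}{2}\{\mathrm{L}^\ast\mathrm{L},\varrho\}$; the off-diagonal blocks $-\mathrm{L}^\ast$ and $\mathrm{L}$ of $\mathbf{G}$, paired by the $\mathfrak{k}$-identity of $\boldsymbol{\eta}$ in the course of the partial trace, furnish the $(\bullet,\bullet)$ term, where --- because the trace over $\mathbb{F}$ contracts the $\mathfrak{k}$-indices by transposition rather than by the full adjoint, i.e.\ because the involution on $\mathcal{B}(\mathfrak{h}\otimes\mathfrak{k})$ factors as $\mathrm{X}^\ast=\widetilde{\mathrm{X}}^\dag$ as in the Remark --- one obtains $\mathrm{V}_\bullet=\widetilde{\mathrm{L}}=\mathrm{L}^{\ast\dag}$ and hence the completely positive, trace-preserving term $\widetilde{\mathrm{L}}\big(\varrho\otimes\mathrm{I}\big)\widetilde{\mathrm{L}}^\ast=\sum_iL_i\varrho L_i^\dag$. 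Altogether $\partial_t\varrho(t)=\varrho(t)K(t)^\dag+\widetilde{\mathrm{L}}(t)\big(\varrho(t)\otimes\mathrm{I}\big)\widetilde{\mathrm{L}}(t)^\ast+K(t)\varrho(t)$, which is (\ref{lindblad2}); existence and uniqueness of a bounded $\star$-unitary $\mathbf{U}_t$, hence of $\varrho(t)$ as a genuine completely positive trace-preserving evolution, follows from the hypotheses $\|\mathrm{L}\|^2<\infty$ and integrability of $H(t)$ via the Fock-scale estimates of \cite{Be92b,Be91}.

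The step I expect to be the genuine obstacle is the second one: making rigorous that the partial trace over the pseudo-Fock space $\mathbb{F}$ converts the discrete counting dynamics into a continuous flow, and in particular that the $\dag$-versus-$\ast$ bookkeeping comes out so that the dissipator appears as the completely positive $\sum_iL_i\varrho L_i^\dag$ rather than the incorrect $\sum_iL_i^\dag\varrho L_i$ --- this is exactly what the partial involution $\widetilde{\mathrm{L}}=\mathrm{L}^{\ast\dag}$ is designed to record, and controlling it cleanly is what the apparatus of \cite{Be92b} (the morphism $\epsilon$, the operator-valued kernels, and the rigging by Fock-scale) is needed for.
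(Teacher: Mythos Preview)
Your outline is correct and lands on the same endpoint as the paper, but the paper short-circuits precisely the step you flag as the obstacle. Rather than differentiating $\mathbf{R}_t=\Psi_t\Psi_t^\star$ and then wrestling with how $\textrm{Tr}_\mathbb{F}$ converts the counting increment into $\mathrm{d}t$ while producing the correct $\dag$-versus-$\ast$ contraction, the paper first rewrites the partial trace as a pseudo-expectation against a \emph{partially transposed} propagator:
\[
\textrm{Tr}_\mathbb{F}\big[\mathbf{U}_t(P_\psi\otimes\Phi\Phi^\star)\mathbf{U}_t^\star\big]
=\Phi^\star\,\widetilde{\mathbf{U}}_t\big(P_\psi\otimes\mathbf{I}^\otimes\big)\widetilde{\mathbf{U}}_t^\star\,\Phi,
\qquad \widetilde{\mathbf{U}}_t=\widetilde{\mathbf{G}}{}^{\odot}_t,\quad \widetilde{\mathbf{G}}:=\boldsymbol{\eta}\,\mathbf{G}^{\star\dag}\boldsymbol{\eta}.
\]
The point of $\widetilde{\mathbf{G}}$ is that it transposes only in $\mathcal{B}(\mathfrak{k})$, preserving both the $\mathcal{B}(\mathfrak{h})$-action and the external upper-triangular (chronological) structure; the $\mathfrak{k}$-contraction that you want the trace to perform is thereby absorbed into the generator itself. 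One then simply sets $\mathbf{V}^\star:=\xi^\star\widetilde{\mathbf{G}}=(I,\widetilde{\mathrm{L}},K)$, so that ${\mathbf{V}^{\odot}_t}^\star=\Phi^\star\widetilde{\mathbf{U}}_t$ and
\[
\varrho(t)=\varrho(0)+\int_0^t\mathbf{V}^\star(s)\,\varrho(s)\,\mathbf{V}(s)\,\mathrm{d}s
\]
falls out directly, with $\widetilde{\mathrm{L}}$ appearing in the row-vector from the outset rather than being extracted a posteriori from a trace contraction. Your alternative route~(b) through $\mathrm{U}_t=\boldsymbol{F}\mathbf{U}_t\boldsymbol{F}^\star$ on $\mathcal{H}$ and the Hudson--Parthasarathy table is sound and is in fact how the paper opens its proof (recording the vacuum diffusion $\mathrm{d}\psi_t=K\psi_t\,\mathrm{d}t+\mathrm{dA}^\star\mathrm{L}\psi_t$), but the paper does not complete the argument along that line; the partial-transposition identity is the device it actually uses, and it is exactly what dissolves the bookkeeping worry in your last paragraph.
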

\begin{proof}
In the Hilbert space $\mathcal{H}=\mathfrak{h}\otimes\mathcal{F}$ of quantum stochastic evolution we are considering a quantum system embedded in an empty environment so that the initial state-vector is $\psi\otimes\delta_\emptyset$.
On $\mathcal{H}$ we have a geometric quantum Brownian motion given by the propagator $\mathrm{U}_t=\boldsymbol{F}\mathbf{U}_t\boldsymbol{F}^\star$ resolving the quantum stochastic diffusion equation
\begin{equation}
\mathrm{d}\psi_t=K(t)\psi_t\mathrm{d}t+\mathrm{dA}^\star\mathrm{L}(t)\psi_t
\end{equation}
where $\psi_t=\mathrm{U}_t\psi_0$ and we have taken into account that $\mathrm{L}(t)^\ast\mathrm{dA}\psi_t=0$ due to $\psi_0=\psi\otimes\delta_\emptyset$.

The marginal density operator $\varrho(t)=\textrm{Tr}_{\mathbb{F}} \big[\Psi^{}_t\Psi_t^\star\big]$ in $\mathfrak{h}$ may be given in the form
\begin{equation}
\textrm{Tr}_{\mathbb{F}} \big[\mathbf{U}_t(P_\psi\otimes\Phi\Phi^\star)\mathbf{U}_t^\star\big]=\Phi^\star\widetilde{\mathbf{U}}_t \big(P_\psi\otimes{\mathbf{I}}^\otimes\big)\widetilde{\mathbf{U}}_t^\star\Phi,
\end{equation}
where $\Phi=\boldsymbol{F}^\star\delta_\emptyset$, and $\widetilde{\mathbf{U}}_t=\widetilde{\mathbf{G}}_t^\odot$ is a partially transposed propagator given by the generator $\widetilde{\mathbf{G}}:=\boldsymbol{\eta}\mathbf{G}^{\star\dag}\boldsymbol{\eta}$ which is a matrix transposition in $\mathcal{B}(\mathfrak{k})$ of the $\bullet$ degree of freedom, but it {does not} transpose in $\mathcal{B}(\mathfrak{h})$, {nor} the external upper-triangular structure in $\pi\big(\mathfrak{a}_{\boldsymbol{1}}(\mathfrak{k})\big)$, so that the direction of the chronological action is preserved. The generator  $\widetilde{\mathbf{G}}$  has the explicit form
\begin{equation}
\widetilde{\mathbf{G}}=\left[
             \begin{array}{ccc}
               I & \widetilde{\mathrm{L}} & K \\
               0 & {\mathrm{I}} & -\widetilde{\mathrm{L}}^\ast \\
               0 & 0 & I \\
             \end{array}
           \right],
\end{equation}
which is generally not  $\star$-unitary; but one may note that 
$\star$-unitarity of the generator is preserved if the components of the vector operator $\mathrm{L}(t)\in\mathcal{B}(\mathfrak{h})\otimes\mathfrak{k}$ are normal. Now we can define the row-operator ${\mathbf{V}^\odot_t}^\star:=\Phi^\star\widetilde{\mathbf{U}}_t$ where $\mathbf{V}^\star=(I,\widetilde{\mathrm{L}},K)$, from which we find that 
\begin{equation}
\varrho(t)=\varrho(0)+\int_0^t\mathbf{V}^\star(s)\varrho(s)\mathbf{V}(s)\mathrm{d}s.
\end{equation}
The operator $\varrho(t)$ is the  conditional expectation of the operator $\widetilde{\mathbf{U}}_t \big(P_\psi\otimes{\mathbf{I}}^\otimes\big)\widetilde{\mathbf{U}}_t^\star$ with respect to the second-quantization $\Phi$ of the temporal spin $\xi$ (\ref{xinoise}) and its differential increment $\mathrm{d}\varrho(t)$ is 
the increment of the expected dynamics in $\mathfrak{h}$.
\end{proof}
In fact, since the Fock-representations of the standard Wiener increments are $\mathrm{d}w^t_k=\mathrm{dA}^k+\mathrm{dA}_k^\star$, $k\in\bullet$, and since $\mathrm{L}(t)^\ast\mathrm{dA}\psi_t=0$ we find that $\psi_t$ also resolves the non-unitary diffusion equation
\begin{equation}
\mathrm{d}\psi_t=K(t)\psi_t\mathrm{d}t+L^k\mathrm{d}w^t_k\psi_t\label{238}
\end{equation}
\cite{Be02}.
In view of (\ref{gun}) this is because $\mathbf{G}\xi$ remains unchanged if $\mathrm{G}^\bullet_\bullet$ and $\mathrm{G}^-_\bullet$ are changed, so we may replace $\mathrm{G}^-_\bullet=-\mathrm{L}^\ast$ with $\widetilde{\mathrm{L}}$ without altering the outcome of the action of $\mathbf{G}$ on $\xi$.

\subsection{Inhomogeneous QSDEs and A Quantum Stochastic Duhamel Principle}
Duhamel's principle is used to solve inhomogeneous differential equations of the form
\begin{equation}
\mathrm{d}{T}(t)+KT(t)\mathrm{d}t=D(t)\mathrm{d}t
\end{equation}
corresponding to generalizations of the homogeneous differential equation
\begin{equation}
\mathrm{d}{Y}(t)+KY(t)\mathrm{d}t=0,\qquad Y(0)=I.
\end{equation}
The Duhamel solution has the form
\begin{equation}
T(t)=Y(t)T(0)+\int_0^tY_z(t)D(z)\mathrm{d}z,
\end{equation}
where $Y_z(t)=\exp\{K(t-z)\}$, and $Y_z(t)D(z)$, $z<t$, resolves the homogeneous differential equation
\begin{equation}
\mathrm{d}Y_z(t)D(z)+KY_z(t)D(z)\mathrm{d}t=0,
\end{equation}
having the $z$-dependent initial condition $Y_z(z)D(z)=D(z)$.

What we would like to do now is consider a quantum stochastic generalization of this to resolve inhomogeneous QSDEs of the form
\begin{equation}
\mathrm{d}\mathrm{T}_t+\Lambda\big(\mathbf{K},\mathrm{d}t\big) \mathrm{T}_t=\Lambda\big(\mathbf{D},\mathrm{d}t\big)\label{312}
\end{equation}
as perturbations of the homogeneous QSDE
\begin{equation}
\mathrm{d}\mathrm{Y}^t_0+\Lambda\big(\mathbf{K},\mathrm{d}t\big) \mathrm{Y}^t_0=0,\qquad\mathrm{Y}^0_0=\mathrm{I}.
\end{equation}
In the following theorem we shall establish such a QS Duhamel principle giving us solutions of (\ref{312}) in the form \begin{equation}
\mathrm{T}_t=\mathrm{Y}^t_0\mathrm{T}_0+ \int_0^t\Lambda\big(\dot{\mathbf{Y}}^t_{\cdot}\mathbf{D},\mathrm{d}z\big),
\end{equation}
where $\big[\dot{\mathbf{Y}}^t_{\cdot}\mathbf{D}\big](z):={\mathrm{Y}}^t_{z}\mathbf{D}(z)$,
for the case when $\mathbf{D}(t)= \mathbf{L}(t){\mathrm{T}}_t$.  From this result we may generalize to the case when $\mathrm{T}_t$ is given as a linear combination $c_k\mathrm{T}^k_t$ and the perturbation derivative has the form $\mathbf{D}(t)=c_k\mathbf{L}^k(t)\mathrm{T}^k_t$.

\begin{lemma}
Let $\mathbf{K}$ be the  chronological exponent of a  two-parameter  family of shift co-cycles
$\{\mathbf{Y}^t_s\}$, which thus solves the underlying counting differential equation
\begin{equation}
\mathrm{d}\mathbf{Y}^{t}_0(\vartheta)=\mathbf{K}(t,\vartheta\setminus t)\mathbf{Y}^t_0(\vartheta)\mathrm{d}n_t(\vartheta), \quad\mathbf{Y}^0_0=\mathbf{I},\label{ab1}
\end{equation}
where $\mathbf{K}=\mathbf{S}-{\mathbf{I}}$. Then the solution $\mathbf{T}_t$ of the underlying counting differential equation
\begin{equation}
\mathrm{d}\mathbf{T}_{t}(\vartheta)=\Big(\mathbf{K}(t,\vartheta\setminus t)+\mathbf{L}(t,\vartheta\setminus t)\Big)\mathbf{T}_t(\vartheta)\mathrm{d}n_t(\vartheta),\label{ab2}
\end{equation}
with arbitrary initial condition $\mathbf{T}_0$, may be given  as
\begin{equation}
\mathbf{T}_t(\vartheta)=\mathbf{Y}^t_0(\vartheta)\mathbf{T}_0(\vartheta)+ \sum_{z\in\vartheta^t}\mathbf{Y}^t_z(\vartheta)\mathbf{L}(z,\vartheta\setminus z)\mathbf{T}_z(\vartheta).\label{ab3}
\end{equation}
\end{lemma}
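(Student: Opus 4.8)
The plan is to exploit that, fibrewise over a configuration $\vartheta=\{t_1<\cdots<t_n\}$, the counting differential equations are not genuine ODEs but finite recursions. Writing $\vartheta^{t}=\vartheta\cap[0,t)$ and $\mathbf{S}=\mathbf{I}+\mathbf{K}$: equation (\ref{ab1}) says that $\mathbf{Y}^{t}_0(\vartheta)$ is the ordered product $\overset{\leftarrow}{\prod}_{x\in\vartheta^{t}}\mathbf{S}(x,\vartheta\setminus x)$, the family $\{\mathbf{Y}^{t}_s\}$ is the induced shift co-cycle (\ref{hemig}) with $\mathbf{Y}^s_s=\mathbf{I}$ and $\mathbf{Y}^{t}_r=\mathbf{Y}^{t}_s\mathbf{Y}^{s}_r$, and equation (\ref{ab2}) says that $\mathbf{T}_t(\vartheta)$ is constant on each open gap of $[0,t)\setminus\vartheta$ and, at each $s\in\vartheta$, jumps by $\mathbf{T}_{s+\mathrm{d}s}(\vartheta)=\big(\mathbf{I}+\mathbf{K}(s,\vartheta\setminus s)+\mathbf{L}(s,\vartheta\setminus s)\big)\mathbf{T}_s(\vartheta)$. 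Because in (\ref{ab3}) the value $\mathbf{T}_z$ is only ever needed at source points $z$ with strictly fewer points of $\vartheta$ below them, the right-hand side of (\ref{ab3}) defines unambiguously a block-diagonal operator $\mathbf{R}_t(\vartheta)$, and the lemma amounts to the identity $\mathbf{R}_t(\vartheta)=\mathbf{T}_t(\vartheta)$; I would establish it by induction on $n(t):=|\vartheta^{t}|$. (Equivalently one checks directly that $\mathbf{R}$ obeys the initial condition and the jump law of (\ref{ab2}) and invokes uniqueness of the recursion, but since that jump law for $\mathbf{R}$ refers to $\mathbf{T}_z$ at the lower points, the induction is the cleanest bookkeeping.)

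For $n(t)=0$, that is, $t\le\min\vartheta$, the sum in (\ref{ab3}) is empty and $\mathbf{Y}^{t}_0(\vartheta)=\mathbf{I}$, so $\mathbf{R}_t(\vartheta)=\mathbf{T}_0(\vartheta)=\mathbf{T}_t(\vartheta)$. For the inductive step put $s:=\max\vartheta^{t}$, so $\vartheta^{t}=\vartheta^{s}\sqcup\{s\}$ and $n(s)=n(t)-1$. In $\mathbf{R}_t(\vartheta)$ split off the $z=s$ summand, and in the term $\mathbf{Y}^{t}_0(\vartheta)\mathbf{T}_0(\vartheta)$ and every summand with $z<s$ use the co-cycle factorisation $\mathbf{Y}^{t}_z(\vartheta)=\mathbf{S}(s,\vartheta\setminus s)\,\mathbf{Y}^{s}_z(\vartheta)$ — the forward propagator from $z$ to $t$ is the one from $z$ to $s$ followed by the sole remaining jump, at $s$ — whereas the source just created at $s$ has forward propagator equal to the identity, since between $s$ and $t$ no point of $\vartheta$ is met (the discrete analogue of $Y_z(z)D(z)=D(z)$ in the classical Duhamel formula). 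This gives
\[
\mathbf{R}_t(\vartheta)=\mathbf{S}(s,\vartheta\setminus s)\,\mathbf{R}_s(\vartheta)+\mathbf{L}(s,\vartheta\setminus s)\,\mathbf{T}_s(\vartheta),
\]
and by the induction hypothesis $\mathbf{R}_s(\vartheta)=\mathbf{T}_s(\vartheta)$, so the right-hand side equals $\big(\mathbf{S}(s,\vartheta\setminus s)+\mathbf{L}(s,\vartheta\setminus s)\big)\mathbf{T}_s(\vartheta)=\big(\mathbf{I}+\mathbf{K}(s,\vartheta\setminus s)+\mathbf{L}(s,\vartheta\setminus s)\big)\mathbf{T}_s(\vartheta)=\mathbf{T}_{s+\mathrm{d}s}(\vartheta)=\mathbf{T}_t(\vartheta)$, the last step because $\mathbf{T}$ is constant on $(s,t]$, which contains no point of $\vartheta$. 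This closes the induction and proves (\ref{ab3}).

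The step I expect to be the real obstacle — and the one worth stating carefully — is the bookkeeping of that boundary term. The newest point $s$ of $\vartheta^t$ enters $\mathbf{R}_t(\vartheta)$ in two roles: inside the homogeneous propagators $\mathbf{Y}^{t}_0$ and $\{\mathbf{Y}^{t}_z:z<s\}$, where it contributes precisely the factor $\mathbf{S}(s,\vartheta\setminus s)=\mathbf{I}+\mathbf{K}(s,\vartheta\setminus s)$, and as a freshly injected source $\mathbf{L}(s,\vartheta\setminus s)\mathbf{T}_s(\vartheta)$ whose own propagator is still the identity; one must see that these recombine into exactly the jump $\mathbf{I}+\mathbf{K}(s,\vartheta\setminus s)+\mathbf{L}(s,\vartheta\setminus s)$ demanded by (\ref{ab2}), with no spurious cross term such as $\mathbf{K}(s,\vartheta\setminus s)\mathbf{L}(s,\vartheta\setminus s)$ (which need not vanish in the quantum It\^o algebra). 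This is the discrete counterpart of the Leibniz boundary term $Y_t(t)D(t)\,\mathrm{d}t=D(t)\,\mathrm{d}t$ that makes the classical variation-of-constants formula work. Since for each fixed $\vartheta$ everything is a finite ordered product and a finite sum of block operators, no analytic estimates enter here; the subsequent assembly into operators on $\mathbb{H}=\mathfrak{h}\otimes\Gamma(\mathbb{K})$ and the projection $\boldsymbol{F}(\cdot)\boldsymbol{F}^{\star}$ onto $\mathcal{H}$, which converts (\ref{ab3}) into the QS Duhamel principle for (\ref{312}), then goes through routinely.
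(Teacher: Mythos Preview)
Your proof is correct and proceeds by a genuinely different route from the paper's. The paper argues combinatorially: it first iterates the Duhamel formula (\ref{ab3}) into a Dyson-type multiple sum (\ref{bb1}) over all subsets $\upsilon\subseteq\vartheta^t$ marking the $\mathbf{L}$-insertion points, then expands each homogeneous propagator $\mathbf{Y}^s_r$ as $\sum_{\varkappa\subseteq\vartheta^s_r}\mathbf{K}(x_{|\varkappa|})\cdots\mathbf{K}(x_1)$, and finally invokes the noncommutative binomial identity to recombine the resulting double sum over disjoint $\upsilon\sqcup\varkappa\subseteq\vartheta^t$ into $\prod_{x\in\vartheta^t}\big(\mathbf{I}+\mathbf{K}(x)+\mathbf{L}(x)\big)\mathbf{T}_0$, which is the chronological-product solution of (\ref{ab2}). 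Your induction on $|\vartheta^t|$, peeling off the most recent jump $s=\max\vartheta^t$ via the co-cycle factorisation $\mathbf{Y}^t_z=\mathbf{S}(s,\vartheta\setminus s)\mathbf{Y}^s_z$, is more elementary and sidesteps both the iterated expansion and the binomial recombination. Your careful isolation of the boundary term --- the freshly created source at $s$ carrying the identity as its forward propagator, so that no spurious $\mathbf{K}(s)\mathbf{L}(s)$ cross term appears --- is exactly the hinge on which both arguments turn; in the paper's proof the same point is encoded implicitly in the disjointness $\upsilon\sqcup\varkappa$ of the $\mathbf{L}$- and $\mathbf{K}$-index sets. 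What the paper's route buys in exchange for the extra combinatorics is the explicit iterated kernel (\ref{bb1}), a Dyson series for the perturbed propagator in terms of the unperturbed one and the perturbation $\mathbf{L}$, which is independently useful when one projects via $\boldsymbol{F}(\cdot)\boldsymbol{F}^\star$ in the QS Duhamel theorem that follows.
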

\begin{proof}
The solution of (\ref{ab3}) may be written as
\begin{equation}\label{bb1}
\mathbf{T}_t(\vartheta)=\sum_{\upsilon\subseteq\vartheta^t}\mathbf{L}^t(t_{|\upsilon|},\vartheta\setminus t_{|\upsilon|})\cdots\mathbf{L}^{t_3}(t_{2},\vartheta\setminus t_{2}) \mathbf{L}^{t_2}(t_{1},\vartheta\setminus t_{1})\mathbf{T}^{t_1}_0(\vartheta)
\end{equation}
with respect to the decomposition $\upsilon=t_{|\upsilon|}\sqcup\ldots\sqcup t_1$ of $\upsilon$ into its disjoint points, where
\begin{equation}\label{bb2}
\mathbf{L}^s(r,\vartheta\setminus r)=\mathbf{Y}^s_r(\vartheta)\mathbf{L}(r,\vartheta\setminus r), \quad \mathbf{T}^t_0:=\mathbf{Y}^t_0\mathbf{T}_0.
\end{equation}
The sum-integrand of (\ref{bb1}) is called the multiple-sum kernel of $\mathbf{T}_t$.
 Now notice that the co-cycles $\mathbf{Y}^s_r$ may be decomposed into a multiple-sum over the exponents $\mathbf{K}(t)$ so that
\begin{equation}\label{bb3}
\mathbf{Y}^s_r(\vartheta)=\sum_{\upsilon\subseteq\vartheta^s_r} \mathbf{K}(t_{|\upsilon|},\vartheta\setminus t_{|\upsilon|})\cdots \mathbf{K}(t_{2},\vartheta\setminus t_{2})\mathbf{K}(t_{1},\vartheta\setminus t_{1}),
\end{equation}
where $\vartheta^s_r=\vartheta\cap[r,s)$, $r<s$.  Thus if we substitute the expressions (\ref{bb2}) into (\ref{bb1}), and then expand (\ref{bb1}) with respect to multiple-sums (\ref{bb3}), we obtain an expression of the form $\mathbf{T}_t(\vartheta)=$
\begin{equation*}
\sum_{\upsilon\sqcup\varkappa\subseteq\vartheta^t} \mathbf{K}(x_{|\varkappa|},\vartheta\setminus x_{|\varkappa|})\mathbf{L}(t_{|\upsilon|},\vartheta\setminus t_{|\upsilon|})\cdots\mathbf{L}(t_{2},\vartheta\setminus t_{2})\mathbf{K}(x_{1},\vartheta\setminus x_{1}) \mathbf{L}(t_{1},\vartheta\setminus t_{1})\mathbf{T}_0(\vartheta),
\end{equation*}
where $\upsilon=t_{|\upsilon|}\sqcup\ldots\sqcup t_1$ and $\varkappa=x_{|\varkappa|}\sqcup\ldots\sqcup x_1$. All we do now is apply the Newton binomial formula to turn this expression into
\begin{equation*}
\mathbf{T}_t(\vartheta)=\sum_{\upsilon\subseteq\vartheta^t} \Big(\mathbf{K}(t_{|\upsilon|},\vartheta\setminus t_{|\upsilon|})+\mathbf{L}(t_{|\upsilon|},\vartheta\setminus t_{|\upsilon|})\Big)\cdots\Big(\mathbf{K}(t_{1},\vartheta\setminus t_{1})+ \mathbf{L}(t_{1},\vartheta\setminus t_{1})\Big)\mathbf{T}_0(\vartheta),
\end{equation*}
which is the solution of (\ref{ab2}).
\end{proof}
The following lemma was given in its original form by Belavkin in \cite{Be92b} as the corollary of a theorem.
\begin{lemma} \emph{(Belavkin)}  Let $\mathbf{D}$ be the representation of a QS derivative on the Minkowski Hilbert space $\mathbb{K}\otimes\mathbb{H}$, then
\begin{equation}
\boldsymbol{F}\boldsymbol{\nabla}^\star\mathbf{D}\boldsymbol{\nabla}\boldsymbol{F}^\star= \boldsymbol{\xi}^\star\boldsymbol{F}\mathbf{D}\boldsymbol{F}^\star\boldsymbol{\xi},
\end{equation}
where $\boldsymbol{\nabla}:\mathbb{H}\rightarrow\mathbb{K}\otimes\mathbb{H}$ is a point-derivative operator on the Minkowski-Hilbert space, and $\boldsymbol{\nabla}^\star$ its adjoint, defining the single-sum process
\begin{equation}
\big[\boldsymbol{\nabla}^\star\mathbf{D}\boldsymbol{\nabla}\big](\vartheta)=\sum_{x\in\vartheta}\mathbf{D}(x, \vartheta\setminus x).
\end{equation}
\end{lemma}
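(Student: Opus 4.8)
The plan is to prove the identity by evaluating both sides as operators on $\mathcal{H}$ and reducing the left-hand single sum over points in Minkowski--Fock space to the quantum stochastic single integral (\ref{QSI}) on $\mathcal{H}$. Since $\boldsymbol{\nabla},\boldsymbol{\nabla}^\star,\nabla,\nabla^\ast$ are continuous in the Fock scale of \cite{Be92b,Be91}, it suffices to check the equality on a total set of vectors $\psi\in\mathcal{H}$, evaluated pointwise in the chain variable $\vartheta$.

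The crux of the argument is an intertwining of the two point-derivatives through the temporal-spin embedding $\boldsymbol{F}^\star$, namely
\begin{equation*}
\boldsymbol{\nabla}\boldsymbol{F}^\star=\big(\mathbf{I}_{\mathbb{K}}\otimes\boldsymbol{F}^\star\big)\,\boldsymbol{\xi},\qquad\textrm{equivalently}\qquad\boldsymbol{F}\,\boldsymbol{\nabla}^\star=\boldsymbol{\xi}^\star\,\big(\mathbf{I}_{\mathbb{K}}\otimes\boldsymbol{F}\big),
\end{equation*}
where $\boldsymbol{\xi}^\star=(1,\nabla^\ast,0)$ is the row introduced in \cite{Brth} for (\ref{QSI}), $\boldsymbol{\xi}$ is its $\boldsymbol{\eta}$-pseudo-adjoint column (with $-$-component $0$, $\bullet$-component the point-derivative $\nabla$ on $\mathcal{F}$, and $+$-component the identity), and $\mathbf{I}_{\mathbb{K}}$ is the identity on the distinguished $\mathbb{K}$-fibre of $\mathbb{K}\otimes\mathbb{H}$ (resp.\ $\mathbb{K}\otimes\mathcal{H}$). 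To obtain it one applies $\boldsymbol{F}^\star$ to $\psi$, which under $\Gamma(\mathbb{K})\cong\Gamma(\mathcal{K})\otimes\Gamma(\Bbbk)$ plants the temporal spin $\xi$ of (\ref{xinoise}) (whose $-$-component is $0$ and $+$-component the unit) at every time point while leaving the genuine-noise data of $\psi$ in the $\bullet$-direction, and then singles out one point according to (\ref{pd}): the $\mathbb{K}$-fibre of the singled-out point inherits its $\pm$-components from the planted temporal spin while its $\bullet$-component is precisely $\nabla\psi$. Dually, the projection $\boldsymbol{F}$ contracts the temporal spin against $\xi^\star=(1,0,0)$, so that the $+$-direction is annihilated and the $-$-direction produces a Lebesgue integral by the $L^1$--$L^\infty$ realisation of $\Bbbk$, while the $\bullet$-direction — untouched by $\boldsymbol{F}$ — carries the resummation $\nabla^\ast$ coming from $\boldsymbol{\nabla}^\star$; together this is exactly the content of $\boldsymbol{\xi}^\star=(1,\nabla^\ast,0)$ and the $\int\!\mathrm{d}z$ in (\ref{QSI}).

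With the intertwining in hand, and since $\mathbf{D}$ acts only on the distinguished $\mathbb{K}$-fibre and on $\mathfrak{h}$, so that $\big(\mathbf{I}_{\mathbb{K}}\otimes\boldsymbol{F}\big)\mathbf{D}\big(\mathbf{I}_{\mathbb{K}}\otimes\boldsymbol{F}^\star\big)=\boldsymbol{F}\mathbf{D}\boldsymbol{F}^\star$ is the well-defined QS-derivative representation on $\mathbb{K}\otimes\mathcal{H}$ by the morphism property (\ref{morph}), one obtains
\begin{equation*}
\boldsymbol{F}\,\boldsymbol{\nabla}^\star\mathbf{D}\boldsymbol{\nabla}\,\boldsymbol{F}^\star=\boldsymbol{\xi}^\star\big(\mathbf{I}_{\mathbb{K}}\otimes\boldsymbol{F}\big)\,\mathbf{D}\,\big(\mathbf{I}_{\mathbb{K}}\otimes\boldsymbol{F}^\star\big)\boldsymbol{\xi}=\boldsymbol{\xi}^\star\,\boldsymbol{F}\mathbf{D}\boldsymbol{F}^\star\,\boldsymbol{\xi},
\end{equation*}
which is the claim. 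Expanding both sides in the external block form of $\mathbf{D}$ makes the ``single-sum process'' identification concrete: the left side is $\sum_{x\in\vartheta}\mathbf{D}(x,\vartheta\setminus x)$ by the very definition of $\boldsymbol{\nabla}^\star\mathbf{D}\boldsymbol{\nabla}$, while the right side, by (\ref{QSI}), equals $\nabla^\ast\big(\mathrm{D}^\bullet_\bullet\nabla+\mathrm{D}^\bullet_+\big)+\int\!\big(\mathrm{D}^-_\bullet\nabla+\mathrm{D}^-_+\big)(z)\,\mathrm{d}z$, and the two coincide.

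I expect the main obstacle to be making the intertwining identity rigorous within the Fock-scale rigging of \cite{Be92b,Be91}: one must show that inserting a point in $\Gamma(\mathbb{K})$ over the temporal-spin-embedded vector genuinely factorises into the point-derivative $\nabla$ on $\Gamma(\mathcal{K})$ in the $\bullet$-direction and multiplication by the $+$-component of $\xi$ in the temporal $\pm$-directions — unwinding $\Gamma(\mathbb{K})\cong\Gamma(\mathcal{K})\otimes\Gamma(\Bbbk)$ together with the normalisation $\|\xi^\otimes\|^2=\exp\{\xi^\star\xi\}=1$ compatibly with the actions of $\boldsymbol{\nabla}$ and $\boldsymbol{\nabla}^\star$ — and keeping careful track of the way the discrete resummation $\boldsymbol{\nabla}^\star$ degenerates, after the projection $\boldsymbol{F}$, to the Lebesgue integral of (\ref{QSI}) on the $-$-row of $\mathbf{D}$. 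Everything else reduces to the $3\times3$ block matrix computation with $\mathbf{D}$ and the vectors $\xi$, $\boldsymbol{\xi}$.
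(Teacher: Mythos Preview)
Your argument is correct and rests on the same key identity the paper itself relies on. The paper does not actually prove this lemma: it is stated without proof and attributed to Belavkin \cite{Be92b} (``the following lemma was given in its original form by Belavkin in \cite{Be92b} as the corollary of a theorem''). However, the intertwining relation you isolate,
\[
\boldsymbol{\nabla}\boldsymbol{F}^\star=\big(\mathbf{I}_{\mathbb{K}}\otimes\boldsymbol{F}^\star\big)\boldsymbol{\xi},
\]
is precisely what the paper invokes (in the abbreviated form $\boldsymbol{\nabla}\boldsymbol{F}^\star=\boldsymbol{F}^\star\boldsymbol{\xi}$) inside the proof of Theorem~2 immediately following the lemma. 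So your route --- verify the intertwining on product vectors via the canonical embedding $\boldsymbol{F}^\star\mathrm{k}^\otimes=(0,\mathrm{k},1)^{\star\otimes}$ from the appendix, then sandwich $\mathbf{D}$ --- is exactly the mechanism the paper takes for granted, and your honest flagging of the Fock-scale rigging as the place where care is needed is appropriate, since that is where \cite{Be92b,Be91} do the real analytic work.
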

\begin{theorem}
Let $\mathrm{Y}^t_0$ resolve the homogeneous quantum stochastic differential equation
\begin{equation}
\mathrm{d}\mathrm{Y}^{t}_0+\Lambda\big(\mathbf{K},\mathrm{d}t\big)\mathrm{Y}^t_0=0, \quad\mathrm{Y}^0_0=\mathrm{I},\label{1}
\end{equation}
having the chronological exponent $-\mathbf{K}$, then the perturbed quantum stochastic differential equation
\begin{equation}
\mathrm{d}\mathrm{T}_{t}+\Lambda\big(\mathbf{K},\mathrm{d}t\big)\mathrm{T}_t= \Lambda\big(\mathbf{J},\mathrm{d}t\big)\mathrm{T}_t,\label{2}
\end{equation}
has a  quantum stochastic Duhamel solution given as
\begin{equation}
\mathrm{T}_t=\mathrm{Y}^t_0\mathrm{T}_0+ \int^t_0\Lambda\big(\dot{\mathbf{Y}}^t_\cdot\mathbf{J}\dot{\mathbf{T}}_{\cdot},\mathrm{d}z\big),\label{3}
\end{equation}
where $\big[\dot{\mathbf{Y}}^t_\cdot\mathbf{J}\dot{\mathbf{T}}_{\cdot}\big](z): ={\mathrm{Y}}^t_z\mathbf{J}(z){\mathrm{T}}_{z}$.
\end{theorem}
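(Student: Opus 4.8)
The plan is to prove the theorem by reducing it, \emph{via} the block-diagonal Minkowski-Hilbert representation, to the purely combinatorial Duhamel identity of the first of the two Lemmas above, and then pushing the result down to $\mathcal{H}=\mathfrak{h}\otimes\mathcal{F}$ with the conditional expectation $\epsilon(\cdot)=\boldsymbol{F}\cdot\boldsymbol{F}^\star$, using Belavkin's Lemma to turn the single sum over the points of $\vartheta$ into the quantum stochastic single integral $\int_0^t\Lambda(\cdot,\mathrm{d}z)=\boldsymbol{\xi}^\star\cdot\boldsymbol{\xi}$. In short, (\ref{3}) is the $\epsilon$-image of (\ref{ab3}).

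First I would lift the equations to $\mathbb{H}=\mathfrak{h}\otimes\Gamma(\mathbb{K})$. By the correspondence (\ref{hom})--(\ref{genun}) between a homogeneous QSDE and its underlying counting equation, and by the theorem's own hypothesis that $\mathrm{Y}^t_0$ in (\ref{1}) has chronological exponent $-\mathbf{K}$, the family $\{\mathrm{Y}^t_s\}$ is represented by the shift co-cycle $\{\mathbf{Y}^t_s\}$ of (\ref{hemig}) on $\mathbb{H}$ with $\mathrm{d}\mathbf{Y}^t_0(\vartheta)=-\mathbf{K}(t,\vartheta\setminus t)\mathbf{Y}^t_0(\vartheta)\mathrm{d}n_t(\vartheta)$, $\mathbf{Y}^0_0=\mathbf{I}$, and $\mathrm{Y}^t_0=\boldsymbol{F}\mathbf{Y}^t_0\boldsymbol{F}^\star$; likewise (\ref{2}) is represented by the counting equation $\mathrm{d}\mathbf{T}_t(\vartheta)=\big(-\mathbf{K}(t,\vartheta\setminus t)+\mathbf{J}(t,\vartheta\setminus t)\big)\mathbf{T}_t(\vartheta)\mathrm{d}n_t(\vartheta)$ with $\mathrm{T}_t=\boldsymbol{F}\mathbf{T}_t\boldsymbol{F}^\star$. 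This is exactly the situation of the first Lemma, taken with chronological exponent $-\mathbf{K}$ and perturbation $\mathbf{J}$, so that (\ref{ab3}) yields
\[
\mathbf{T}_t(\vartheta)=\mathbf{Y}^t_0(\vartheta)\,\mathbf{T}_0(\vartheta)+\sum_{z\in\vartheta^t}\mathbf{Y}^t_z(\vartheta)\,\mathbf{J}(z,\vartheta\setminus z)\,\mathbf{T}_z(\vartheta),\qquad \vartheta^t=\vartheta\cap[0,t).
\]

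Next I would apply $\epsilon=\boldsymbol{F}\cdot\boldsymbol{F}^\star$ to this identity. The left-hand side is $\mathrm{T}_t$. First note that $\epsilon$ is a $\star$-homomorphism on $\mathbb{M}$: putting $\mathbf{Y}=\mathbf{I}$ in (\ref{morph}) gives $\epsilon(\mathbf{X}^\star)=\epsilon(\mathbf{X})^\ast$, whence $\epsilon(\mathbf{X}\mathbf{Y})=\epsilon((\mathbf{X}^\star)^\star\mathbf{Y})=\epsilon(\mathbf{X}^\star)^\ast\epsilon(\mathbf{Y})=\epsilon(\mathbf{X})\epsilon(\mathbf{Y})$. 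Hence the homogeneous term becomes $\epsilon(\mathbf{Y}^t_0)\epsilon(\mathbf{T}_0)=\mathrm{Y}^t_0\mathrm{T}_0$, and each block-diagonal summand $\mathbf{Y}^t_z(\vartheta)\mathbf{J}(z,\vartheta\setminus z)\mathbf{T}_z(\vartheta)$ has $\epsilon$-image $\mathrm{Y}^t_z\mathbf{J}(z)\mathrm{T}_z=[\dot{\mathbf{Y}}^t_\cdot\mathbf{J}\dot{\mathbf{T}}_\cdot](z)$. It remains to handle $\sum_{z\in\vartheta^t}$: writing $\mathbf{D}$ for the $\mathbb{K}\otimes\mathbb{H}$ operator whose value at $z$ is $\mathbf{Y}^t_z\,\mathbf{J}(z,\cdot)\,\mathbf{T}_z$, this sum is precisely the single-sum process $[\boldsymbol{\nabla}^\star\mathbf{D}\boldsymbol{\nabla}](\vartheta)$ of Belavkin's Lemma, so $\boldsymbol{F}\boldsymbol{\nabla}^\star\mathbf{D}\boldsymbol{\nabla}\boldsymbol{F}^\star=\boldsymbol{\xi}^\star\boldsymbol{F}\mathbf{D}\boldsymbol{F}^\star\boldsymbol{\xi}$, which by the definition (\ref{QSI}) of the quantum stochastic single integral (and the $\epsilon$-image of $\mathbf{D}$ just computed) equals $\int_0^t\Lambda\big(\dot{\mathbf{Y}}^t_\cdot\mathbf{J}\dot{\mathbf{T}}_\cdot,\mathrm{d}z\big)$. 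Assembling the two projected pieces reproduces (\ref{3}); evaluating at $t=0$ gives $\mathrm{T}_0=\mathrm{Y}^0_0\mathrm{T}_0$, the initial condition. Since (\ref{ab3}) characterises the unique solution of the lifted counting equation, (\ref{3}) is equivalent to (\ref{2}).

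The part I expect to be delicate is bookkeeping rather than estimation: one must keep straight the sign and role interchange between the conventions of the first Lemma and those of the theorem, and, more importantly, carry the time restriction $\vartheta^t=\vartheta\cap[0,t)$ through the projection so that the resulting object is genuinely $\int_0^t$ and not a sum over all of $\vartheta$ --- that is, the adaptedness of $\dot{\mathbf{Y}}^t_\cdot\mathbf{J}\dot{\mathbf{T}}_\cdot$ at the point $z$, which is what lets $\boldsymbol{\nabla}^\star$ (equivalently $\nabla^\ast$ inside $\boldsymbol{\xi}^\star$) reproduce exactly the integral with upper limit $t$. One must also check that $\mathbf{D}$ is admissible as ``the representation of a QS derivative'' so that Belavkin's Lemma applies; this is the only point with analytic content and is governed by the Fock-scale framework of \cite{Be92b,Be91}, under boundedness assumptions on $\mathbf{K}$ and $\mathbf{J}$ like those imposed in the Proposition above. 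Heuristically one could instead differentiate (\ref{3}) in $t$ --- the endpoint $z=t$ contributes $\Lambda(\mathbf{J},\mathrm{d}t)\mathrm{T}_t$ since $\mathrm{Y}^t_t=\mathrm{I}$, and the $t$-dependence of $\mathrm{Y}^t_z$ contributes $-\Lambda(\mathbf{K},\mathrm{d}t)$ applied to the whole of $\mathrm{T}_t$, giving (\ref{2}) --- but making this rigorous requires a quantum stochastic Leibniz rule for parameter-dependent integrands, which is exactly what the passage through the block-diagonal representation circumvents, and is why the two Lemmas are the natural tools.
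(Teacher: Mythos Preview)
Your proposal is correct and follows essentially the same route as the paper: lift (\ref{1}) and (\ref{2}) to their underlying counting equations on $\mathbb{H}$, invoke the first Lemma to obtain (\ref{ab3}), and then project back to $\mathcal{H}$ via $\boldsymbol{F}\cdot\boldsymbol{F}^\star$, using Belavkin's Lemma (in the form $\boldsymbol{\nabla}\boldsymbol{F}^\star=\boldsymbol{F}^\star\boldsymbol{\xi}$) to convert the single sum into the QS single integral. The paper's proof is terser and keeps the notational distinction $\mathbf{J}:=\boldsymbol{F}\mathbf{L}\boldsymbol{F}^\star$ between the underlying perturbation $\mathbf{L}$ and its projected kernel $\mathbf{J}$, but the architecture is identical to yours; your extra remarks on the sign convention, the $[0,t)$-restriction, and the Fock-scale admissibility of $\mathbf{D}$ are exactly the bookkeeping the paper leaves implicit.
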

\begin{proof}
The operators $\mathrm{Y}_s^t$ are given as the projections $\boldsymbol{F}\mathbf{Y}_s^t\boldsymbol{F}^\star$ of the underlying block-diagonal operators $\mathbf{Y}_s^t(\vartheta)=\overleftarrow{\prod}_{x\in\vartheta\cap[s,t)}\mathbf{S}(x,\vartheta\setminus x)$ and similarly $\mathrm{T}_t=\boldsymbol{F}\mathbf{G}^\odot_t\mathbf{T}_0\boldsymbol{F}^\star= \boldsymbol{F}\mathbf{G}^\odot_t\boldsymbol{F}^\star\mathrm{T}_0$, where $\mathbf{G}=\mathbf{I}+\mathbf{J}-\mathbf{K}$, and may also be obtained as the projection of the underlying operator $\mathbf{T}_t$ given by (\ref{ab3}) as
\begin{equation}
\mathrm{T}_t=\boldsymbol{F}\mathbf{Y}^t_0\mathbf{T}_0 \boldsymbol{F}^\star+ \boldsymbol{F}\boldsymbol{\nabla}_t^\star \mathbf{Y}^t_\cdot\mathbf{L}\mathbf{T}_\cdot\boldsymbol{\nabla}_t\boldsymbol{F}^\star,
\end{equation}
where $\boldsymbol{F}\mathbf{Y}^t_0\mathbf{T}_0 \boldsymbol{F}^\star=\mathrm{Y}^t_0\mathrm{T}_0$ and $\big[\mathbf{Y}^t_\cdot\mathbf{L}\mathbf{T}_\cdot\big](z,\vartheta\setminus z)= \mathbf{Y}^t_z(\vartheta)\mathbf{L}(z,\vartheta\setminus z)\mathbf{T}_z(\vartheta)$, and since we may write $\boldsymbol{\nabla}\boldsymbol{F}^\star=\boldsymbol{F}^\star\boldsymbol{\xi}$  we obtain our result on noting that $\boldsymbol{F}\mathbf{Y}^t_z\mathbf{L}(z)\mathbf{T}_z \boldsymbol{F}^\star= {\mathrm{Y}}^t_z\mathbf{J}(z)\mathrm{T}_z$ where $\mathbf{J}:=\boldsymbol{F}\mathbf{L}\boldsymbol{F}^\star$.
\end{proof}

\begin{corollary}
Let $\{\mathrm{Y}^s_r\}$ by given by a semi-tensor product two-parameter family of shift co-cycles $\mathbf{Y}^s_r$ on $\mathbb{H}$ having the chronological exponent $-\mathbf{K}$, and let $\mathrm{T}_{t}=\mathrm{T}^t_0\mathrm{T}_0$ be given by another two parameter family of semi-tensor product shift co-cycles $\{\mathbf{T}^s_r\}$ having the  chronological exponent $\mathbf{J}=\boldsymbol{F}\mathbf{L}\boldsymbol{F}^\star$,
then by the QS Duhamel principle we may write the solution $\mathrm{T}_t=\boldsymbol{F}\mathbf{T}_t\boldsymbol{F}^\star$ of the  QSDE
\begin{equation}
\mathrm{d}\mathrm{T}_t+\Lambda\big(\mathbf{K},\mathrm{d}t\big)\mathrm{T}_t= \Lambda\big(\mathbf{J},\mathrm{d}t\big)\mathrm{T}_t,
\end{equation}
with initial condition $\mathrm{T}_0=\boldsymbol{F}\mathbf{T}_0\boldsymbol{F}^\star$, as the projection
\begin{equation}
\mathrm{T}_t=\mathrm{Y}^t_0\mathrm{T}_0+ \boldsymbol{\xi}_t^\star\big(\dot{\mathbf{Y}}^t_\cdot\mathbf{J}\dot{\mathbf{T}}_{\cdot}\big)\boldsymbol{\xi}_t
\end{equation}
of the underlying single-sum process
\begin{equation}
\mathbf{T}_{t}(\vartheta)=\mathbf{Y}_0^t(\vartheta)\mathbf{T}_0(\vartheta) +\sum_{z\in\vartheta^t}\mathbf{Y}^t_z(\vartheta_z)\odot\mathbf{L}(z)\odot \mathbf{T}_{z}(\vartheta^z).
\end{equation}
where  $\vartheta^z=\vartheta\cap[0,z)$ and $\vartheta_z=\vartheta\cap(z,t)$.
\end{corollary}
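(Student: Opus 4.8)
The plan is to read off this Corollary from the QS Duhamel principle (Theorem~2) once the reference co-cycles $\{\mathbf{Y}^s_r\}$ and the perturbed co-cycles $\{\mathbf{T}^s_r\}$ are both taken to be \emph{adapted} semi-tensor products; the only genuinely new point is that for such families the underlying single-sum process of (\ref{ab3}) collapses to the chronologically factored form displayed in the statement. First I would invoke the first Lemma: the underlying counting differential equation (\ref{ab2}) governing $\mathbf{T}_t$ has the single-sum solution (\ref{ab3}),
\[
\mathbf{T}_t(\vartheta)=\mathbf{Y}^t_0(\vartheta)\mathbf{T}_0(\vartheta)+\sum_{z\in\vartheta^t}\mathbf{Y}^t_z(\vartheta)\,\mathbf{L}(z,\vartheta\setminus z)\,\mathbf{T}_z(\vartheta),
\]
with $\mathbf{L}$ the perturbation whose projection is $\mathbf{J}=\boldsymbol{F}\mathbf{L}\boldsymbol{F}^\star$, and this is precisely the underlying object whose $\boldsymbol{F}$-projection Theorem~2 identifies with the Duhamel solution of the perturbed QSDE.

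Next I would bring in the semi-tensor hypothesis. Since both families factorise at the level of generators, $\mathbf{G}(x,\vartheta)=\mathbf{G}(x)\otimes\mathbf{I}(\vartheta)$ as in the discussion preceding (\ref{318}), the co-cycle $\mathbf{Y}^t_z$ acts as the identity on all points of $\vartheta$ below $z$ and above $t$, so it is determined by its restriction to $\vartheta\cap(z,t)$; the prior dynamics $\mathbf{T}_z=\mathbf{T}^z_0\mathbf{T}_0$ acts non-trivially only on $\vartheta\cap[0,z)$; and $\mathbf{L}(z,\vartheta\setminus z)$ is supported at the single branch point $z$. Partitioning the ordered set as $\vartheta=\vartheta^z\sqcup\{z\}\sqcup\vartheta_z$ with $\vartheta^z=\vartheta\cap[0,z)$ and $\vartheta_z=\vartheta\cap(z,t)$, and using the definition (\ref{113}) of the chronology-preserving semi-tensor product, each summand can then be rewritten as $\mathbf{Y}^t_z(\vartheta_z)\odot\mathbf{L}(z)\odot\mathbf{T}_z(\vartheta^z)$, which gives the asserted underlying single-sum process. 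I expect the main obstacle to lie exactly here: one must account carefully for what happens \emph{at} the branch point $z$ --- reconciling the jump of $\mathbf{Y}^t_z$ at its own left endpoint with the impulse $\mathbf{L}(z)$ and with the fact that $\mathbf{T}_z$ terminates just before $z$ --- so that the three $\odot$-factors sit on genuinely disjoint, correctly ordered time-zones with no double counting and no gap. Tracking the half-open interval conventions consistently through this identification is the delicate bookkeeping step.

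Finally I would project into $\mathcal{H}$ verbatim as in the proof of Theorem~2. Writing $\boldsymbol{\nabla}\boldsymbol{F}^\star=\boldsymbol{F}^\star\boldsymbol{\xi}$, the single-sum solution projects to
\[
\mathrm{T}_t=\boldsymbol{F}\mathbf{Y}^t_0\mathbf{T}_0\boldsymbol{F}^\star+\boldsymbol{F}\boldsymbol{\nabla}^\star_t\big(\mathbf{Y}^t_\cdot\mathbf{L}\mathbf{T}_\cdot\big)\boldsymbol{\nabla}_t\boldsymbol{F}^\star,
\]
and the Belavkin Lemma converts the second term into $\boldsymbol{\xi}^\star_t\,\boldsymbol{F}\big(\mathbf{Y}^t_\cdot\mathbf{L}\mathbf{T}_\cdot\big)\boldsymbol{F}^\star\,\boldsymbol{\xi}_t$; the morphism property (\ref{morph}), together with $\mathbf{J}=\boldsymbol{F}\mathbf{L}\boldsymbol{F}^\star$, $\mathrm{Y}^t_z=\boldsymbol{F}\mathbf{Y}^t_z\boldsymbol{F}^\star$ and $\mathrm{T}_z=\boldsymbol{F}\mathbf{T}_z\boldsymbol{F}^\star$, then gives $\boldsymbol{F}\mathbf{Y}^t_z\mathbf{L}(z)\mathbf{T}_z\boldsymbol{F}^\star={\mathrm{Y}}^t_z\mathbf{J}(z)\mathrm{T}_z$ --- here the semi-tensor/adapted structure is exactly what makes $\boldsymbol{F}(\cdot)\boldsymbol{F}^\star$ multiplicative across the three factors, since the intermediate $\boldsymbol{F}^\star\boldsymbol{F}=\mathbf{E}$ acts as the identity between operators living on disjoint time-zones. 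Recognising $\boldsymbol{\xi}^\star_t(\dot{\mathbf{Y}}^t_\cdot\mathbf{J}\dot{\mathbf{T}}_\cdot)\boldsymbol{\xi}_t$ as the quantum stochastic single integral $\int_0^t\Lambda(\dot{\mathbf{Y}}^t_\cdot\mathbf{J}\dot{\mathbf{T}}_\cdot,\mathrm{d}z)$ of (\ref{QSI}), with integrand $z\mapsto\mathrm{Y}^t_z\mathbf{J}(z)\mathrm{T}_z$, then yields the stated formula, and the fact that this projection is the solution of the given QSDE with initial condition $\mathrm{T}_0=\boldsymbol{F}\mathbf{T}_0\boldsymbol{F}^\star$ is inherited from Theorem~2.
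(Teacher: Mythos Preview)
Your proposal is correct and follows exactly the route the paper intends: the Corollary is stated in the paper without a separate proof, being presented as an immediate specialization of Theorem~2 (and the preceding Lemma) to the adapted semi-tensor product case. Your three steps --- invoking (\ref{ab3}), using the semi-tensor hypothesis $\mathbf{G}(x,\vartheta)=\mathbf{G}(x)\otimes\mathbf{I}(\vartheta)$ to collapse each summand into the $\odot$-factored form over the disjoint zones $\vartheta^z\sqcup\{z\}\sqcup\vartheta_z$, and then projecting via $\boldsymbol{F}$ exactly as in the proof of Theorem~2 with the identification $\int_0^t\Lambda(\mathbf{D},\mathrm{d}z)=\boldsymbol{\xi}_t^\star\mathbf{D}\boldsymbol{\xi}_t$ --- are precisely what the word ``Corollary'' is asking the reader to supply.
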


\section{Quantum Measurement}
It is all well and good to discuss the nature of a quantum system but it is frequently misunderstood that the manner in which a quantum system should be handled is with specific reference to the apparatus which is handling it. The apparatus, as a mathematical object in it own right, is not to be disregarded from the underlying equations of physics. An extensive review of such matters is presented in \cite{Be02} and some of these basic principles are discussed in the Following section.
\subsection{Basic Principles of Measurement} \cite{Be02}
In quantum physics, the preparation of a quantum state may be given by the decoherence of a pure quantum state $P_\psi=|\psi\rangle\langle\psi|$, $\psi\in\mathfrak{h}$, given by the map
\begin{equation}
P_\psi\mapsto\sum_iE_iP_\psi E_i^\dag,\qquad \sum_iE_i^\dag E_i^{}=I\label{21}
\end{equation}
and this may be understood as the conditioning of the quantum system by a complete set of questions that are \emph{compatible} with the apparatus.  That is of course the apparatus which has been used to prepare the quantum state.
Following the preparation of the quantum state, into a decomposition over the potential measurement results, an actual measurement is taken. For example, an actual measurement $i=k$ is taken, at which point the quantum state is assumed to be in the state
\begin{equation}
P_{E_k\psi}=E_k(\psi)P_\psi E_k(\psi)^\dag,\qquad E_k(\psi):=\frac{E_k}{\|E_k\psi\|},
\end{equation}
which is the statistical inference that arises from the measurement.

As a statistical object the apparatus may also be constructed in the Hilbert space formalism. The apparatus is prepared  in a state $|0\rangle\langle0|$ say, $|0\rangle\in\mathfrak{k}$, and in this example it is considered to have a discrete set of outcomes $i\in\bullet=\{0,1,2,\ldots\}$ corresponding to states $|i\rangle\langle i|$ which are the observables of the apparatus. These observables are considered to be the only possible outcomes of the measurement,
and before the measurement is taken we consider the sum of all potential outcomes of the measurement. Each potential outcome is a  conditioning of the quantum object $E_iP_\psi E_i^\dag$ by a positive contraction $E_i$, such that $E_i^\dag E_i^{}\leq I$, corresponding to a potential measurement result, and each such potential result corresponds to a degree of freedom of the apparatus, a 1-dimensional subspace of $\mathfrak{k}$. This is what is understood from
 (\ref{21}): the preparation of the state of the quantum system \emph{prior} to an actual measurement.

The actual measurement is an observation of the apparatus, which is described by the  observable $|k\rangle\langle k|$ if the outcome  $k$ is observed, then at the instant of the measurement we infer that the quantum system is now in the state $P_{E_k\psi}$. Indeed this is considered to be the state of the quantum system as a result of actually taking a measurement.

It is only left to understand how the apparatus is explicitly described in the form of equations, which will ultimately provide a mechanism that explains the rise of decoherence and the transition from possible outcomes to an actual outcome. To this end we consider a bigger system $\mathfrak{h}\otimes\mathfrak{k}$ consisting of both the quantum system under measurement and the  classical apparatus having observables described by a diagonal operator algebra on the apparatus Hilbert space $\mathfrak{k}$. We initially consider a quantum system in a pure state $P_\psi$ composed with a prepared apparatus state $|0\rangle\langle 0|$. The separable compound state is given with respect to tensor product as $P_\psi\otimes|0\rangle\langle 0|$.
In this bigger system the measurement preparation is given by a unitary  operator $\mathrm{G}$ which generates an entangled object-apparatus pure state $\mathrm{G}\big(P_\psi\otimes|0\rangle\langle 0|\big)\mathrm{G}^\ast$. This should be obvious on noting that $P_\psi=\psi\psi^\dag$ from which we immediately see that the entangled pure state assumes the form $\chi\chi^\ast$ where $\chi=\mathrm{G}\big(\psi\otimes|0\rangle\big)$ in $\mathfrak{h}\otimes\mathfrak{k}$.

The row vector $\langle0|$ may assume the form $(1,0,0,\ldots)$ and we may derive the form of $\mathrm{G}$ as
\begin{equation}
\mathrm{G}=\left[
             \begin{array}{cc}
               (I-\mathrm{F}^\ast\mathrm{F})^{1/2} & \mathrm{F}^\ast\\
               \mathrm{F} & -(\mathrm{I}-\mathrm{F}\mathrm{F}^\ast)^{1/2} \\
             \end{array}
           \right]\label{G}
\end{equation}
where $\mathrm{F}^\ast=\sum_i\langle i|E^\dag_i$ is a row vector of the operators $E^\dag_i$ on the quantum object over $i=1,2,\ldots$, having co-dimension 1 in  the apparatus space $\mathfrak{k}$. Also note that $\mathrm{F}\mathrm{F}^\ast$ is a matrix $\big[E_iE_j^\dag\big]$ of operators, whilst $\mathrm{F}^\ast\mathrm{F}$ is in $\mathcal{B}(\mathfrak{h})$, and $(I-\mathrm{F}^\ast\mathrm{F})^{1/2}=E_0$.

The next thing to notice is that when we trace over the apparatus Hilbert space, to find the marginal density of the quantum system resulting from this entanglement interaction, we see that it is indeed the decoherence of the quantum state $P_\psi$. So what we are saying is that in the total system consisting of both the apparatus and the quantum object we obtain the decoherence of the quantum object as the marginal state of the joint system. That is
\begin{equation}
\textrm{Tr}_{\mathfrak{k}}\big[\mathrm{G}\big(P_\psi\otimes|0\rangle\langle 0|\big)\mathrm{G}^\ast\big]= \sum_iE_iP_\psi E_i^\dag.\end{equation}
As Belavkin so eloquently states
\begin{quote}
\emph{The quantum state decoherence is simply obtained in the ignorance of the apparatus.}
\end{quote}
The procedure of reading the apparatus result is given mathematically by the action of the orthoprojector $|k\rangle\langle k|$ which corresponds to the reality: outcome $k$ of the apparatus is observed. Following this, the entangled pure state is operated on by the projector $I\otimes |k\rangle\langle k|$ giving us
\begin{equation}
\big(I\otimes |k\rangle\langle k|\big)\mathrm{G}\big(P_\psi\otimes|0\rangle\langle 0|\big)\mathrm{G}^\ast \big(I\otimes |k\rangle\langle k|\big)=E_kP_\psi E_k^\dag\otimes |k\rangle\langle k|,
\end{equation}
such that we observe the apparatus reading $k$ and understand that the quantum object has been conditioned by such an observation. Then  re-normalization gives us the result $P_{E_k\psi}\otimes|\mathrm{k}\rangle\langle k|$. This is the non-linear quantum filtering on the level of a single measurement; note that $P_{E_k\psi}$ is a non-linear function of $\psi$ due to the re-normalization.

To finish off this brief introduction to quantum measurement and filtering we shall explain what is required of the apparatus and quantum object in order to make this inference possible. In other words, what is required in order to apply what is essentially Bayes formula.

Consider an arbitrary question, or proposition, about the future of the total system; this may be given by a future-event orthoprojector $\mathrm{P}$. This question must be compatible with the historical data, which in this case is the single actual-event given by an orthoprojector $\mathrm{M}$, the memory. The reason why we insist on such compatibility of potential future events with the actual past events  is because we must insist that the quantum probability of the proposition $\mathrm{P}$ may be given as the weighted sum
\begin{equation}
\textrm{Pr}\big[\mathrm{P}\big]=\textrm{Pr}\big[\mathrm{P}|\mathrm{M}\big]\textrm{Pr}\big[\mathrm{M}\big]+ \textrm{Pr}\big[\mathrm{P}|\mathrm{I}-\mathrm{M}\big]\textrm{Pr}\big[\mathrm{I}-\mathrm{M}\big],\label{22}
\end{equation}
of the conditional probability $\textrm{Pr}\big[\mathrm{P}|\mathrm{M}\big]$ and the complimentary conditional probability $\textrm{Pr}\big[\mathrm{P}|\mathrm{I}-\mathrm{M}\big]$. \emph{This means that we only consider the measurable events $\mathrm{M}$ which serve as conditions for any future propositions} $\mathrm{P}$. If (\ref{22}) is not satisfied then, as Belavkin states, we do not have statistical causality. Ultimately this amounts to Belavkin's \emph{Non-demolition Principle} of quantum measurement, a principle of quantum causality \cite{Be94}.
\begin{remark}
The Bayesian conditional probability formula is
\emph{\begin{equation}
\textrm{Pr}\big[\mathrm{P}|\mathrm{M}\big]=\frac{\textrm{Pr}\big[\inf\{\mathrm{P},\mathrm{M}\}\big]} {\textrm{Pr}\big[\mathrm{M}\big]}\quad \forall\; \mathrm{P},
\end{equation}}
where $\inf\{\mathrm{P},\mathrm{M}\}$ is the largest orthoprojector $\mathrm{E}$ such that both $\mathrm{P}-\mathrm{E}$ and $\mathrm{M}-\mathrm{E}$ are positive operators. The condition (\ref{22}) means that $\inf\{\mathrm{P},\mathrm{M}\}$ is a linear function in the second variable, so that
 \begin{equation}\inf\{\mathrm{P},\mathrm{M}\}+ \inf\{\mathrm{P},\mathrm{I}-\mathrm{M}\}=\inf\{\mathrm{P},\mathrm{I}\}=\mathrm{P},
 \end{equation}
 which is only satisfied if $[\mathrm{P},\mathrm{M}]:=\mathrm{PM}-\mathrm{MP}=0$.
\end{remark}
 \subsection{A Sequential Measurement Process}
 Now we shall consider a sequence of such measurements described above. This begins with an entangling dynamics $\mathrm{U}_t:\psi_0\mapsto\psi_t$ generated by the $\star$-unitary interaction operator
\begin{equation}
\mathbf{G}=\left[
             \begin{array}{ccc}
               I & 0 & -\mathrm{i}H \\
               0 & \mathrm{G} & 0 \\
               0 & 0 & I \\
             \end{array}
           \right]\label{inG}
\end{equation}
such that $\mathrm{U}_t=\boldsymbol{F}\mathbf{G}_t^{\odot}\boldsymbol{F}^\star$, having the explicit block diagonal form
\begin{equation}
 \mathrm{U}_t(\vartheta)=\exp\{-\mathrm{i}Ht\}\underset{{x\in\vartheta^t}}{\overset{\leftarrow}{\odot}} \mathrm{G}(x),
 \end{equation}
  where
  $\mathrm{G}(x):=\exp\{\mathrm{i}Hx\}\mathrm{G}\exp\{-\mathrm{i}Hx\}$, describing a process of spontaneous object-apparatus interactions with a Hamiltonian evolution of the object between interactions. The initial state of the interaction dynamics is taken to be $\psi_0=\psi\otimes\mathrm{f}^\otimes$, where $\mathrm{f}=|0\rangle$, which is normalized with respect to the Poisson measure $\mathrm{d}\texttt{P}^t_\nu(\vartheta)=\nu^\otimes(\vartheta)\mathrm{d}\vartheta\exp\{-\nu t\}$ as
  \begin{equation}
  \|\mathrm{f}^\otimes\|^2_\nu=\int\|\mathrm{f}^\otimes(\vartheta)\|^2\nu^\otimes(\vartheta) \mathrm{d}\texttt{P}^t_\nu(\vartheta)=1.
  \end{equation}
 The Poisson intensity $\nu$ is considered to be a positive scalar and describes the object-apparatus interaction frequency, and this formalism is called the Guichardet-Fock space representation of Poisson space, denoted by $\mathcal{G}_\nu^t$ (see appendix for more details). The QSDE describing this interaction dynamics is a counting differential equation having the form
    \begin{equation}
 \mathrm{d}\psi_t(\vartheta)+\mathrm{i}H\psi_t(\vartheta)\mathrm{d}t= \mathrm{J}\psi_t(\vartheta)\mathrm{d}n_t(\vartheta),\label{49b}
 \end{equation}
 where $\mathrm{J}=\mathrm{G}-\mathrm{I}$.

 The same dynamics may be considered on the Hilbert space $\mathcal{H}^t$ with respect to the initial state $\psi_0=\psi\otimes\mathrm{g}^\otimes\exp\{-\frac{1}{2}\mathrm{g}^\ast\mathrm{g}\}$, where $\mathrm{g}=\sqrt{\nu}\mathrm{f}$. This is in fact a Weyl transformation of the Fock-vacuum: $\mathrm{g}^\otimes\exp\{-\frac{1}{2}\mathrm{g}^\ast\mathrm{g}\}=\mathrm{W}_\mathrm{g}\delta_\emptyset$ (see appendix), which means that we can reconsider the dynamics on the Fock-vacuum if we transform the propagator $\mathrm{U}_t$ into $\mathrm{W}_\mathrm{g}^\ast\mathrm{U}_t\mathrm{W}_g$, having the underlying generator
 \begin{equation}
 \mathbf{Z}^\star_\mathrm{g}\mathbf{G}\mathbf{Z}_\mathrm{g}=\left[
                                                              \begin{array}{ccc}
                                                                I & \mathrm{g}^\ast\mathrm{J} & \mathrm{g}^\ast\mathrm{J}\mathrm{g}-iH \\
                                                                0 & \mathrm{G} & \mathrm{J}\mathrm{g} \\
                                                                0 & 0 & I \\
                                                              \end{array}
                                                            \right],\label{poissonweyl}
 \end{equation}
 where $\mathrm{W}_\mathrm{g}=\boldsymbol{F}\mathbf{Z}_\mathrm{g}^\otimes\boldsymbol{F}^\star$ and $\mathbf{Z}_\mathrm{g}$ is given by (\ref{Z}).
 If we define the component operators $G^k_0=\langle k|\mathrm{G}|0\rangle$, $k\neq0$, as $L^k/\sqrt{\nu}$, where the operators $L^k$ are not depending on $\nu$, then we find that in the central limit $\sqrt{\nu}\nearrow\infty$ the wave-function that was resolving (\ref{49b}) now resolves the quantum state diffusion equation
   \begin{equation}
 \mathrm{d}\psi_t+\big(\tfrac{1}{2}\mathrm{L}^\ast\mathrm{L}+\mathrm{i}H\big)\psi_t\mathrm{d}t= L^k\mathrm{d}w_k^t\psi_t,\quad \psi_0=\psi\otimes\delta_\emptyset\label{49c}
 \end{equation}
 \cite{Be93,BelM96}, and note that $\mathrm{L}=\sum_{k\neq0}L^k|k\rangle$ does not include the initial state of the apparatus so that all output trajectories are transformations into new apparatus states.

 What we would like to do next is return to (\ref{49b}) and consider each object-apparatus interaction as a map from an input space $\mathfrak{h}\otimes\check{\mathfrak{k}}_\nu$ into an output space $\mathfrak{h}\otimes\hat{\mathfrak{k}}_\nu$, where $\check{\mathfrak{k}}_\nu$ is the Hilbert space $\mathfrak{k}$ equipped with the scalar metric $\check{\nu}=\nu\mathrm{I}$ and $\hat{\mathfrak{k}}_\nu$ is the Hilbert space $\mathfrak{k}$ equipped with the diagonal Riemannian metric
 \begin{equation}
\hat{\nu}_\psi(t)=\left[
            \begin{array}{ccc}
              \nu_0(t) & 0 & \cdots \\
              0 & \nu_1(t) & {} \\
              \vdots & {} & \ddots \\
            \end{array}
          \right],\label{curve}
\end{equation}
where $\nu_k(t)$ are the individual intensities of the observable output trajectories coupled to the quantum system.
They are given by re-scalings of the input frequency $\nu$ by the probabilities $\|G^k_0\psi_t\|^2$ of the operations $G^k_0\psi_t$ so that $\nu_k(t)=\nu\|G^k_0\psi_t\|^2$ from which it follows that $\sum_k\nu_k(t)=\nu$.

Let's denote by  $\mathrm{G}(\psi)$ the interaction operator mapping $\mathfrak{h}\otimes\check{\mathfrak{k}}_\nu$ into $\mathfrak{h}\otimes\hat{\mathfrak{k}}_\nu$. It is obtained from $\mathrm{G}$ by the transformation $\mathrm{G}(\psi)=\sqrt{\hat{\nu}}^{-1}\mathrm{G}\sqrt{\check{\nu}}$
and satisfies the unusual unitarity condition
\begin{equation}
\check{\nu}^{-1}\mathrm{G}(\psi)^\ast\hat{\nu}=\mathrm{G}(\psi)^{-1}.
\end{equation}
What we also find when considering the interaction operator $\mathrm{G}(\psi)$ is that its action on the initial apparatus state $\mathrm{f}=|0\rangle$ gives rise to the conditioned operators $G^k_0/\|G^k_0\psi_t\|=\langle k|\mathrm{G}(\psi)|0\rangle$, so that the transformation from $\mathfrak{h}\otimes\check{\mathfrak{k}}_\nu$ to $\mathfrak{h}\otimes\hat{\mathfrak{k}}_\nu$ describes the filtering. In order to consider the QSDE that describes this dynamics one should first notice that, in this case, $\psi_0$ is in $\check{\mathcal{G}}_\nu$ whereas $\hat{\psi}_t=\mathrm{U}_t(\psi)\psi_0$ is in the product space $\hat{\mathcal{G}}_\nu^t\otimes\check{\mathcal{F}}_\nu$; indeed $\psi_t=\psi^t\otimes\mathrm{f}^\otimes$.
We may write down the QSDE for the output dynamics as
  \begin{equation}
 \mathrm{d}\hat{\psi}_t(\vartheta)+\mathrm{i}H\hat{\psi}_t(\vartheta)\mathrm{d}t= \mathrm{J}(\psi)\hat{\psi}_t(\vartheta)\mathrm{d}n_t(\vartheta)\label{outeq}
 \end{equation}
 where $\mathrm{J}(\psi)=\mathrm{G}(\psi)-\sqrt{\hat{\nu}}^{-1}\sqrt{\hat{\nu}}$.

 The actual measurements corresponding to observations of the output trajectories of the unitary counting dynamics (\ref{49b}) may be given by the measurement orthoprojector
 \begin{equation}
\mathrm{M}_t(\vartheta)=I\otimes\mathrm{M}^\otimes(\vartheta^t)\otimes\mathrm{I}^\otimes(\vartheta\setminus\vartheta^t),
\end{equation}
where $\mathrm{M}(t)\in\{|k\rangle\langle k|,k=0,1,2,\ldots\}$ and depends solely on what is actually obsserved.
 This defines a new wave-vector $\chi_t:=\mathrm{M}_t\psi_t$ for the actual measurement dynamics, which is not unitary, and it resolves the QSDE
\begin{equation}
\mathrm{d}\chi_t(\vartheta)+\mathrm{i}H\chi_t(\vartheta)\mathrm{d}t= (\mathrm{M}(t)\mathrm{G}-\mathrm{I})\chi_t(\vartheta)\mathrm{d}n_t(\vartheta).\label{Min}
\end{equation}
The action of the observable $\mathrm{N}^t_k=\nabla^\ast_t|k\rangle\langle k|\nabla_t$ on the vector $\chi_t$ gives the number $n^t_k$ of measurements of type $k$ up to the point $t$ as $[\mathrm{N}^t_k\chi_t](\vartheta)=n^t_k(\vartheta)\chi_t(\vartheta)$, whilst the action of $\mathrm{N}^t_k$ on $\psi_t$ does not. Further, notice that the wave-function $\psi_t$ describes an entanglement of the quantum object with the apparatus up to the point $t$, whereas the actual measurements disentangle the object from the apparatus as $\chi_t(\vartheta)$ has the form
\begin{equation}
\chi_t(\vartheta)=|k_1\rangle\otimes |k_2\rangle\cdots\otimes|k_n\rangle\otimes \chi(t)\otimes |0\rangle \cdots\otimes|0\rangle
\end{equation}
where $\vartheta^t=\{t_1<\ldots<t_n\}$ and $\chi(t)$ is the projection $\langle\mathrm{M}_t\varphi|\chi_t\rangle_\nu$ of $\chi(t)$ into $\mathfrak{h}$, where $\varphi^\ast=(1,1,1,\ldots)^\otimes$.

Finally notice that in the output space the measurement dynamics assumes the form
\begin{equation}
\mathrm{d}\hat{\chi}_t(\vartheta)+\mathrm{i}H\hat{\chi}_t(\vartheta)\mathrm{d}t= \Big(\mathrm{M}(t)\mathrm{G}/\| {G}^{k(t)}_0\psi(t)\| -\sqrt{\tfrac{\nu}{\nu_k(t)}}\mathrm{I}\Big)\hat{\chi}_t(\vartheta)\mathrm{d}n_t(\vartheta),
\end{equation}
where $\hat{\chi}_t(\vartheta)= |k_1\rangle\otimes |k_2\rangle\cdots\otimes|k_n\rangle\otimes \psi(t)\otimes |0\rangle \cdots\otimes|0\rangle$ for the normalized wave-function $\psi(t)\in\mathfrak{h}$ having the explicit form
\begin{equation}
\psi(t)=\exp\{-\mathrm{i}Ht\}G^{k_n}_0(t_n)\cdots G^{k_2}_0(t_2)G^{k_1}_0(t_1)\psi,
\end{equation}
where $G^{k_i}_0(t_i)=\exp\{\mathrm{i}Ht_i\}\big(G^{k_i}_0/\|G^{k_i}_0\psi(t_i)\|\big)\exp\{-\mathrm{i}Ht_i\}$.

\subsection{Pseudo-Measurement}
 Now we shall consider the idea of a quantum system being prepared for evolution by a pseudo-apparatus.  In the simplest case, corresponding to the dilation of Schr\"odinger dynamics, this pseudo-apparatus is represented by the Minkowski-Hilbert space $\Bbbk(t)=(\mathbb{C}^2,\boldsymbol{\eta})$ and the preparation of the quantum state $P_\psi$ is now given by a pseudo-decoherence of the quantum system which is just the map
\begin{equation}
P_\psi\mapsto\textrm{Tr}_\Bbbk\big[\mathbf{G}\big(P_\psi\otimes\xi\xi^\star\big)\mathbf{G}^\star\big]\equiv\dot{P}_\psi
\end{equation}
transforming $P_\psi$ to its traceless derivative $\dot{P}_\psi:=\mathrm{i}[P_\psi,H]$.
Although the geometry here is hyperbolic there is no problem with finding a $\star$-orthoprojector $\mathbf{M}=\mathbf{M}^\star\mathbf{M}=\mathbf{M}^\star$ to define an observation in $\Bbbk$ corresponding to a transformation of the quantum state to its derivative in $\mathfrak{h}$. The $\star$-projector that we are looking for is simply the identity operator $\mathbf{M}=\mathbf{I}$ trivially giving us the measurement
\begin{equation}
\mathbf{M}\mathbf{G}\big(P_\psi\otimes\xi\xi^\star\big)\mathbf{G}^\star\mathbf{M}= \mathbf{G}\big(P_\psi\otimes\xi\xi^\star\big)\mathbf{G}^\star
\end{equation}
as oppose to the ortho-complimentary statement $\mathbf{M}=\mathbf{O}$ simply giving us zero.

Indeed a process of such measurements is given by (\ref{15}) but in addition to the dynamics considered in the first chapter we shall now suppose that we have a counting process of intensity $\nu$ so that  the solution (\ref{112}) is no longer regarded as an element of $\mathbb{H}$ but instead as an element of the Minkowski-Poisson space $\mathbb{G}_\nu$ (see appendix) having the Poisson measure $\mathrm{d}\texttt{P}_{2\nu}^t(\vartheta)$ on the subspaces $\mathbb{G}_\nu^t\subset\mathbb{G}_\nu$ for a scalar Poisson intensity that we shall take to be $2\nu$ for convenience. One may have wondered when and where the Lorentz transform might appear in a theory of physics in a Minkowski space, and  in the following  theorem regarding the dynamics (\ref{15}) in Minkowski-Poisson space, we shall see how the Lorentz transform appears as a transformation between Poisson intensities i.e. it is a transformation that changes the frequency  of observation of the quantum system.

\begin{proposition}
Let $\mathbf{X}_t$ be an element of the $\star$-monoid $\mathbb{M}$ generated by the Newton-Leibniz algebra $\mathfrak{a}\otimes\mathcal{B}(\mathfrak{h})$, then the pseudo-Poisson expectation \emph{
\begin{equation}
\texttt{P}_{2\nu}^t\big[\mathbf{X}_t\big]:=\int\varphi^\star(\vartheta)\mathbf{X}_t(\vartheta)\varphi(\vartheta) \mathrm{d}\texttt{P}_{2\nu}^t(\vartheta)
\end{equation}}into $\mathcal{B}(\mathfrak{h})$, where $\mathrm{d}\texttt{P}_{2\nu}^t(\vartheta)=(2\nu)^\otimes(\vartheta)\mathrm{d}\vartheta\exp\{-2\nu t\}$ and $\varphi^\star=(\frac{1}{\sqrt{2}},\frac{1}{\sqrt{2}})^\otimes$, may be given by the Lorentz transformed projection $\boldsymbol{F}_\nu=\boldsymbol{F}{\boldsymbol{\upsilon}_\nu^\star}^\otimes$, where $\boldsymbol{\upsilon}_\nu$ is the real Lorentz transform
\begin{equation}
\boldsymbol{\upsilon}_\nu= \left[
  \begin{array}{cc}
    \frac{1}{\sqrt\nu} & 0 \\
    0 & \sqrt{\nu} \\
  \end{array}
\right],
\end{equation}
as \emph{
\begin{equation}
\texttt{P}_{2\nu}^t\big[\mathbf{X}_t\big]=\boldsymbol{F}_\nu\mathbf{X}_t \boldsymbol{F}_\nu^\star.
\end{equation}}Further, the pseudo-Poisson expectation of the  process $\mathbf{X}_t=\mathbf{G}^{\odot}_t$ resolves the boosted Schr\"odinger equation \emph{
\begin{equation}
\mathrm{d}\psi(t)=-\mathrm{i}H\psi(t)\nu\mathrm{d}t,\quad \psi(t)=\texttt{P}_{2\nu}^t\big[\mathbf{G}^{\odot}_t \big]\psi,\label{boosted}
\end{equation}}where $\pi(\nu\mathrm{d}t)=\upsilon^\star_\nu\pi(\mathrm{d}t)\upsilon_\nu$ is called the boosted time increment.
\end{proposition}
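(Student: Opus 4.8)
The plan is to prove the operator identity $\texttt{P}_{2\nu}^t[\mathbf{X}_t]=\boldsymbol{F}_\nu\mathbf{X}_t\boldsymbol{F}_\nu^\star$ by showing that both sides reduce to the same Guichardet integral $\int\nu^{|\vartheta|}\mathrm{X}(\vartheta)\,\mathrm{d}\vartheta$ of the reduced kernel $\mathrm{X}(\vartheta):={\xi^\star}^\otimes(\vartheta)\,\mathbf{X}_t(\vartheta)\,\xi^\otimes(\vartheta)$ of the block-diagonal operator $\mathbf{X}_t\in\mathbb{M}$, and then to read the boosted Schr\"odinger dynamics off the special case $\mathbf{X}_t=\mathbf{G}^\odot_t$. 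First I would record the elementary properties of the boost: since $\boldsymbol{\upsilon}_\nu$ is real and diagonal while $\boldsymbol{\eta}$ is the anti-diagonal metric (\ref{eta}), one has $\boldsymbol{\upsilon}_\nu^\star=\boldsymbol{\eta}^{-1}\boldsymbol{\upsilon}_\nu^\ast\boldsymbol{\eta}=\boldsymbol{\eta}\boldsymbol{\upsilon}_\nu\boldsymbol{\eta}$, the diagonal matrix with entries $\sqrt{\nu},1/\sqrt{\nu}$, so $\boldsymbol{\upsilon}_\nu^\star=\boldsymbol{\upsilon}_\nu^{-1}$ and $\boldsymbol{\upsilon}_\nu$ genuinely preserves $\boldsymbol{\eta}$ --- it is a Lorentz transform of the Minkowski plane in light-cone coordinates. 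The three relations I will need are, with $\xi,\xi^\star$ as in (\ref{xi}), that $\boldsymbol{\upsilon}_\nu\xi=\sqrt{\nu}\,\xi$, that $\xi^\star\boldsymbol{\upsilon}_\nu^\star=\sqrt{\nu}\,\xi^\star$, and that $\boldsymbol{\upsilon}_\nu^\star\pi(\mathrm{d}t)\boldsymbol{\upsilon}_\nu=\nu\,\pi(\mathrm{d}t)=\pi(\nu\mathrm{d}t)$ by linearity of $\pi$.

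For the right-hand side I would unwind $\boldsymbol{F}_\nu\mathbf{X}_t\boldsymbol{F}_\nu^\star=\boldsymbol{F}\,{\boldsymbol{\upsilon}_\nu^\star}^\otimes\,\mathbf{X}_t\,{\boldsymbol{\upsilon}_\nu}^\otimes\,\boldsymbol{F}^\star$, using $\boldsymbol{F}_\nu^\star={\boldsymbol{\upsilon}_\nu}^\otimes\boldsymbol{F}^\star$, the block-diagonality of $\mathbf{X}_t$, and that $\Gamma$ of $\boldsymbol{\upsilon}_\nu$ acts factorwise on $\Bbbk^{\otimes|\vartheta|}$. The $\vartheta$-component of ${\boldsymbol{\upsilon}_\nu}^\otimes\boldsymbol{F}^\star\psi$ is $\nu^{|\vartheta|/2}\,\psi\otimes\xi^\otimes(\vartheta)$; applying $\mathbf{X}_t(\vartheta)$, then ${\boldsymbol{\upsilon}_\nu^\star}^\otimes$, then the contraction against ${\xi^\star}^\otimes(\vartheta)$ and the Guichardet integration that together constitute $\boldsymbol{F}=(\boldsymbol{F}^\star)^\star$, the relation $\xi^\star\boldsymbol{\upsilon}_\nu^\star=\sqrt{\nu}\,\xi^\star$ contributes a second $\nu^{|\vartheta|/2}$, so that $\boldsymbol{F}_\nu\mathbf{X}_t\boldsymbol{F}_\nu^\star=\int\nu^{|\vartheta|}\,{\xi^\star}^\otimes(\vartheta)\mathbf{X}_t(\vartheta)\xi^\otimes(\vartheta)\,\mathrm{d}\vartheta=\int\nu^{|\vartheta|}\mathrm{X}(\vartheta)\,\mathrm{d}\vartheta$.

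The left-hand side carries the real content. Because $\mathbf{X}_t\in\mathbb{M}$ is block-diagonal and externally upper-triangular, its matrix elements $X^{\iota_1\cdots\iota_n}_{\kappa_1\cdots\kappa_n}(\vartheta)$ vanish whenever some site carries the forbidden pair $(+,-)$ and reduce to the value with that site deleted whenever $\iota_i=\kappa_i$; hence the only active sites of a non-vanishing element are those carrying $(-,+)$, there are $2^{|\vartheta|-|S|}$ elements with a prescribed active set $S\subseteq\vartheta$ (each inactive site being $(-,-)$ or $(+,+)$), and each equals $\mathrm{X}(S)$. With $\varphi^\star$ as in the statement (so that ${\varphi^\star}^\otimes(\vartheta)$ is $2^{-|\vartheta|/2}$ times the sum over all $2^{|\vartheta|}$ index patterns of the corresponding row covector, and likewise for $\varphi^\otimes(\vartheta)$), this gives ${\varphi^\star}^\otimes(\vartheta)\,\mathbf{X}_t(\vartheta)\,\varphi^\otimes(\vartheta)=2^{-|\vartheta|}\sum_{S\subseteq\vartheta}2^{|\vartheta|-|S|}\mathrm{X}(S)=\sum_{S\subseteq\vartheta}2^{-|S|}\mathrm{X}(S)$. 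Inserting this into the definition of $\texttt{P}_{2\nu}^t$, splitting $\vartheta=S\sqcup T$ and using the Guichardet exponential formula $\int F(S\sqcup T)\,\mathrm{d}\vartheta=\iint F(S\sqcup T)\,\mathrm{d}S\,\mathrm{d}T$, the expectation factors as $\exp\{-2\nu t\}\big(\int\nu^{|S|}\mathrm{X}(S)\,\mathrm{d}S\big)\big(\int(2\nu)^{|T|}\,\mathrm{d}T\big)$; since $\int(2\nu)^{|T|}\,\mathrm{d}T=\exp\{2\nu t\}$ over the finite subsets of $[0,t)$, the exponentials cancel and $\texttt{P}_{2\nu}^t[\mathbf{X}_t]=\int\nu^{|S|}\mathrm{X}(S)\,\mathrm{d}S$, which is exactly the right-hand side. (Convergence of these Guichardet series, including for an integrable time-dependent $H$, is as in the Fock-scale rigging of \cite{Be92b,Be91}.)

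For the boosted Schr\"odinger equation I would take $\mathbf{X}_t=\mathbf{G}^\odot_t$ with $\mathbf{G}=\mathbf{I}-\mathrm{i}H\otimes\pi(\mathrm{d}t)$. Factorwise conjugation by ${\boldsymbol{\upsilon}_\nu^\star}^\otimes$ and ${\boldsymbol{\upsilon}_\nu}^\otimes$ commutes with the semi-tensor product $\odot$, so $\boldsymbol{F}_\nu\mathbf{G}^\odot_t\boldsymbol{F}_\nu^\star=\boldsymbol{F}\big(\boldsymbol{\upsilon}_\nu^\star\mathbf{G}\boldsymbol{\upsilon}_\nu\big)^\odot_t\boldsymbol{F}^\star$, and by the third boost relation $\boldsymbol{\upsilon}_\nu^\star\mathbf{G}\boldsymbol{\upsilon}_\nu=\mathbf{I}-\mathrm{i}H\otimes\pi(\nu\mathrm{d}t)=\mathbf{I}-\mathrm{i}\nu H\otimes\pi(\mathrm{d}t)$, which is exactly the canonical Schr\"odinger generator with $H$ replaced by $\nu H$. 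Invoking the canonical dilation of Schr\"odinger dynamics established above then yields $\boldsymbol{F}_\nu\mathbf{G}^\odot_t\boldsymbol{F}_\nu^\star=\exp\{-\mathrm{i}\nu Ht\}$, so $\psi(t)=\texttt{P}_{2\nu}^t[\mathbf{G}^\odot_t]\psi$ solves $\mathrm{d}\psi(t)=-\mathrm{i}H\psi(t)\,\nu\mathrm{d}t$, with $\pi(\nu\mathrm{d}t)=\boldsymbol{\upsilon}_\nu^\star\pi(\mathrm{d}t)\boldsymbol{\upsilon}_\nu$ the boosted time increment. The main obstacle will be the third step: getting the matrix-element combinatorics of operators in $\mathbb{M}$ exactly right --- the $2^{|\vartheta|-|S|}$ count and the fact that each such element equals the reduced kernel $\mathrm{X}(S)$ --- together with the Guichardet exponential identity that makes the Poisson normalizations $\exp\{\pm2\nu t\}$ cancel; once the three boost relations and that combinatorial lemma are in hand, the remainder is bookkeeping.
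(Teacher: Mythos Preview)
Your argument is correct but proceeds by a genuinely different route from the paper's. The paper realises the pseudo-Poisson expectation as a vacuum matrix element with a pseudo-Weyl operator, $\texttt{P}_{2\nu}^t[\mathbf{X}_t]=\boldsymbol{\delta}_\emptyset^\star\mathbf{W}_{\mathbf{g}}^\star\mathbf{X}_t\mathbf{W}_{\mathbf{g}}\boldsymbol{\delta}_\emptyset$ with $\mathbf{g}=\sqrt{2\nu}\,\mathbf{f}$, and then splits $\mathbf{g}=\mathbf{g}^++\mathbf{h}$ where $\mathbf{h}=(0,h)^\star$ generates a Weyl shift that commutes through any $\mathbf{X}\in\mathbb{M}$ (because $\mathbf{X}(t\sqcup\vartheta)\mathbf{h}(t)=\mathbf{h}(t)\mathbf{X}(\vartheta)$), leaving only $\mathbf{W}_{\mathbf{g}^+}$ with $\mathbf{g}^+=(\sqrt{\nu},0)^\star=\sqrt{\nu}\,\xi$, which is exactly $\boldsymbol{F}_\nu^\star$ applied to the vacuum. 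You instead compute both sides explicitly as Guichardet integrals: the boost relations $\boldsymbol{\upsilon}_\nu\xi=\sqrt{\nu}\,\xi$ and $\xi^\star\boldsymbol{\upsilon}_\nu^\star=\sqrt{\nu}\,\xi^\star$ reduce the right-hand side to $\int\nu^{|\vartheta|}\mathrm{X}(\vartheta)\,\mathrm{d}\vartheta$, while on the left you exploit the defining structure of $\mathbb{M}$ (vanishing at $(+,-)$ sites, reduction to $\mathbf{X}(\vartheta\setminus t)$ at diagonal sites) to obtain $\varphi^\star(\vartheta)\mathbf{X}_t(\vartheta)\varphi(\vartheta)=\sum_{S\subseteq\vartheta}2^{-|S|}\mathrm{X}(S)$ and then factor via the sum-integral lemma so that the $\exp\{\pm2\nu t\}$ terms cancel. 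Your approach is more elementary and makes the combinatorial mechanism for the Poisson normalisation visible; the paper's Weyl-transform argument is more structural, avoids the matrix-element bookkeeping, and is what generalises cleanly to Proposition~3 in the presence of the $\bullet$ degrees of freedom, where the analogous conjugation by $\mathbf{Z}_{\mathrm{f}}^\otimes$ is essential. For the boosted Schr\"odinger equation your argument coincides with the paper's.
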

\begin{proof}
First we note that $\sqrt{2\nu}^\otimes\mathbf{f}^\otimes\exp\{-\nu t\}$, for $\mathbf{f}^\otimes=\varphi$, may be obtained from a pseudo-Weyl transform of the vacuum vector $\boldsymbol{\delta}_\emptyset$ in $\mathbb{H}$ as $\mathbf{W}_{\mathbf{g}}\boldsymbol{\delta}_\emptyset$ if $\mathbf{g}=\sqrt{2\nu}\mathbf{f}$. Next we notice that since $\mathbf{W}_{\mathbf{h}+\mathbf{k}}=\mathbf{W}_{\mathbf{h}}\mathbf{W}_{\mathbf{k}}$, and since $\mathbf{W}_{\mathbf{h}}^\star \mathbf{X}\mathbf{W}_{\mathbf{h}}=\mathbf{X}$ for any $\mathbf{h}=(0,h)^\star$ because $\mathbf{X}(t\sqcup\vartheta)\mathbf{h}(t)=\mathbf{h}(t)\mathbf{X}(\vartheta)$ for such $\mathbf{h}$, we have \begin{equation}
\boldsymbol{\delta}_\emptyset^\star\mathbf{W}_{\mathbf{g}}^\star\mathbf{X} \mathbf{W}_{\mathbf{g}}\boldsymbol{\delta}_\emptyset= \boldsymbol{\delta}_\emptyset^\star\mathbf{W}_{\mathbf{g}^+}^\star\mathbf{X} \mathbf{W}_{\mathbf{g}^+}\boldsymbol{\delta}_\emptyset \equiv \boldsymbol{F}_\nu\mathbf{X}\boldsymbol{F}_\nu^\star,
\end{equation}
where $\mathbf{g}^+=(\sqrt{\nu},0)^\star$ and $\boldsymbol{F}^\star_\nu=\boldsymbol{\upsilon}_\nu^\otimes\boldsymbol{F}^\star$. 
To understand that the pseudo-Poisson expectation of $\mathbf{G}^{\odot}_t$ resolves the boosted Schr\"odinger equation one should simply notice that $\boldsymbol{F}_\nu\mathbf{G}^{\odot}_t\boldsymbol{F}_\nu^\star=\boldsymbol{F}\big(\boldsymbol{\upsilon}^\star_\nu \mathbf{G} \boldsymbol{\upsilon}_\nu\big)^{\odot}_t \boldsymbol{F}^\star$ where $\boldsymbol{\upsilon}^\star_\nu \mathbf{G} \boldsymbol{\upsilon}_\nu=\mathbf{I}-\mathrm{i}H \boldsymbol{\upsilon}^\star_\nu \pi(\mathrm{d}t) \boldsymbol{\upsilon}_\nu$, and $\boldsymbol{\upsilon}^\star_\nu \pi(\mathrm{d}t) \boldsymbol{\upsilon}_\nu=\nu\pi(\mathrm{d}t)$ which may be explicitly seen as
\begin{equation}
\left[
  \begin{array}{cc}
    \sqrt{\nu} & 0 \\
    0 & \frac{1}{\sqrt\nu} \\
  \end{array}
\right]\left[
  \begin{array}{cc}
    0 & 1 \\
    0 & 0 \\
  \end{array}
\right]\left[
  \begin{array}{cc}
    \frac{1}{\sqrt\nu} & 0 \\
    0 & \sqrt{\nu} \\
  \end{array}
\right]=\left[
  \begin{array}{cc}
    0 & \nu \\
    0 & 0 \\
  \end{array}
\right].
\end{equation}
\end{proof}
Now that we have introduced a notion of Poisson intensity in the underlying Minkowski-Hilbert space  we shall see that things appear a little differently to the sequential measurement dynamics discussed above.

\begin{proposition}
Let $\mathbf{X}_t$ an element of the $\star$-monoid $\mathbb{M}$ generated by the quantum It\^o algebra $\mathfrak{a}(\mathfrak{k})\otimes\mathcal{B}(\mathfrak{h})$ then the pseudo-Poisson expectation \emph{
\begin{equation}
\texttt{P}_{3\nu}^t\big[\mathbf{X}_t\big]:=\int\varphi^\star(\vartheta)\mathbf{X}_t(\vartheta)\varphi(\vartheta) \mathrm{d}\texttt{P}_{3\nu}^t(\vartheta),
\end{equation}}where $\mathrm{d}\texttt{P}_{3\nu}^t(\vartheta)=(3\nu)^\otimes(\vartheta)\mathrm{d}\vartheta\exp\{-3\nu t\}$ and $\varphi=\mathbf{f}^\otimes$ and $\mathbf{f}^\star=\frac{1}{\sqrt{3}}(1,\mathrm{f}^\ast,1)$, may be given by the Lorentz transformed projection $\boldsymbol{F}_\nu=\boldsymbol{F}{\boldsymbol{\upsilon}_\nu^\star}^\otimes$, where $\boldsymbol{\upsilon}_\nu$ is the real Lorentz transform
\begin{equation}
\boldsymbol{\upsilon}_\nu=\left[
                            \begin{array}{ccc}
                              \tfrac{1}{\sqrt\nu} & 0 & 0 \\
                              0 & \mathrm{I} & 0 \\
                              0 & 0 & \sqrt{\nu} \\
                            \end{array}
                          \right]\label{lorentz3}
\end{equation}
 as \emph{
\begin{equation}
\texttt{P}_{3\nu}^t\big[\mathbf{X}_t\big]=\texttt{E}_\emptyset\big[ \boldsymbol{F}_\nu{\mathbf{Z}_\mathrm{f}^\otimes}^\star\mathbf{X}_t\mathbf{Z}_\mathrm{f}^\otimes \boldsymbol{F}_\nu^\star\big],
\end{equation}}where {$\texttt{E}_\emptyset[\mathrm{X}]=\delta_\emptyset^\ast\mathrm{X}\delta_\emptyset$} and $\boldsymbol{F}_\nu{\mathbf{Z}_\mathrm{f}^\otimes} \boldsymbol{F}_\nu^\star=\mathrm{W}_{\sqrt{\nu}\mathrm{f}}$. Further, the pseudo-Poisson expectation of the underlying process $\mathbf{X}_t={\widetilde{\mathbf{G}}^{\odot}_t(\varrho\otimes\mathbf{I}^\otimes) {\widetilde{\mathbf{G}}_t^{\odot^\star}}}$, with $\widetilde{\mathbf{G}}$ given by (\ref{lindblad}), resolves the boosted Lindblad equation \emph{
\begin{equation}
\mathrm{d}\varrho(t)=\mathbf{V}^\star\varrho(t)\mathbf{V}\nu\mathrm{d}t,\quad \varrho(t)=\texttt{P}_{3\nu}^t\big[\mathbf{X}_t \big].\label{bl}
\end{equation}}
\end{proposition}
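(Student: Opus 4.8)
The strategy is to combine, essentially verbatim, the two devices the paper has already built: the pseudo-Weyl / Lorentz-boost reduction used in the preceding proposition for the boosted Schr\"odinger equation (\ref{boosted}), and the conditional-expectation computation of the proposition that projects out Lindblad dynamics (\ref{lindblad2}). The only structural novelty is the extra $\bullet$ degree of freedom: where the deterministic case had the two external indices $\{-,+\}$ and a Poisson rate $2\nu$, the quantum stochastic case has $\{-,\bullet,+\}$ and a rate $3\nu$, the third ``$\nu$'' being precisely the rate of the reduced noise that survives projection.

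First I would establish the identity $\texttt{P}^t_{3\nu}[\mathbf{X}_t]=\texttt{E}_\emptyset[\boldsymbol{F}_\nu\mathbf{Z}_\mathrm{f}^{\otimes\star}\mathbf{X}_t\mathbf{Z}_\mathrm{f}^\otimes\boldsymbol{F}_\nu^\star]$. With $\mathbf{f}^\star\mathbf{f}=1$, the Poisson-weighted family $\sqrt{3\nu}^\otimes\varphi\,\exp\{-\tfrac32\nu t\}$ is the pseudo-Weyl transform $\mathbf{W}_\mathbf{g}\boldsymbol{\delta}_\emptyset$ of the Minkowski-Fock vacuum with $\mathbf{g}=\sqrt{3\nu}\,\mathbf{f}=\sqrt{\nu}\,(1,\mathrm{f},1)^\star$, so that $\mathbf{g}^\star\mathbf{g}=3\nu$ and, because the weight $(3\nu)^\otimes(\vartheta)\exp\{-3\nu t\}$ splits symmetrically between bra and ket, $\texttt{P}^t_{3\nu}[\mathbf{X}_t]=\boldsymbol{\delta}_\emptyset^\star\mathbf{W}_\mathbf{g}^\star\mathbf{X}_t\mathbf{W}_\mathbf{g}\boldsymbol{\delta}_\emptyset$ — exactly as in the $2\nu$-case. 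Then I would split $\mathbf{g}=\mathbf{g}^-+\mathbf{g}^\bullet+\mathbf{g}^+$ along $\{-,\bullet,+\}$; since the cross-pairings $\langle\mathbf{g}^\iota,\mathbf{g}^\kappa\rangle$ are all real (being either zero or $\nu$), the Weyl composition $\mathbf{W}_\mathbf{g}=\mathbf{W}_{\mathbf{g}^+}\mathbf{W}_{\mathbf{g}^\bullet}\mathbf{W}_{\mathbf{g}^-}$ carries no phase. The past piece $\mathbf{g}^-$ then passes through, $\mathbf{W}_{\mathbf{g}^-}^\star\mathbf{X}_t\mathbf{W}_{\mathbf{g}^-}=\mathbf{X}_t$, because $\mathbf{X}_t(t\sqcup\vartheta)\mathbf{g}^-(t)=\mathbf{g}^-(t)\mathbf{X}_t(\vartheta)$ by the externally upper-triangular block structure of $\mathbb{M}$ — the same cancellation that disposed of $\mathbf{h}=(0,h)^\star$ in the preceding proposition. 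The future piece $\mathbf{g}^+$ is null and equals $\boldsymbol{\upsilon}_\nu\xi$ for the temporal spin $\xi=(1,0,0)^\star$ of (\ref{xinoise}), so $\mathbf{W}_{\mathbf{g}^+}\boldsymbol{\delta}_\emptyset$ is absorbed into $\boldsymbol{F}_\nu^\star=\boldsymbol{\upsilon}_\nu^\otimes\boldsymbol{F}^\star$; the remaining $\bullet$ piece $\mathbf{g}^\bullet=\sqrt{\nu}\,\mathrm{f}$ is absorbed into $\mathbf{Z}_\mathrm{f}^\otimes$ by the stated relation $\boldsymbol{F}_\nu\mathbf{Z}_\mathrm{f}^\otimes\boldsymbol{F}_\nu^\star=\mathrm{W}_{\sqrt{\nu}\mathrm{f}}$, leaving just the Guichardet vacuum $\delta_\emptyset$ for $\texttt{E}_\emptyset$ to evaluate.

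For the boosted Lindblad equation I would take $\mathbf{X}_t=\widetilde{\mathbf{G}}^{\odot}_t(\varrho\otimes\mathbf{I}^\otimes)\widetilde{\mathbf{G}}_t^{\odot\star}$ with $\widetilde{\mathbf{G}}$ the partially transposed generator of the Lindblad-projection proposition, substitute it into the identity above, and carry $\boldsymbol{\upsilon}_\nu$ (and $\mathbf{Z}_\mathrm{f}$) through the semi-tensor product using $(\mathbf{A}^\otimes)^\star\mathbf{G}^{\odot}_t\mathbf{A}^\otimes=(\mathbf{A}^\star\mathbf{G}\mathbf{A})^{\odot}_t$, the $\star$-unitarity of $\mathbf{Z}_\mathrm{f}$, and the $\star$-unitarity of $\widetilde{\mathbf{G}}$ for normal $\mathrm{L}$ — the same manoeuvre used for $\boldsymbol{\upsilon}_\nu$ in the preceding proposition. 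The boost then does the arithmetic: extending $\boldsymbol{\upsilon}_\nu^\star\pi(\mathrm{d}t)\boldsymbol{\upsilon}_\nu=\nu\pi(\mathrm{d}t)$ to the whole It\^o algebra conjugates $\widetilde{\mathbf{G}}$ into a generator with $\widetilde{\mathrm{L}}\mapsto\sqrt{\nu}\,\widetilde{\mathrm{L}}$, $\widetilde{\mathrm{L}}^\ast\mapsto\sqrt{\nu}\,\widetilde{\mathrm{L}}^\ast$ and $K\mapsto\nu K$, so $\mathbf{V}^\star=\xi^\star\widetilde{\mathbf{G}}$ becomes $\mathbf{V}_\nu^\star=(I,\sqrt{\nu}\,\widetilde{\mathrm{L}},\nu K)$; running the Lindblad-projection argument with $\mathbf{V}_\nu$ in place of $\mathbf{V}$ then yields $\partial_t\varrho=\mathbf{V}_\nu^\star\varrho\mathbf{V}_\nu=\varrho(\nu K)^\dag+\sqrt{\nu}\widetilde{\mathrm{L}}\,\varrho\,(\sqrt{\nu}\widetilde{\mathrm{L}})^\ast+(\nu K)\varrho=\nu\,\mathbf{V}^\star\varrho\mathbf{V}$, which is (\ref{bl}).

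The delicate point — and the one I would dwell on — is the $\mathbf{Z}_\mathrm{f}^\otimes$-conjugation in the last step. Because the Minkowski-Poisson vacuum $\varphi=\mathbf{f}^\otimes$ carries a nonzero $\bullet$-component $\mathrm{f}$ whereas the canonical reference $\boldsymbol{F}^\star\delta_\emptyset$ does not, conjugating the boosted $\widetilde{\mathbf{G}}$ by $\mathbf{Z}_\mathrm{f}$ leaves the $V_-$- and $\mathrm{V}_\bullet$-entries untouched (the $\bullet\bullet$-block of $\widetilde{\mathbf{G}}$ being the identity, so the shift displayed in (\ref{poissonweyl}) vanishes) but a priori shifts the $V_+$-entry $\nu K$ by a term linear in $\mathrm{f}$ and $\widetilde{\mathrm{L}}$; one must verify that this shift is purely anti-self-adjoint and hence a mere redefinition of the effective Hamiltonian inside $K$ — this is the same gauge freedom exploited in (\ref{238}) and in the remark following the Lindblad-projection proof, where $\mathrm{G}^-_\bullet$ may be altered without changing $\widetilde{\mathbf{G}}\xi$. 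Once that is pinned down — together with the routine check that $\mathbf{g}^-$ still passes through $\mathbf{X}_t$ in the presence of the $\mathbf{g}^\bullet$- and $\mathbf{g}^+$-factors — the proposition follows.
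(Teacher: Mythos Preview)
Your proposal is correct and follows essentially the same route as the paper: write the Poisson expectation as a pseudo-Weyl transform $\mathbf{W}_\mathbf{g}\boldsymbol{\delta}_\emptyset$ with $\mathbf{g}=\sqrt{3\nu}\,\mathbf{f}$, split $\mathbf{g}$ along $\{-,\bullet,+\}$, drop the $-$-piece by the upper-triangular pass-through $\mathbf{X}(t\sqcup\vartheta)\mathbf{h}(t)=\mathbf{h}(t)\mathbf{X}(\vartheta)$, absorb the $+$-piece into $\boldsymbol{F}_\nu^\star$ and the $\bullet$-piece into $\mathbf{Z}_\mathrm{f}^\otimes$, and then read off $\mathbf{V}_\nu^\star=\xi^\star\boldsymbol{\upsilon}_\nu^\star\widetilde{\mathbf{G}}\boldsymbol{\upsilon}_\nu=(I,\sqrt{\nu}\,\widetilde{\mathrm{L}},\nu K)$ with $\mathbf{V}_\nu^\star\varrho\mathbf{V}_\nu=\nu\,\mathbf{V}^\star\varrho\mathbf{V}$. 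The only differences are bookkeeping: the paper inserts the Lorentz-transformed $\star$-projector $\mathbf{E}_\nu=\boldsymbol{\upsilon}_\nu^{\otimes\star}\boldsymbol{F}^\star\boldsymbol{F}\boldsymbol{\upsilon}_\nu^\otimes$ as an intermediate device to justify the replacement $\mathbf{W}_{\mathbf{g}^\bullet}\rightsquigarrow\mathbf{Z}_\mathrm{f}^\otimes$ (which you handle by appealing directly to $\boldsymbol{F}_\nu\mathbf{Z}_\mathrm{f}^\otimes\boldsymbol{F}_\nu^\star=\mathrm{W}_{\sqrt{\nu}\mathrm{f}}$), and the paper's proof of the second part simply does not address the $\mathbf{Z}_\mathrm{f}$-conjugation at all --- so your ``delicate point'' is more, not less, careful than what the paper records.
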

\begin{proof}
 Again we note that $\sqrt{3\nu}^\otimes\mathbf{f}^\otimes\exp\{-\frac{3}{2}\nu t\}$ may be obtained from a pseudo-Weyl transform of the vacuum vector $\boldsymbol{\delta}_\emptyset$ in $\mathbb{H}$ as $\mathbf{W}_{\mathbf{g}}\boldsymbol{\delta}_\emptyset$ if $\mathbf{g}=\sqrt{3\nu}\mathbf{f}$, and again we notice that since $\mathbf{W}_{\mathbf{h}+\mathbf{k}}=\mathbf{W}_{\mathbf{h}}\mathbf{W}_{\mathbf{k}}$, and since $\mathbf{W}_{\mathbf{h}}^\star \mathbf{X}\mathbf{W}_{\mathbf{h}}=\mathbf{X}$ for any $\mathbf{h}=(0,0,h)^\star$ because $\mathbf{X}(t\sqcup\vartheta)\mathbf{h}(t)=\mathbf{h}(t)\mathbf{X}(\vartheta)$ for such $\mathbf{h}$, we have \begin{equation}
\boldsymbol{\delta}_\emptyset^\star\mathbf{W}_{\mathbf{g}}^\star\mathbf{X} \mathbf{W}_{\mathbf{g}}\boldsymbol{\delta}_\emptyset=\delta_\emptyset^\ast \boldsymbol{F}_\nu\mathbf{W}_{\mathbf{g}^\bullet}^\star\mathbf{X}\mathbf{W}_{\mathbf{g}^\bullet}\boldsymbol{F}_\nu^\star \delta_\emptyset,
\end{equation}
where $\mathbf{g}^\bullet:=(0,\mathrm{g}^\ast,0)^\star$. Further, with respect to the Lorentz transformed $\star$-orthoprojector $\mathbf{E}_\nu={\boldsymbol{\upsilon}_\nu^\otimes}^\star\boldsymbol{F}^\star \boldsymbol{F}\boldsymbol{\upsilon}_\nu^\otimes$ we may write
\begin{equation}
\delta_\emptyset^\ast \boldsymbol{F}_\nu\mathbf{W}_{\mathbf{g}^\bullet}^\star\mathbf{X}\mathbf{W}_{\mathbf{g}^\bullet}\boldsymbol{F}_\nu^\star \delta_\emptyset=
\delta_\emptyset^\ast \boldsymbol{F}_\nu\mathbf{W}_{\mathbf{g}^\bullet}^\star\mathbf{E}_\nu\mathbf{X} \mathbf{E}_\nu\mathbf{W}_{\mathbf{g}^\bullet}\boldsymbol{F}_\nu^\star \delta_\emptyset,
\end{equation}
which may be verified by expansion.
 Finally, $\mathbf{E}_\nu\mathbf{W}_{\mathbf{g}^\bullet}\boldsymbol{F}_\nu^\star \delta_\emptyset= \boldsymbol{F}_\nu^\star\mathrm{W}_{\mathrm{g}}\delta_\emptyset=\mathbf{E}_\nu \mathbf{Z}_\mathrm{f}^\otimes \boldsymbol{F}_\nu^\star\delta_\emptyset$ so that we may write
\begin{equation}
\delta_\emptyset^\ast \boldsymbol{F}_\nu\mathbf{W}_{\mathbf{g}^\bullet}^\star\mathbf{E}_\nu\mathbf{X} \mathbf{E}_\nu\mathbf{W}_{\mathbf{g}^\bullet}\boldsymbol{F}_\nu^\star \delta_\emptyset = \delta_\emptyset^\ast \boldsymbol{F}_\nu{\mathbf{Z}_\mathrm{f}^\otimes}^\star\mathbf{E}_\nu\mathbf{X}_t\mathbf{E}_\nu \mathbf{Z}_\mathrm{f}^\otimes \boldsymbol{F}_\nu^\star\delta_\emptyset,
\end{equation}
from which the result follows by the $\star$-homomorphic property of $\boldsymbol{F}_\nu$.

To see that $\texttt{P}^t_\nu\big[{\widetilde{\mathbf{G}}^{\odot}_t(\varrho\otimes\mathbf{I}^\otimes) {\widetilde{\mathbf{G}}_t^{\odot^\star}}}\big]$ resolves (\ref{bl}) one need only note that the Lorentz transformed generator $\boldsymbol{\upsilon}^\star_\nu\widetilde{\mathbf{G}}\boldsymbol{\upsilon}_\nu$ defines the row-operator $\mathbf{V}^\star_\nu=\xi^\star\boldsymbol{\upsilon}^\star_\nu\widetilde{\mathbf{G}}\boldsymbol{\upsilon}_\nu=( I,\sqrt{\nu}\widetilde{\mathrm{L}},\nu K)$ and $\mathbf{V}^\star_\nu\varrho(t)\mathbf{V}_\nu=\mathbf{V}^\star\varrho(t)\mathbf{V}\nu$.
\end{proof}
\begin{xrem}
Notice that  the Lorentz transform of the Lindbladian generator $\mathbf{G}$ is the map
\begin{equation}
\left[
  \begin{array}{ccc}
    I & -\mathrm{L}^\ast & -\tfrac{1}{2}\mathrm{L}^\ast\mathrm{L} \\
    0 & \mathrm{I} & \mathrm{L} \\
    0 &  0 & I \\
  \end{array}
\right]\mapsto\left[
  \begin{array}{ccc}
    I & -\sqrt{\nu}\mathrm{L}^\ast & -\tfrac{\nu}{2}\mathrm{L}^\ast\mathrm{L} \\
    0 & \mathrm{I} & \sqrt{\nu}\mathrm{L} \\
    0 &  0 & I \\
  \end{array}
\right],
\end{equation}
which indeed boosts the deterministic part of the differential increment by the factor $\nu$, but see that it also boosts the quantum stochastic increments of creation and annihilation, and thus a Wiener increment, by the factor $\sqrt\nu$.
\end{xrem}

To illustrate this we shall again consider the sequential interaction dynamics (\ref{49b}) which may be given by an underlying counting differential equation in $\mathbb{G}_\nu$ as
\begin{equation}
\mathrm{d}\Psi_t(\vartheta)=\mathbf{J}\Psi_t(\vartheta)\mathrm{d}n_t(\vartheta)\label{groove}
\end{equation}
where $\mathbf{J}=\boldsymbol{\upsilon}_\nu\mathbf{G}\boldsymbol{\upsilon}_\nu^\star-\mathbf{I}$ and $\mathbf{G}$ is given by (\ref{inG}).

\subsection{Stochastic Perturbations of Schr\"odinger Dynamics}
It has been mentioned that a $\star$-generator $\mathbf{G}$ of Lindbladian dynamics may be obtained as the central limit of a counting process of sequential measurement when the measurement frequency $\nu$ is allowed to diverge, giving rise to a continual observation of the quantum system. However, one may simply begin with a Lindbladian generator $\mathbf{G}$ and  introduce the notion of an intensity of the counting interaction in the Minkowski-Poisson space that describes the underlying dynamics. In this case we shall interpret the pseudo-measurement intensity $\nu$ as a coupling strength of the apparatus to quantum object as was similarly done in the context of the Hilbert space $\mathcal{H}$ in \cite{Be02} where $\lambda=\sqrt{\nu}$ was discussed as a parameter of  the object-apparatus coupling also describing the accuracy of measurement continual diffusion-type measurement.

Let $\{\mathbf{Y}^s_r\}$ be a two-parameter family of operators describing a dilated Hamiltonian dynamics of a quantum system whose projections $\mathrm{Y}^s_r=\boldsymbol{F}\mathbf{Y}^s_r\boldsymbol{F}^\star$ resolve the time-dependent Schr\"odinger equation
\begin{equation}
\mathrm{d}\mathrm{Y}^t_r+\mathrm{i}H(t)\mathrm{Y}^t_r\mathrm{d}t=0,\quad \mathrm{Y}^r_r=\mathrm{I}.
\end{equation}
In the Minkowski-Hilbert space $\mathbb{H}$ the chronological exponent of the underlying QSDE, $\mathrm{d}\mathbf{Y}^t_r+\mathbf{K}_0(t)\mathbf{Y}^t_r\mathrm{d}n_t=0$,
has the form
\begin{equation}
\mathbf{K}_0(t)=\left[
                \begin{array}{ccc}
                  0 & 0 & \mathrm{i}H(t) \\
                  0 & 0 & 0 \\
                  0 & 0 & 0 \\
                \end{array}
              \right].
\end{equation}
We shall suppose that the $\star$-unitary operators $\mathbf{Y}^s_r$ describe the evolution of a quantum system under investigation. The investigation is carried out by performing measurements (observations) with respect to an apparatus (observer). These measurements are described by the incremental operators $\mathbf{L}(t)$ on the Minkowski-Hilbert space $\mathbb{H}$ and we shall suppose that the interactions between the quantum system and the apparatus are $\star$-unitary and of the form
\begin{equation}
\mathbf{I}+\mathbf{L}(t)= \left[
                \begin{array}{ccc}
                  I & -\mathrm{L}(t)^\ast & -\tfrac{1}{2}\mathrm{L}(t)^\ast\mathrm{L}(t) \\
                  0 & \mathrm{I} & \mathrm{L}(t) \\
                  0 & 0 & I \\
                \end{array}
              \right].\label{426}
\end{equation}
The measurement process is a sequential interaction dynamics in $\mathbb{H}$ given by the sequential action of the operators (\ref{426}). Moreover, the measurement process shall be regarded as a Poisson process of intensity $\nu$ causing a sequence of perturbations of the quantum system described by the  boosted perturbation equation
\begin{equation}
\mathrm{d}\mathrm{U}_t+\mathrm{i}H(t)\mathrm{U}_t\mathrm{d}t=\Lambda\big( \mathbf{D},\nu\mathrm{d}t\big)\label{427}
\end{equation}
on $\mathcal{H}$, where $\mathbf{D}(t)=\mathbf{L}(t)\mathrm{T}_t$ and $\Lambda\big( \mathbf{D},\nu\mathrm{d}t\big)= \Lambda\big( \boldsymbol{\upsilon}_\nu^\star\mathbf{D}\boldsymbol{\upsilon}_\nu,\mathrm{d}t\big)$ is a Lorentz transform of the interaction perturbations parameterizing the coupling strength of the apparatus to the quantum system.

This resulting underlying dynamics is given in $\mathbb{H}$ by the generator $\mathbf{G}=\mathbf{I}+\boldsymbol{\upsilon}_\nu^\star\mathbf{L}\boldsymbol{\upsilon}_\nu -\mathbf{K}_0$ corresponding to the underlying QSDE
\begin{equation}
 \mathrm{d}\mathbf{U}_t(\vartheta)=\mathbf{J}_\nu(t)\mathbf{U}_t(\vartheta)\mathrm{d}n_t(\vartheta),\quad \mathbf{J}_\nu=\boldsymbol{\upsilon}_\nu^\star\mathbf{L}\boldsymbol{\upsilon}_\nu-\mathbf{K}_0,
\end{equation}
where the propagator $\mathbf{U}_t$ describes the combined evolution of the apparatus and the quantum system, defining the stochastic unitary propagator $\mathrm{U}_t=\boldsymbol{F}\mathbf{U}_t\boldsymbol{F}^\star$ on $\mathcal{H}$.
However, by virtue of the QS Duhamel principle we may write the solution of (\ref{427}) as the projection
\begin{equation}
\mathrm{U}_t=\mathrm{Y}_0^t+\boldsymbol{\xi}^\star_t\big(\dot{\mathbf{Y}}^t_{\cdot}\boldsymbol{\upsilon}^\star_\nu \mathbf{L} \boldsymbol{\upsilon}_\nu\dot{\mathbf{U}}_{\cdot}\big)\boldsymbol{\xi}_t
\end{equation}
of the single-sum process
\begin{equation}
\mathbf{U}_t(\vartheta)=\mathbf{Y}_0^t(\vartheta)+\sum_{z\in\vartheta^t}\mathbf{Y}^t_z(\vartheta) \boldsymbol{\upsilon}^\star_\nu \mathbf{L}(z) \boldsymbol{\upsilon}_\nu \mathbf{U}_z(\vartheta),
\end{equation}
where $\dot{\mathbf{Y}}^t_{\cdot}(z):=\mathrm{Y}^t_z\otimes\mathbf{I}(z)$ and $\dot{\mathbf{U}}_{\cdot}(z):= \mathrm{U}_z\otimes\mathbf{I}(z)$ are extensions of $\mathrm{Y}^t_z$ and $\mathrm{V}_z$ onto the space $\dot{\mathcal{H}}(z):=\boldsymbol{\xi}(z)\mathcal{H}\equiv \mathbb{K}(z)\otimes\mathcal{H}$. This is given on the initial state $\psi_0=\psi\otimes\delta_\emptyset\in\mathcal{H}$ as
\begin{equation}
\psi_t(\vartheta)=\psi_0(\vartheta)+\sum_{z\in\vartheta^t}\sqrt{\nu}Y^t_z\mathrm{L}(z)\psi_z(\vartheta\setminus z)- \frac{1}{2}\int_0^tY^t_z\mathrm{L}(z)^\ast\mathrm{L}(z)\psi_z(\vartheta)\nu\mathrm{d}z,
\end{equation}
resolving the quantum stochastic diffusion equation
\begin{equation}
 \mathrm{d}\psi_t+\big(\tfrac{\nu}{2}\mathrm{L}^\ast\mathrm{L}+\mathrm{i}H\big)\psi_t\mathrm{d}t =\sqrt{\nu}L^k\mathrm{d}w^t_k\psi_t,
\end{equation}
which is the diffusion wave-equation describing the continuous decoherence of the quantum system coupled to the apparatus by the coupling constants $\lambda=\sqrt{\nu}$, recovering the unperturbed Schr\"odinger dynamics in the special case when $\lambda=0$ corresponding to the absence of measurement.

\section{Conclusion}
It should now be apparent that, from principles of mathematics, deterministic dynamics may henceforth be regarded as the projection of a stochastic process of discrete interactions. That is to say that the deterministic differential calculus of Newton and Leibniz has a discrete dilation which appears to be canonical when studying differential calculus using algebra.

From the point of view of calculations one may see the full implications of such a dilation in the case of the more general stochastic calculus where, once again, there is a canonical dilation of stochastic processes to the simplest processes of all - a counting process. The importance of the matrix-algebra representation of (quantum) It\^o algebra is not only likened to the realization of $C^\ast$-algebras as matrix algebras on Hilbert spaces, but it also transforms the manner in which we may perceive time. Namely as something that may be generated from a fundamental temporal spin, a primitive `time particle' having only two degrees of freedom called \emph{past} and \emph{future} which are meant as two very basic concepts from which conventional notions of flows of time may be derived.

Throughout the last century there has been great advances in physics but there has been no obvious success in understanding time other than its relativity of passing and its non-separability from space. Unfortunately this does not present us with any indication of an underlying mechanical structure, a mechanism, that describes how time `works'. The intention here was to provide details of such a mechanism in order to give a broader picture of the workings of time. What has been revealed from working with the Belavkin Formalism is that in the context of measurement a differential time increment is a spontaneous interaction at a boundary connecting past and future. Such may be called a measurement in a very general way, meaning \emph{a $\star$-unitary interaction in a Minkowski-Hilbert space}.

From a more down-to-Earth point of view a measurement may be understood as an observation; meaning that something is observing and something is observed and the interaction between the observing and the observed is the observation. So that we now have solid grounds for a discussion of time as a consequence of observation. That is observation in a very general sense, and a primary, or fundamental, detection may serve as better terminology.

Understanding the underlying counting differential equation (\ref{15}) as a process of sequential measurements of frequency $\nu$ means that  the boosted Schr\"odinger equation (\ref{boosted}) may be understood as the expected dynamics of a quantum system that arises from a very primitive kind of observation, namely the indication of the production of past from future, and without such observations the quantum system simply does not evolve. The idea that evolution itself arises fundamentally from observation does not appear to have been considered in the doctrine of science before these investigations were carried out. 
\subsection{Hyperbolic Noise}
Whatever be one's preferred choice of terminology does not change the fact that we are able to understand deterministic and stochastic dynamics as a marginal dynamics of a purely counting dynamics in a Minkowski-Hilbert space. Notice that of the three external degrees of freedom $\{-,\bullet,+\}$ only the $\bullet$ degree of freedom was referred to as noise in the work above. However, in the context of underlying counting differential equation in Minkowski-Hilbert space the remaining degrees of freedom $\{-,+\}$ are also describing noises. After all, (\ref{15}) is a stochastic differential equation and the noises are described by the temporal-spin vectors $\xi\in(\mathbb{C}^2,\boldsymbol{\eta})$. These are the very basic noises that generate evolution, and the second-quantization $\boldsymbol{F}^\star$ may be regarded as an environment of these `time particles'.

The other remarkable thing is the hyperbolic geometry of the space within which the temporal-spin vectors are defined. The vectors $\xi$ are null-vectors $\xi^\star\xi=0$ and they arose from a consideration of the Newton-Leibniz differential increments $\mathrm{d}t$. Notice that (\ref{15}) does give rise to two distinct notions of time, the first being the variable components of the chains $\vartheta$ and the second being a parameter for the underlying counting dynamics. One may argue that we have re-introduced that which we initially set out to understand, but the parameter of the counting process may still be ordered by virtue of the basic ingredient $\xi$.

It will also be interesting to study the pseudo-measurement process in more detail for the case when the Minkowski metric $\boldsymbol{\eta}$ assumes the more general pseudo-Riemannian form
\[
\boldsymbol{\eta}(\nu)=\left[
  \begin{array}{ccc}
    0 & 0 & \nu_- \\
    0 & \nu_\bullet & 0 \\
    \nu_+ & 0 & 0 \\
  \end{array}
\right]\equiv\boldsymbol{\nu}
\]
with $\nu_\bullet$ having a similar form to (\ref{curve}).
\section{Appendix}
\subsection{Guichardet-Fock Space}
 The Guichardet-Fock space $\mathcal{F}$ \cite{Gui72} used here is a Hilbert space of vector-functions over an element $\Delta$, of the $\sigma$-field over $\mathbb{R}$, having non-zero Lebesgue measure. The vector functions $\chi\in\mathcal{F}$ map finite chains $\vartheta\subset\Delta$ into a product of Hilbert spaces $\mathcal{K}(t_1)\otimes\mathcal{K}(t_2)\otimes\cdots\otimes\mathcal{K}(t_n)$ over the points $t_i\in\vartheta$, where we have taken $\mathcal{K}(t)=\mathfrak{k}$ for simplicity. Here $\vartheta=\{t_1<t_2<\ldots<t_n\}$ and $n(\vartheta)=|\vartheta|\equiv n$ is called the cardinality of the chain $\vartheta$, and we can denote by $\mathcal{X}_\Delta$ the space of such finite chains $\vartheta\subset\Delta$.

The space $\mathcal{F}$ may also be written as $\Gamma(\mathcal{K})$ where $\Gamma$ is called the \emph{second-quantization functor} and $\mathcal{K}$ is a Hilbert space of square integrable functions $\mathrm{k}:z\rightarrow\mathfrak{k}$, $z\in\Delta$. The second-quantization functor admits the decomposition
\begin{equation}
\mathcal{F}=\mathcal{F}^t\otimes\mathcal{F}_{[t}
\end{equation}
where $\mathcal{F}^t$ is the restriction of $\mathcal{F}$ to $[\min\Delta,t)\subset\Delta$ and  $\mathcal{F}_{[t}$ is the restriction of $\mathcal{F}$ to $\Delta\setminus[\min\Delta,t)$.
The norm of such functions $\mathrm{k}$ is given as
\begin{equation}
\|\mathrm{k}\|^2=\int_{\Delta}\|\mathrm{k}(z)\|^2_{\mathfrak{k}}\mathrm{d}z
\end{equation}
where $\mathrm{d}z$ is understood as the Lebesgue measure on $\Delta$ and $\|\mathrm{k}(z)\|_{\mathfrak{k}}$ is the norm on the Hilbert space $\mathfrak{k}$ of  the vector $\mathrm{k}(z)$. Generally the Hilbert spaces $\mathcal{K}(z)$ need not be the same at each $z\in\Delta$. On $\mathcal{F}$ the norm of functions $\chi$ is given as
\begin{equation}
\|\chi\|^2:=\int_{\mathcal{X}_{\Delta}}\|\chi(\vartheta)\|^2\mathrm{d}\vartheta\equiv \sum_{n=0}^\infty\int _{\mathcal{X}_n}\|\chi(\vartheta_n)\|^2\mathrm{d}\vartheta_n\label{norm}
\end{equation}
where $\vartheta_n$ is any chain in $\Delta$ having cardinality $|\vartheta_n|=n$ and $\mathcal{X}_n$ is the space of such $n$-chains. More specifically,
\begin{equation}
\int _{\mathcal{X}_n}\|\chi(\vartheta_n)\|^2\mathrm{d}\vartheta_n:=\underset{\inf{\Delta}<t_1<\ldots< t_n<\sup{\Delta}}{\int\cdots\int}\|\chi(t_1,t_2,\ldots,t_n)\|^2\mathrm{d}t_1\mathrm{d}t_2\cdots\mathrm{d}t_n
\end{equation}
and for the empty chain $\vartheta_0=\emptyset$ we have
\begin{equation}
\int _{\mathcal{X}_0}\|\chi(\emptyset)\|^2\mathrm{d}\emptyset:=\|\chi(\emptyset)\|^2
\end{equation}
which is just the squared magnitude of $\chi(\emptyset)\in\mathbb{C}$. Notice that functions $\chi\in\mathcal{F}$ may be extended, for example, onto symmetric or antisymmetric Fock space using the appropriate isomorphic extensions of the functions $\chi_n$, over $\mathcal{X}_n$, onto $\mathbb{R}^n$.

Notice that we have a special class of vectors that we shall denote by $\mathrm{k}^\otimes$. These vectors are called product vectors, coherent vectors, or exponential vectors. They are they maps $\mathrm{k}^\otimes:\vartheta\rightarrow \mathrm{k}(t_1)\otimes\cdots\otimes\mathrm{k}(t_n)$ having the product form. They are called exponential vectors because their norm has the exponential form
\begin{equation}
\|\mathrm{k}^\otimes\|^2=\int_{\mathcal{X}_{\Delta}}\prod_{z\in\vartheta}\|\mathrm{k}(z)\|^2\mathrm{d}\vartheta=\exp\big\{ \|\mathrm{k}\|^2\big\}
\end{equation}
and this defines the coherent vectors as $\mathrm{k}^\otimes \exp\{ -\tfrac{1}{2}\|\mathrm{k}\|^2\}$. The space $\mathcal{F}$ is defined as the completion of the $\mathbb{C}$-linear span of these product vectors with respect to the norm (\ref{norm}).
The Guichardet-Fock space $\mathcal{F}$ appears very naturally in stochastic dynamics because at any time $t$ in some interval $\Delta$ we would like to consider the reduction of a quantum system arising from its interaction with a measurement apparatus, randomly perturbing the quantum system at times in an arbitrary chain $\vartheta\subset\Delta$.

\subsection{Weyl Transforms and The Fock-Vacuum Vector} Amongst the product vectors $\mathrm{k}^\otimes$ is $0^\otimes$ corresponding to the case $\mathrm{k}=0$; which is indeed a vector in $\mathcal{K}$. This vector is also denoted by $\delta_\emptyset$ where \begin{equation} \delta_\emptyset(\vartheta)=1\;\;\textrm{if}\;\vartheta=\emptyset,\;\;\textrm{otherwise}\;0.
\end{equation}
Notice that it is already normalized, $\|\delta_\emptyset\|=\exp\{0\}=1$.

Any coherent vector $\mathrm{k}^\otimes \exp\{ -\tfrac{1}{2}\|\mathrm{k}\|^2\}$ may be obtained by a unitary transform $\mathrm{W}\delta_\emptyset$ of the vacuum vector, where the unitary operator $\mathrm{W}$ is called a Weyl transformation. It has the form
\begin{equation}
\mathrm{W}=\exp\big\{\mathrm{A}^\star\mathrm{k}-\mathrm{k}^\ast\mathrm{A}\big\}\equiv\mathrm{W}_\mathrm{k}\label{weyl1}
\end{equation}
where $\mathrm{A}^\star\mathrm{k}$ and $\mathrm{k}^\ast\mathrm{A}$ are respectively quantum stochastic single integrals of creation and annihilation given on vectors $\chi\in\mathcal{F}$ as
\begin{eqnarray}
\big[\mathrm{A}^\star\mathrm{k}\chi\big](\vartheta)&=\sum_{z\in\vartheta}\mathrm{k}(z)\chi(\vartheta\setminus z)\\
\big[\mathrm{k}^\ast\mathrm{A}\chi\big](\vartheta)&= \int_\Delta\mathrm{k}(z)^\ast\chi(\vartheta \sqcup z)\mathrm{d}z
\end{eqnarray}
but the Weyl transformation may be given most beautifully in the Minkowski-Fock space as $\mathbf{Z}^\otimes$ where $\mathbf{Z}^\otimes(\vartheta):=\otimes_{t\in\vartheta}\mathbf{Z}(t)$ and $\mathbf{Z}^\star=\mathbf{Z}^{-1}$.

\subsection{The Fock Representation of Poisson Space}
This is the inductive limit $\mathcal{G}_\nu=\cup_{t\in\Delta}\mathcal{G}_\nu^t$, where $\mathcal{G}_\nu^s\subset\mathcal{G}_\nu^t$ for all $s<t$ in $\Delta$.  $\mathcal{G}_\nu^t$ is the space of functions $\varphi$ that are square integrable with respect to the Poisson measure $\mathrm{d}\texttt{P}^t_\nu(\vartheta)=\nu^\otimes(\vartheta)\mathrm{d}\vartheta\exp\{-\nu t\}$ for finite chains $\vartheta\subset[0,t)$, where $\min\Delta=0$ for convenience and $\nu^\otimes(\vartheta)=\otimes_{t\in\vartheta}\nu(t)$, so  $\varphi\in\mathcal{G}_\nu^t$ if
\begin{equation}
\int_{\mathcal{X}_{[0,t)}}\|\varphi(\vartheta)\|^2\mathrm{d}\texttt{P}^t_\nu(\vartheta)<\infty.
\end{equation}
 The Poisson intensity $\nu$ is considered to be a strictly positive element of $L^\infty(\Delta)$, and the space $\mathcal{G}_\nu^t$ is a \emph{scaled} Hilbert space isomorphic to  $\mathcal{H}^t=\mathfrak{h}\otimes\mathcal{F}^t$ with respect to the isometry $I^t_\nu=\exp\{\tfrac{1}{2}\nu t\}{\tfrac{1}{\sqrt\nu}}^\otimes$
so that the vectors $\chi\in\mathcal{H}^t$ define vectors $\varphi\in\mathcal{G}^t_\nu$ as
\begin{equation}
\varphi=I^t_\nu\chi\equiv \exp\{\tfrac{1}{2}\nu t\}{\tfrac{1}{\sqrt\nu}}^\otimes\chi.
\end{equation}
 In the case when $\chi$ is a product vector $\exp\{-\frac{1}{2}\mathrm{g}^\ast\mathrm{g}\} \mathrm{g}^\otimes$  we may write $\chi=\mathrm{W}_{\mathrm{g}}\delta_\emptyset$, where $\mathrm{W}_\mathrm{g}$ is given by (\ref{weyl1}), and define isometries of the Fock vacuum \cite{Be02}  as
\begin{equation}
 \mathrm{f}^\otimes=I^t_\nu(\mathrm{f})\delta_\emptyset:= \exp\{\tfrac{1}{2}\nu t\}{\tfrac{1}{\sqrt\nu}}^\otimes\mathrm{W}_{\mathrm{g}}\delta_\emptyset,
\end{equation}
where $\mathrm{g}^\ast\mathrm{g}:\nu t$ and with $\mathrm{g}=\sqrt{\nu}\mathrm{f}$ and $\|\mathrm{f}(z)\|=1$.
This also gives us the more general vectors $\varphi=b^i\mathrm{f}^\otimes_i$ as the transformations $I_\nu^t\chi$ of the vectors $\chi=c^i\exp\{-\frac{1}{2}\nu_it\} \mathrm{g}_i^\otimes$ where $\sqrt{\nu_i}\mathrm{f}_i=\mathrm{g}_i$, $\|\mathrm{f}_i(z)\|=1$, and $b^i=c^i\exp\{-\frac{1}{2}(\nu_i-\nu)t\} \sqrt{\frac{\nu_i}{\nu}}^\otimes$.

\subsection{Minkowski-Fock Space}
The Minkowski Fock space $\mathbb{F}=\Gamma(\mathbb{K})$ contains $\mathcal{F}$ as a subspace. The embedding operator $\boldsymbol{F}^\star$ takes any function $\chi\in\mathcal{F}$ to
\begin{equation}
\chi\otimes\left[
                                      \begin{array}{c}
                                        0 \\
                                        1 \\
                                      \end{array}
                                    \right]^\otimes
\end{equation}
in $\boldsymbol{F}^\star\mathcal{F}$ which is an affine embedding of $\mathcal{F}$ into $\mathbb{F}$.
The second quantization functor admits the identification
\begin{equation}
\Gamma(\Bbbk)\otimes\Gamma(\mathcal{K})=\Gamma(\Bbbk\oplus\mathcal{K})
\end{equation}
where $\Bbbk\oplus\mathcal{K}\cong\mathbb{K}$ and $\Bbbk=L^1(\Delta)\oplus L^\infty(\Delta)$ equipped with Minkowski metric (\ref{eta}), allowing us to write the embedding of  product functions $\chi=\mathrm{k}^\otimes$ as
\begin{equation}
\boldsymbol{F}^\star\mathrm{k}^\otimes=\left[
                                      \begin{array}{c}
                                        0 \\
                                        \mathrm{k}\\
                                        1 \\
                                      \end{array}
                                    \right]^\otimes
\end{equation}
which is the canonical form \cite{Be92b}.
The Minkowski-Fock space $\mathbb{F}$ only differs from the Fock space $\mathcal{F}$ above in so far as the `norm' on $\mathbf{f}\in\mathbb{K}$ is a pseudo-norm given by the Minkowski metric (\ref{2004}) as $\|\mathbf{f}\|^2=\mathbf{f}^\star\mathbf{f}=\mathbf{f}^\ast\boldsymbol{\eta}\mathbf{f}$, and it satisfies a positivity condition called \emph{conditional positivity} \cite{Be92b} when $\Re(f_-^{}f_+^\ast)\geq0$ given that $\mathbf{f}^\star=(f_-,\mathrm{f}^\ast,f_+)$. The norm of vectors $\Psi\in\mathbb{F}$ is given by
\begin{equation}
\|\Psi\|^2:=\int_{\mathcal{X}_{\Delta}}\|\Psi(\vartheta)\|^2\mathrm{d}\vartheta\equiv\Psi^\star\Psi\in\mathbb{R}
\end{equation}
where $\|\Psi(\vartheta)\|^2=\Psi(\vartheta)^\star\Psi(\vartheta)$, and we may also consider the Minkowski-Fock representation of Poisson space, called Minkowski-Poisson space and denoted by $\mathbb{G}_\nu^t$ having the pseudo-norm
\begin{equation}
\|\Phi\|_\nu^2:=\int_{\mathcal{X}_{\Delta}}\|\Phi(\vartheta)\|^2\mathrm{d}\texttt{P}^t_\nu(\vartheta).
\end{equation}

It is one of the great realizations of the Belavkin Formalism that quantum stochastic processes represented in $\mathcal{H}=\mathfrak{h}\otimes\mathcal{F}$ may be dilated to block diagonal operators in $\mathfrak{h}\otimes\mathbb{F}$. In the case of the Weyl operator we have $\mathrm{W}_\mathrm{g}=\boldsymbol{F}\mathbf{W}_\mathrm{g}\boldsymbol{F}^\star$ where $\mathbf{W}_\mathrm{g}=\mathbf{Z}_\mathrm{g}^\otimes$ is just a block-diagonal product-operator $\big[\mathbf{Z}_\mathrm{g}^\otimes\chi\big](\vartheta)= \Big(\mathbf{Z}_\mathrm{g}(t_1)\otimes\cdots\otimes\mathbf{Z}_\mathrm{g}(t_n)\Big)\chi(t_1,\ldots,t_n)$, $\vartheta=\{t_1<\ldots<t_n\}$, and this $\mathbf{Z}_\mathrm{g}$ is the $\star$-unitary operator
\begin{equation}
\mathbf{Z}_\mathrm{g}=\left[
  \begin{array}{ccc}
    1 & -\mathrm{g} & -\frac{1}{2}\mathrm{g}^\ast\mathrm{g} \\
    0 & \mathrm{I} & \mathrm{g} \\
    0 & 0 & 1 \\
  \end{array}
\right]\label{Z}
\end{equation}
and notice that this is very similar to the form of the generator of Lindblad dynamics, only without any action in $\mathfrak{h}$, referred to as the quantum-object Hilbert space.

\subsection{Quantum Stochastic Integration}
A QS (quantum stochastic) single-integral may be given by the action of an $\mathcal{B}(\mathcal{H})$-valued linear functional $\rho_\star$ on a quantum It\^o algebra $\mathfrak{a}(\mathcal{K})\otimes\mathcal{B}(\mathcal{H})$, where $\mathcal{B}(\mathcal{H})$ is an algebra of bounded operators on $\mathcal{H}=\mathfrak{h}\otimes\Gamma(\mathcal{K})$. The functional $\rho_\star$ is given on QS derivatives $\mathbf{D}$ as
\begin{equation}
\rho_\star(\mathbf{D})=\boldsymbol{\xi}^\star\mathbf{D}\boldsymbol{\xi},
\end{equation}
and $\mathbf{D}$ may also be referred to as the single-integral kernel.
A multiple-integral kernel $\mathbf{M}$ is an operator in the quantum It\^o algebra
$\Gamma\big(\mathfrak{a}(\mathcal{K})\big)\otimes\mathcal{B}(\mathcal{H})$ on $\mathbb{F}\otimes\mathcal{H}$, and a quantum stochastic multiple-integral is given on such  operators $\mathbf{M}$     as
\begin{equation}
 \rho_\star^\otimes(\mathbf{M})=\boldsymbol{\Xi}^\star\mathbf{M}\boldsymbol{\Xi}\label{38} \end{equation}
with respect to the operator $\boldsymbol{\Xi}:\mathcal{H}\rightarrow\Gamma(\mathbb{K})\otimes\mathcal{H}$ which may be given in terms of the vector-operator $\boldsymbol{\xi}$ as $\boldsymbol{\Xi}=\boldsymbol{\xi}^\otimes$ where $\boldsymbol{\xi}^\otimes=\big(\boldsymbol{F}^\star\otimes\mathrm{I}\big)\nabla^\otimes$ and $\nabla^\otimes:\mathcal{H}\rightarrow\mathcal{F}\otimes\mathcal{H}$ is the second-quantization of the point-derivative operator. The action of $\Xi$ on product functions $\mathrm{k}^\otimes\in\mathcal{F}$ is
\begin{equation}
\Xi\mathrm{k}^\otimes=\left[
                        \begin{array}{c}
                          0 \\
                          \mathrm{k} \\
                          1 \\
                        \end{array}
                      \right]^\otimes
\otimes\mathrm{k}^\otimes,
\end{equation}
and the multiple integral (\ref{38}) always defines a single integral as
\begin{equation}
\boldsymbol{\Xi}^\star\mathbf{M}\boldsymbol{\Xi}=\mathbf{M}(\emptyset)+\boldsymbol{\xi}^\star \boldsymbol{\Xi}^\star\dot{\mathbf{M}}\boldsymbol{\Xi} \boldsymbol{\xi}
\end{equation}
where $\dot{\mathbf{M}}$ is the point-derivative of $\mathbf{M}$ in the first (Minkowski-Fock space) variable. We may define generalized Maassen-Meyer kernels \cite{Be92b,LinM88a,Mey87} as multiple integral kernels of the form $\mathbf{M}=\mathbf{A}\otimes\mathrm{I}^\otimes$ defining the adapted quantum stochastic processes.
However, these kernels may be generalized without losing their adapted form, for we may define Q-adapted processes \cite{BelB12a,BelB12b,Brth} as those having the multiple integral kernels
\begin{equation}
\mathbf{M}=\mathbf{A}\otimes\mathrm{Q}^\otimes.\label{311} \end{equation}
In fact it was shown in \cite{BelB12b} that a Q-adapted process has a unique kernel of the form     (\ref{311}) up to the addition of a null kernel $\mathbf{Z}$ having $\boldsymbol{\Xi}^\star\mathbf{Z}\boldsymbol{\Xi}=0$.

The single-integral kernel of a Q-adapted process is also Q-adapted, and the
Q-adapted calculus gives a correction to the classical It\^o formula (coinciding with the adapted quantum It\^o formula) so that if $\mathbf{Q}:=\mathbf{I}  +\mathbf{L}$ then the quantum It\^o formula for two Q-adapted processes $\mathrm{X}$ and $\mathrm{Y}$ is
\begin{equation}
\mathrm{d}\big(\mathrm{X}\mathrm{Y}\big)=\Big(\mathrm{X}\mathrm{d}\mathrm{Y}+ \big(\mathrm{d}\mathrm{X}\big)\mathrm{d}\mathrm{Y} +\big(\mathrm{d}\mathrm{X}\big)  \mathrm{Y}\Big  ) +\Big(\mathrm{X}\mathrm{d}\Lambda(\mathbf{L})\mathrm{d}\mathrm{Y}+ \big(\mathrm{d}\mathrm{X}\big)\mathrm{d}\Lambda(\mathbf{L})\mathrm{Y}\Big).
\end{equation}

\subsection{The Lorentz Representation  of Minkowski Space}
As it turns out the complex Minkowski space $(\mathbb{C}^2,\boldsymbol{\eta})$ that we are using here may be called the \emph{Lorentz Representation}. This conclusion is fairly straightforward. As a self-adjoint operator on $\mathbb{C}^2$, $\boldsymbol{\eta}^\ast=\boldsymbol{\eta}$, the pseudo-metric (\ref{eta}) may be diagonalized by a unitary operator $\mathbf{U}$ to give
\begin{equation}
\mathbf{U}^\ast\boldsymbol{\eta}\mathbf{U}=\left[
                                   \begin{array}{cc}
                                     1 & 0 \\
                                     0 & -1 \\
                                   \end{array}
                                 \right],\qquad  \mathbf{U}=\tfrac{1}{\sqrt2}\left[
                 \begin{array}{cc}
                   1 & 1 \\
                   1 & -1 \\
                 \end{array}
               \right]
\end{equation}
 which is the time-like Minkowski metric familiar in special relativity. However, what we also find is that the real Lorentz transform is diagonalized by the same unitary operator $\mathbf{U}$ so that
 \begin{equation}
\mathbf{U}\left[
            \begin{array}{cc}
              \cosh\theta & \sinh\theta \\
              \sinh\theta & \cosh\theta \\
            \end{array}
          \right]\mathbf{U}^\ast=\left[
                                   \begin{array}{cc}
                                     \exp\{\theta\} & 0 \\
                                     0 & \exp\{-\theta\} \\
                                   \end{array}
                                 \right]\equiv \boldsymbol{\upsilon}^\star,\label{120}
\end{equation}
noting that we use $\sqrt\nu$ in place of $\exp\{\theta\}$ in this article,
and so we call $(\mathbb{C}^2,\boldsymbol{\eta})$ the Lorentz representation of the complex Minkowski space.
One may also wish to note that the future and past vectors also transform as
\[
\mathbf{U}^\dag\xi=\frac{1}{\sqrt2}\left[
                 \begin{array}{c}
                   1 \\
                   -1 \\
                 \end{array}
               \right],\quad \mathbf{U}^\dag\pi(\mathrm{d}t)\xi=\frac{1}{\sqrt2}\left[
                 \begin{array}{c}
                   1 \\
                   1 \\
                 \end{array}
               \right].
\]
In the case when we have the $\bullet$ degree of freedom of the noise we may use the Lorentz transform  to change the amplitude of vectors in the apparatus Hilbert space $\mathfrak{k}$. To understand this one should first recall that our basis vectors $\chi_{\mathrm{f}}:=\mathrm{f}\exp\{-\frac{1}{2}\mathrm{f}^\ast\mathrm{f}\}$ are generated by Weyl transforms of the vacuum vector $\delta_\emptyset$ so that $\chi_{\mathrm{f}}=\mathrm{W}_{\mathrm{f}}\delta_\emptyset$. The next thing to note is that $\mathrm{W}_{\mathrm{f}}=\boldsymbol{F}\mathbf{Z}^\otimes_{\mathrm{f}}\boldsymbol{F}^\star$. So now we shall consider a Lorentz transform $\boldsymbol{\upsilon}_\nu^\star\mathbf{Z}_{\mathrm{f}}\boldsymbol{\upsilon}_\nu$, given by (\ref{lorentz3}), of the generator $\mathbf{Z}_{\mathrm{f}}$ of the Weyl transform. When this is projected into $\mathcal{H}$ we obtain a new Weyl transform $\mathrm{W}_{\mathrm{g}}:=\boldsymbol{F}\big(\boldsymbol{\upsilon}_\nu^\star \mathbf{Z}_{\mathrm{f}}\boldsymbol{\upsilon}_\nu\big)^\otimes\boldsymbol{F}^\star$, where $\mathrm{g}=\sqrt{\nu}\mathrm{f}$, which is also $\boldsymbol{F}_\nu\mathbf{Z}^\otimes_{\mathrm{f}}\boldsymbol{F}_\nu^\star$. This gives us a new product vector $\chi_{\mathrm{g}}$ which is normalized, like $\chi_{\mathrm{f}}$, but with new magnitude in $\mathfrak{k}$.


\begin{thebibliography}{HD}
\bibitem[1]{Be88} V. P. Belavkin,
\emph{A New Form and a $\star$-Algebraic Structure of Quantum Stochastic Integrals in Fock Space},
Rediconti del Sem. Mat. e Fis. di Milano, LVIII, 1988, 177--193.

\bibitem[2]{Be92} V. P. Belavkin,
\emph{Quantum Stochastic Calculus and Quantum Non-Linear Filtering},
J. Multivariate Analysis, 1992, 42:171--201.

\bibitem[3]{Be92b} V.~P.~Belavkin,
\emph{Chaotic States and Stochastic Integration in Quantum Systems},
Russian Math. Surveys, 1992, 47:53--116.

\bibitem[4]{HudP84} R. L. Hudson and K. R. Parathasarathy,
\emph{Quantum It\^o Formula and Stochastic Evolutions},
Communications in Mathematical Physics, 1984.

\bibitem[5]{BelB12a} V.~P.~Belavkin and M.~F. Brown,
  \emph{Q-adapted quantum stochastic integrals and differentials in Fock scale}, Volume 96, {Noncommutative Harmonic Analysis with Applications to Probability III}, Banach Center Publications, 2012,  51--66.

\bibitem[6]{BelB12b} V.~P.~Belavkin and M.~F. Brown,
  \emph{$\mathrm{Q}
$-Adapted Integrals and It\^{o} Formula of Noncommutative Stochastic
Calculus in Fock Space}, {Communications on Stochastic Analysis (KRP Volume)}, 2012,  6:157--175.

\bibitem[7]{HudP84b} R. L. Hudson and K. R. Parthasarathy,
\emph{Unification of Fermion and Boson Stochastic Calculus},
Comm. Math. Phys, 1984, 93:301--323.

\bibitem[8]{Brth} M.~F. Brown,
  \emph{An Investigation of The Stochastic Representation of Quantum Evolution}, {PhD Thesis}, University of Nottingham, 2013.

\bibitem[9]{Br12} M.~F. Brown,
  \emph{The Stochastic Representation of Hamiltonian Dynamics and The Quantization of Time}, {arXiv:1111.7043}, 2012.

\bibitem[10]{Be00b} V.~P. ~Belavkin,
\emph{Quantum Trajectories, State Diffusion, and Time Asymmetric Eventum Mechanics},
2000.

\bibitem[11]{BelK01} V.~P. ~Belavkin and V.~N. ~Kolokoltsov,
\emph{Stochastic Evolutions as Boundary Value Problems}, Infinite Dimensional Analysis and Quantum Probability, Research Institute for Mathematical Studies, Kokyuroku, 2001, 1227:83-95.

\bibitem[12]{Be00a} V.~P.~Belavkin,
\emph{Quantum Stochastics, Dirac Boundary Value Problem, and The Ultra Relativistic Limit},
Rep. on Math. Phys, 2000, 46:359--382.

\bibitem[13]{Mal78} P. Malliavin,
\emph{Stochastic Calculus of Variations and Hypoelliptic Operators},
Proceedings of the International Symposium on Stochastic Differential Equantions, New York, 1978, 195--293.

\bibitem[14]{Sko75} A. V. Skorokhod,
\emph{On A Generalization of The Stochastic Integral},
Vol.~20, Theory Probab. Appl., 1975, 219--233.

\bibitem[15]{Be91} V. P. Belavkin,
\emph{A Quantum Non-Adapted It\^o Formula and Stochastic Analysis in Fock Scale},
J. Funct. Anal., 1991, 102:414--447.

\bibitem[16]{Be02} V.~P.~Belavkin,
\emph{Quantum Causality, Decoherence, Trajectories and Information}, Rept. on Prog. Phys., 2002, 65:353-420.

\bibitem[17]{Be94} V.~P.~Belavkin,
\emph{Non-Demolition Principle of Quantum Measurement Theory}, Foundations of Physics, 1994, 24:685-714.

\bibitem[18]{Be93} V.~P. ~Belavkin,
\emph{A Dynamical Theory of Quantum Measurement and Spontaneous Measurement},
Russian J.~Math. Phys, 1995, 3:3--24.

\bibitem[19]{BelM96} V.~P. ~Belavkin and O. Melsheimer,
\emph{A Stochastic Hamiltonian Approach for Quantum Jumps, Spontaneous Localizations, and Continuous Trajectories}, Quantum and Semi-Classical Optics, 1996, 8:167--187.

\bibitem[20]{Gui72} A. Guichardet,
\emph{Symmetric Hilbert Spaces and Related Topics}, Springer-Verlag, 1972.

\bibitem[21]{LinM88a} J. M. Lindsay and H. Maassen,
\emph{The Stochastic Calculus of Bose Noise},
Preprint, 1988.

\bibitem[22]{Mey87} P. A. Meyer,
\emph{\'El\'ements de Probabilit\'es Quantiques},
Vol.~1247, Lecture Notes in Mathematics, 1987, 33--78.
\end{thebibliography}
\end{document}